\newcommand{\fullv}[2]{#1}  
\newcommand{\delete}[1]{} 
\newcommand{\cnot}{{\sf CNOT}}
\newcommand{\hgate}{{\sf H}}
\newcommand{\pgate}{{\sf P}}
\newcommand{\rgate}{{\sf R}}
\newcommand{\ad}[1]{\ensuremath{\tilde{\mathscr{#1}}}}
\newcommand{\ada}{\ensuremath{\ad{A}}}
\newcommand{\adb}{\ensuremath{\ad{B}}}
\newcommand{\adbt}[1]{\ensuremath{\ad{B}_t}}
\newcommand{\xprot}[1]{\ensuremath{{\sf #1}}}
\newcommand{\aprot}{\ensuremath{\mathscr{A}}}
\newcommand{\aprots}{\ensuremath{\aprot^*}}
\newcommand{\bprot}{\ensuremath{\mathscr{B}}}
\newcommand{\bprots}{\ensuremath{\bprot^*}}
\newcommand{\phiout}{\ensuremath{\ket{\phi_{{\tt out}}}}}
\newcommand{\phiin}{\ensuremath{\ket{\phi_{{\tt in}}}}}
\newcommand{\compose}{\circledast}
\newcommand{\pu}[1]{\ensuremath{\Pi_{#1}}}
\newcommand{\puq}[1]{\ensuremath{\pu{#1}^{\qop{O}}}}
\newcommand{\view}[3]{\ensuremath{{\nu}_{#3}(#1,#2)}}
\newcommand{\simul}[1]{\ensuremath{\mathscr{S}_{#1}}}
\newcommand{\simulg}[1]{\ensuremath{\mathscr{S}^*_{#1}}}
\newcommand{\simulator}[1]{\ensuremath{\mathscr{S}}(#1)}
\newcommand{\qw}[1][-1]{\ar @{-} [0,#1]}
\newcommand{\qwx}[1][-1]{\ar @{-} [#1,0]}
\newcommand{\cw}[1][-1]{\ar @{=} [0,#1]}
\newcommand{\cwx}[1][-1]{\ar @{=} [#1,0]}
\newcommand{\gate}[1]{*{\xy *+<.6em>{#1};p\save+LU;+RU **\dir{-}\restore\save+RU;+RD **\dir{-}\restore\save+RD;+LD **\dir{-}\restore\POS+LD;+LU **\dir{-}\endxy} \qw}
\newcommand{\control}{*-=-{\bullet}}
\newcommand{\ctrl}[1]{\control \qwx[#1] \qw}
\newcommand{\targ}{*{\xy{<0em,0em>*{} \ar @{ - } +<.4em,0em> \ar @{ - } -<.4em,0em> \ar @{ - } +<0em,.4em> \ar @{ - } -<0em,.4em>},*+<.8em>\frm{o}\endxy} \qw}
\newcommand{\multigate}[2]{*+<1em,.9em>{\hphantom{#2}} \qw \POS[0,0].[#1,0];p !C *{#2},p \save+LU;+RU **\dir{-}\restore\save+RU;+RD **\dir{-}\restore\save+RD;+LD **\dir{-}\restore\save+LD;+LU **\dir{-}\restore}
\newcommand{\ghost}[1]{*+<1em,.9em>{\hphantom{#1}} \qw}
\newcommand{\gategroup}[6]{\POS"#1,#2"."#3,#2"."#1,#4"."#3,#4"!C*+<#5>\frm{#6}}
\newcommand{\rstick}[1]{*!L!<-.5em,0em>=<0em>{#1}}
\newcommand{\lstick}[1]{*!R!<.5em,0em>=<0em>{#1}}
\newcommand{\ustick}[1]{*!D!<0em,-.5em>=<0em>{#1}}
\newcommand{\dstick}[1]{*!U!<0em,.5em>=<0em>{#1}}
\newcommand{\Qcircuit}{\xymatrix @*=<0em>}
\newcommand{\twowire}[1]{\ar@{}[d]_-{\txt{#1}\Bigg\{}}
\newcommand{\twowirer}[1]{\ar@{}[d]^-{\Bigg\}\txt{#1}}}
\newcommand{\wire}[1]{{\ensuremath{\tt #1}}}
\newcommand{\reg}[1]{\ensuremath{\tt #1}}
\newcommand{\hreg}[1]{\ensuremath{\mathcal{#1}}}
\newcommand{\hdim}[1]{\ensuremath{{\mathcal{H}}_{#1}}}
\newcommand{\ao}[1]{\ensuremath{\hreg{A}^{\qop{O}}_{#1}}}
\newcommand{\bo}[1]{\ensuremath{\hreg{B}^{\qop{O}}_{#1}}}
\newcommand{\nl}{{\sc and-box}}
\newcommand{\pup}[1]{\ensuremath{P_{U}}}
\newcommand{\ug}{\ensuremath{{\cal UG}}}
\newcommand{\pos}[1]{\ensuremath{\mathrm{Pos}(#1)}}
\newcommand{\dens}[1]{\ensuremath{\mathrm{D}(#1)}}
\newcommand{\unit}[1]{\ensuremath{\mathrm{U}(#1)}}
\newcommand{\linop}[1]{\ensuremath{\mathrm{L}(#1)}}
\newcommand{\lin}[2]{\ensuremath{\linop{#1,#2}}}
\newcommand{\alice}[1]{\ensuremath{\mathscr{A}_{#1}}}
\newcommand{\bob}[1]{\ensuremath{\mathscr{B}_{#1}}}
\newcommand{\qop}[1]{\ensuremath{\mathscr{#1}}}
\newcommand{\zo}{\{0,1\}}
\newcommand{\ran}{\stackrel{\$}{\in}}
\newcommand{\hstatei}[2]{\ensuremath{\rho_{#1}(#2)}}
\newcommand{\astatei}[3]{\ensuremath{\tilde{\rho}_{#1}(#2,#3)}}
\newcommand{\trans}[1]{\ensuremath{\mathscr{T}_{#1}}}
\newcommand{\swap}{{\ensuremath{\sf SWAP}}}
\newcommand{\keya}[2]{\ensuremath{K_{\aprots}^{#2}(\wire{#1})}}
\newcommand{\keyb}[2]{\ensuremath{K_{\bprots}^{#2}(\wire{#1})}}
\newcommand{\kxa}[2]{\ensuremath{X_{\aprots}^{#2}(\wire{#1})}}
\newcommand{\kza}[2]{\ensuremath{Z_{\aprots}^{#2}(\wire{#1})}}
\newcommand{\kxb}[2]{\ensuremath{X_{\bprots}^{#2}(\wire{#1})}}
\newcommand{\kzb}[2]{\ensuremath{Z_{\bprots}^{#2}(\wire{#1})}}
\DeclareMathOperator{\Span}{span}
\title{Secure Two-Party Quantum Evaluation of Unitaries Against
Specious Adversaries}
\author{ Fr\'ed\'eric Dupuis\inst{1}\thanks{Supported by Canada's NSERC Postdoctoral Fellowship Program.} \and Jesper Buus Nielsen\inst{2} 
\and Louis Salvail\inst{3}\thanks{Supported by Canada's NSERC discovery grant, 
MITACS, and the  QuantumWorks networks(NSERC).}
}
\institute{ 
 Institute for Theoretical Physics, ETH Zurich, Switzerland\\
  \email{dupuis@phys.ethz.ch} \and
 DAIMI, Aarhus University, Denmark\\
  \email{jbn@cs.au.dk} \and
  Universit\'e de Montr\'eal (DIRO), QC, Canada\\
  \email{salvail@iro.umontreal.ca}}
\begin{document}


\maketitle

\begin{abstract}
  \fullv{We show that any two-party quantum computation, specified by a
  unitary which simultaneously acts on the registers of both parties,
  can be securely implemented against a quantum version of classical
  semi-honest adversaries that we call specious.  

  We first show that no statistically private protocol exists for
  swapping qubits against specious adversaries.  The swap
  functionality is modeled by a unitary transform that is not
  sufficient for universal quantum computation. It means that
  universality is not required in order to obtain impossibility proofs
  in our model. However, the swap transform can easily be implemented
  privately provided a classical bit commitment scheme.
   
  We provide a simple protocol for the evaluation of any unitary
  transform represented by a circuit made out of gates in some
  standard universal set of quantum gates.  All gates except one can
  be implemented securely provided one call to swap made available as
  an ideal functionality.  For each appearance of the remaining gate
  in the circuit, one call to a classical AND-box is required for
  privacy. The AND-box can easily be constructed from oblivious
  transfer. It follows that oblivious transfer is universal for
  private evaluations of unitaries as well as for classical circuits.

  Unlike the ideal swap, AND-boxes are classical primitives and cannot
  be represented by unitary transforms.  It follows that, to some
  extent, this remaining gate is the hard one, like the
  \textsc{and} gate for classical two-party computation.}{
 We describe how any two-party quantum computation, specified by a
  unitary which simultaneously acts on the registers of both parties,
  can be privately implemented against a quantum version of classical
  semi-honest adversaries that we call specious.  Our construction requires
  two ideal functionalities to garantee privacy: a private SWAP between registers held by the two parties
  and a classical private AND-box equivalent to oblivious transfer.  If the unitary to be evaluated
  is in the Clifford group then only one call to SWAP is required for privacy. On the other hand,
  any unitary not in the Clifford requires one call to an AND-box per \rgate{}-gate in the circuit.  
  Since SWAP is itself in the Clifford group, this functionality is universal for the private evaluation
  of any unitary in that group. 
  SWAP can be built from a classical bit commitment scheme or an AND-box but 
  an AND-box cannot be constructed from SWAP.
  It follows that unitaries in the Clifford group are to some extent the easy ones.
  We also show that SWAP cannot be implemented privately in the bare
  model.  
  }
 \end{abstract}



\section{Introduction}\label{intro}

In this paper, we address the problem of privately evaluating some
unitary transform $U$ upon a joint quantum input state held by two
parties.  Since unitaries model what quantum algorithms are
implementing, we can see this problem as a natural extension of secure
two-party evaluation of functions to the quantum realm.  Suppose that
a state $\phiin\in \hreg{A}\otimes\hreg{B}$ is the initial shared
state where Alice holds register \hreg{A} and Bob holds register
\hreg{B}.  Let $U\in \unit{\hreg{A}\otimes \hreg{B}}$ be some unitary
transform acting upon \hreg{A} and \hreg{B}. What cryptographic
assumptions are needed for a private evaluation of $\phiout = U\phiin$
where {\em private} means that each player learns no more than in the
ideal situation depicted in Fig.~\ref{ideal}? Of course, answers to
this question  depend upon the adversary we are
willing to tolerate.
 
 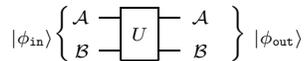
\begin{wrapfigure}{r}{0.333\textwidth}
  \vspace{-20pt}
 \begin{center}
  \mbox{\scriptsize \Qcircuit @C=1em @R=.7em 
  {\twowire{{\phiin}} &  
    \lstick{\hreg{A}} & \multigate{1}{U} & \qw & \hreg{A} &
  \twowirer{  {\ket{\phi_{\tt out}}}}\\
    &  \lstick{\hreg{B}}  & \ghost{U}        & \qw &\hreg{B} &
    }}
\end{center}
\small{\caption{Ideal Functionality for unitary $U$.}
\label{ideal}}
  \vspace{-10pt}
\end{wrapfigure}

 In \cite{SSS09}, it was shown that unitaries cannot be used to
 implement classical cryptographic primitives. Any non-trivial
 primitive implemented by unitaries will necessarily leak information
 toward one party. Moreover, this leakage is available to a weak class
 of adversaries that can be interpreted as the quantum version of
 classical semi-honest adversaries. It follows that quantum two-party
 computation of unitaries cannot be used to implement classical
 cryptographic primitives. This opens the possibility that the
 cryptographic assumptions needed for private evaluations of unitaries
 are weaker than for their classical counterpart. So, what classical
 cryptographic assumptions, if any, are required to achieve privacy in
 our setting? Are there unitaries more difficult to evaluate
 privately than others?
 
 In this work, we answer these questions against a class of weak
 quantum adversaries, called specious, related to classical
 semi-honest adversaries.  We say that a quantum adversary is specious
 if at any step during the execution of a protocol, it can provide a
 judge with some state that, when joined with the state held by the
 honest player, will be indistinguishable from a honest
 interaction. In other words, an adversary is specious if it can pass
 an audit with success at any step.  Most known impossibility proofs
 in quantum cryptography apply when the adversary is restricted to be
 specious. Definitions similar to ours have been proposed for the
 quantum setting and usually named semi-honest. However, translating
 our definition to the classical setting produces a strictly stronger
 class of adversaries than semi-honest\footnote{As an example, assume
 there exist public key cryptosystems where you can sample a public
 key without learning the secret key. Then this is a semi-honest
 oblivious transform: The receiver, with choice bit $c$, samples
 $pk_c$ in the normal way and learns its corresponding secret key and
 samples $pk_{1-c}$ without learning its secret key. He sends
 $(pk_0,pk_1)$.  Then the sender sends $(E_{pk_0}(m_0),E_{pk_1}(m_1))$
 and the receiver decrypts $E_{pk_c}(m_c)$.  This is not secure
 against a specious adversary who can sample $pk_{1-c}$ along with its
 secret key $sk_{1-c}$ and then delete $sk_{1-c}$ before the
 audit.}\fullv{, as demonstrated in Appendix~\ref{app:spec}}{} which
 justifies not adopting the term \emph{semi-honest}. We propose the
 name \emph{specious} as the core of the definition is that the
 adversary must appear to act honestly.

\paragraph{Contributions.}
First, we define two-party protocols for the evaluation of unitaries
having access to oracle calls. This allows us to consider protocols
with security relying on some ideal functionalities in order to be
private. We then say that a protocol is in the {\em bare model} if it
does not involve any call to an ideal functionality.  We then formally
define what we mean by specious adversaries.  Privacy is then defined
via simulation. We say that a protocol for the two-party evaluation of
unitary $U$ is private against specious adversaries if, for any joint
input state and at any step of the protocol, there exists a simulator
that can reproduce the adversary's view having only access to its own
part of the joint input state. Quantum simulation must rely on a
family of simulators for the view of the adversary rather than one
because quantum information does not accumulate but can vanish as the
protocol evolves.  For instance, consider the trivial protocol that
let Alice send her input register to Bob so that he can apply locally
$\phiout=U\phiin$ before returning her register. The final state of
such a protocol is certainly private, as Bob cannot clone Alice's
input and keep a copy, yet at some point Bob had access to Alice's
input thus violating privacy. No simulator can possibly reproduce
Bob's state after he received Alice's register without having access
to her input state.

Second, we show that no  protocol 
can be shown
statistically private against
specious adversaries  
in the bare model for a very simple unitary: the swap gate. As
the name suggests, the swap gate simply permutes Alice's and Bob's
input states.  Intuitively, the reason why this gate is impossible is
that at some point during the execution of such protocol, one party
that still has almost all its own input state receives a non-negligible
amount of information (in the quantum  sense) about the other party's input state.
At this point,
no simulator can possibly re-produce the complete
state held by the receiving party since a call to the ideal functionality 
only provides access to
the other party's state while no call to the ideal functionality only
provides information about that party's own input. 
Therefore, any simulator cannot re-produce a state that contains
information about the input states of both parties. 
It follows that
cryptographic assumptions are needed for the private evaluation of
unitaries against specious adversaries.  On the other hand, a
classical bit commitment is sufficient to implement the swap privately
in our model.

Finally, we give a very simple protocol for the private evaluation of
any unitary based on ideas introduced by \cite{GC99a,GC99b} in the
context of fault tolerant quantum computation. Our construction is
similar to Yao's original construction in the classical
world\cite{Yao86,kilian88}. We represent any unitary $U$ by a quantum
circuit made out of gates taken from the universal set $\ug=\{X,Y,Z,
\cnot, \hgate, \pgate, \rgate\}$ \cite{NC00}.  The protocol evaluates
each gate of the circuit upon shared encrypted input where the
encryption uses the Pauli operators $\{X,Y,Z\}$ together with the
identity.  In addition to the Pauli gates $X,Y$, and $Z$, gates \cnot,
\hgate, and \pgate\ can easily be performed over encrypted states
without losing the ability to decrypt.  Gates of that kind belong to
what is called the {\em Clifford group}.  The \cnot\ gate is the only
gate in \ug\ acting upon more than one qubit while the \rgate-gate is
the only one that does not belong to the Clifford group. In order to
evaluate it over an encrypted state while preserving the ability to
decrypt, we need to rely upon a classical ideal functionality
computing securely an additive sharing for the AND of Alice's and
Bob's input bits.  We call this ideal functionality an AND-box.  Upon
input $x\in\{0,1\}$ for Alice and $y\in\{0,1\}$ for Bob, it produces
$a\in_R \{0,1\}$ and $b\in \{0,1\}$ to Alice and Bob respectively such
that $a\oplus b = x\wedge y$.  An AND-box can be obtained from any
flavor of oblivious transfer and is defined the same way than an
NL-box\cite{PR94,PR97} without the property that its output can be
obtained before the input of the other player has been provided to the
box (i.e., NL-boxes are non-signaling).  The \emph{equivalence}
between AND-boxes, NL-boxes, and oblivious transfer is discussed in
\cite{WW05b}. At the end of the protocol, each part of the shared key
allowing to decrypt the output must be exchanged in a fair way. For
this task, Alice and Bob rely upon an ideal swap functionality called
\swap.  The result is that any $U$ can be evaluated privately upon any
input provided Alice and Bob have access to one AND-box per
\rgate-gate and one call to the an ideal swap. If the circuit happens
to have only gates in the Clifford group then only one call to an
ideal swap is required for privacy. In other words, \swap\ is
universal for the private evaluation of circuits in the Clifford group
(i.e., those circuits having no \rgate-gate) and itself belongs to
that group (\swap\ is not a classical primitive). To some extent,
circuits in the Clifford group are the {\em easy} ones.  Privacy for
circuits containing \rgate-gates however needs a classical
cryptographic primitive to be evaluated privately by our protocol. It
means that AND-boxes are universal for the private evaluation of any
circuit against specious adversaries. We don't know whether there
exist some unitary transforms that are universal for the private
evaluation of any unitary against specious adversaries.

\paragraph{Previous works.}
All impossibility results in quantum cryptography we are aware of
apply to classical primitives. In fact, the impossibility proofs
usually rely upon the fact that an adversary with a seemingly honest
behavior can force the implementation of classical primitives to
behave quantumly. The result being that implemented that way, the
primitive must leak information to the adversary. This is the spirit
behind the impossibility of implementing oblivious transfer securely
using quantum communication\cite{Lo97}. In that same paper the
impossibility of any one-sided private evaluation of non-trivial
primitives was shown. All these results can be seen as generalizations
of the impossibility of bit commitment schemes based on quantum
communication\cite{LC97,Mayers97}.  The most general impossibility
result we are aware of applies to any {\em non-trivial} two-party
classical function\cite{SSS09}.  It states that it suffices for the
adversary to {\em purify} its actions in order for the quantum
primitive to leak information. An adversary purifying its actions is
specious as defined above. None of these impossibility proofs apply to
quantum primitives characterized by some unitary transform applied to
joint quantum inputs. Blind quantum computation is a primitive
that shows similarities to ours. In \cite{BFK09}, a protocol allowing
a client to get its input to a quantum circuit evaluated blindly has
been proposed. The security of their scheme is unconditional 
while in our setting almost no unitary allows for  unconditional
privacy.

An unpublished work of Smith\cite{Smith06} 
shows how one can devise a private protocol
for the evaluation of any unitary 
that seems to remain private against all quantum adversaries. 
However, the
techniques used require strong cryptographic assumptions like
homomorphic encryption schemes, zero-knowledge and witness 
indistinguishable proof systems. 
The construction is in the 
spirit of protocols for multiparty quantum computation\cite{BCGHS06,CGS02}
and fault tolerant quantum circuits\cite{Shor96,AB97}.
Although our protocol only
guarantees privacy  against specious adversaries, it 
is obtained using much weaker cryptographic assumptions.

\fullv{\paragraph{Organization.}
We introduce protocols for the two-party evaluation of unitaries in 
Sect.~\ref{twoparty}. In Sect.~\ref{semihonest}, we define 
the class of specious
quantum adversaries and  in Sect.~\ref{privacy}, we 
define privacy.
  We show in Sect.~\ref{impossiblesect} that no private
protocol  
 exists for swap. The description of our protocol follows in
Sect.~\ref{prot} and the proof of privacy is in
Appendix~\ref{privacyproof}.}{}

\section{Preliminaries}
\label{sect:preliminaries}

The $N$-dimensional complex Euclidean space (i.e., Hilbert space) will
be denoted by \hdim{N}. We denote quantum registers using calligraphic
typeset \hreg{A}. As usual, $\hreg{A} \otimes \hreg{B}$ denotes the
space of two such quantum registers.  We write $\hreg{A}\approx
\hreg{B}$ when $\hreg{A}$ and $\hreg{B}$ are such that
$\dim{(\hreg{A})}=\dim{(\hreg{B})}$. A register $\hreg{A}$ can undergo
transformations as a function of time; we denote by $\hreg{A}_i$ the
state of space $\hreg{A}$ at time $i$.  When a quantum computation is
viewed as a circuit accepting input in \hreg{A}, we denote all wires
in the circuit by $\wire{w}\in \hreg{A}$.  If the circuit accepts
input in $\hreg{A}\otimes\hreg{B}$ then the set of all wires is
denoted $\wire{w}\in \hreg{A}\cup \hreg{B}$.

The set of all linear mappings from \hreg{A}\ to \hreg{B}\ is denoted
by \lin{\hreg{A}}{\hreg{B}}\ while \linop{\hreg{A}}\ stands for
\lin{\hreg{A}}{\hreg{A}}. To simplify notation, for $\rho\in
\linop{\hreg{A}}$ and $M\in \lin{\hreg{A}}{\hreg{B}}$ we write $M\cdot
\rho$ for $M\rho M^{\dagger}$.
\fullv{

}{}
We denote by $\pos{\hreg{A}}$ the set of positive semi-definite
operators in \hreg{A}. The set of positive semi-definite operators
with trace $1$ acting on \hreg{A}\ is denoted $\dens{\hreg{A}}$;
$\dens{\hreg{A}}$ is the set of all possible quantum states for
register $\reg{A}$.  An operator $A\in\lin{\hreg{A}}{\hreg{B}}$ is
called a {\em linear isometry} if $A^{\dagger}A=\Idt_{\hreg{A}}$.  The
set of unitary operators (i.e., linear isometries with
$\hreg{B}=\hreg{A}$) acting in \hreg{A}\ is denoted by
$\unit{\hreg{A}}$.  The identity operator in \hreg{A}\ is denoted
$\Idt_{\hreg{A}}$ and the completely mixed state in \dens{\hreg{A}}\
is denoted by $\id_{\hreg{A}}$.  For any positive integer $N>0$,
$\Idt_{N}$ and $\id_N$ denote the identity operator respectively the
completely mixed state in \hdim{N}. When the context requires, a pure
state $\ket{\psi}\in \hreg{AB}$ will be written
$\ket{\psi}^{\hreg{AB}}$ to make explicit the registers in which it is
stored.

A linear mapping $\Phi:\linop{\hreg{A}} \mapsto \linop{\hreg{B}}$ is
called a {\em super-operator} since it belongs to
$\lin{\linop{\hreg{A}}}{\linop{\hreg{B}}}$.  $\Phi$ is said to be {\em
positive} if $\Phi(A)\in\pos{\hreg{B}}$ for all $A\in\pos{\hreg{A}}$.
The super-operator $\Phi$ is said to be {\em completely positive} if
$\Phi \otimes \Idt_{\linop{\hreg{Z}}}$ is positive for every choice of
the Hilbert space $\hreg{Z}$.  A super-operator $\Phi$ can be
physically realized or is {\em admissible} if it is completely
positive and preserves the trace: $\trace{\Phi(A)}=\trace{A}$ for all
$A\in \linop{\hreg{A}}$. We call such a super-operator a {\em quantum
operation}. \fullv{Any quantum operation $\Phi:\linop{\hreg{A}} \mapsto
\linop{\hreg{B}}$ can be written in its Kraus form
$\{E_j\}_{j=1}^{\dim{(\hreg{A})}\cdot\dim{(\hreg{B})}}$ where $E_j\in
\lin{\hreg{A}}{\hreg{B}}$ for every $j$ such that $\Phi(\rho) = \sum_j
E_j \rho E_j^{\dagger},$ for any $\rho\in\pos{\hreg{A}}$ and where
$\sum_j E^{\dagger}_j E_j = \Idt_{\hreg{B}}$.}{} Another way to represent
any quantum operation is through a linear isometry $W\in
\lin{\hreg{A}}{\hreg{B}\otimes\hreg{Z}}$ such that $\Phi(\rho) =
\trace[\hreg{Z}]{W\cdot \rho},$ for some extra space
$\hreg{Z}$. Any such isometry $W$ can be implemented by a physical
process as long as the resource to implement\fullv{ the space}{} \hreg{Z}\
is available. This is just a unitary transform in
$\unit{\hreg{A}\otimes\hreg{Z}}$ where the system in \hreg{Z}\ is
initially in known state $\ket{0_{\hreg{Z}}}$.

\fullv{\begin{wrapfigure}{r}{0.42\textwidth}
\vspace{-10pt}
\centering \mbox{\scriptsize
  \Qcircuit @C=1em @R=.7em {
    \lstick{\ket{\psi}} & \multigate{1}{Bell} & \ustick{x} \cw & \cw &
       \control\cw & \\
    \twowire{\ket{\Psi_{0,0}}}& \ghost{Bell}        & \ustick{z} \cw &
       \control\cw & \cwx & \\
       & \qw  & \qw & \gate{Z} \cwx & \gate{X} \cwx &
          \rstick{\ket{\psi}}\qw \gategroup{1}{4}{3}{5}{1em}{--}
  }}
\small{\caption{The teleportation circuit}
\label{fig:teleportation}}
\vspace{-20pt}
\end{wrapfigure}
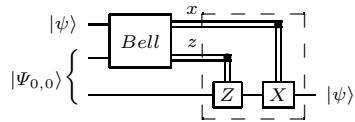}{}

For two states $\rho_0,\rho_1 \in \dens{\hreg{A}}$, we denote by
$\Delta(\rho_0,\rho_1)$ the trace norm distance between $\rho_0$ and
$\rho_1$: $\Delta(\rho_0,\rho_1) := \frac{1}{2} \|
\rho_0-\rho_1\|$. If $\Delta(\rho_0,\rho_1)\leq \varepsilon$ then any
quantum process applied to $\rho_0$ behaves exactly as for $\rho_1$
except with probability at most $\varepsilon$~\cite{RK05}.

\fullv{We let $C_1$ be the Pauli group (the set of tensor products of the
three Pauli matrices $X,$ $Y,$ and $Z$, see Appendix~\ref{commut}, and
the $2 \times 2$ identity matrix $\Idt_2$). Furthermore, $C_{i+1}$ is
then defined recursively for $i\geq 1$ as $C_{i+1} \equiv \{ U | U C_1
U^\dag \in C_{i} \}$, where $C_2$ is called the Clifford group.

}
{ Let $X$,$Y$, and $Z$ be the three non-trivial one-qubit Pauli operators.
}
The Bell measurement is a complete orthogonal measurement on two
qubits made out of the measurement operators\fullv{ $\{\proj{\Psi_{0,0}},
\proj{\Psi_{0,1}}, \proj{\Psi_{1,0}}, \proj{\Psi_{1,1}}\}$ where
$\ket{\Psi_{0,0}}:=
\frac{1}{\sqrt{2}}(\ket{00}+\ket{11})$, 
$\ket{\Psi_{0,1}}:= 
\frac{1}{\sqrt{2}}(\ket{00}-\ket{11})$,
$\ket{\Psi_{1,0}}:= 
\frac{1}{\sqrt{2}}(\ket{01}+\ket{10})$, and 
$\ket{\Psi_{1,1}}:= 
\frac{1}{\sqrt{2}}(\ket{01}-\ket{10})$.
The outcome $\ket{\Psi_{x,z}}$ of the Bell measurement is identified by the 
two classical bits $(x,z) \in \{0,1\}^2$.}{
$\{\proj{\Psi_{x,y}}\}_{x,y\in\{0,1\}}$
where $\ket{\Psi_{x,y}} := \frac{1}{\sqrt{2}}(\ket{0, x} +  (-1)^y \ket{1,\overline{x}})$.
We say that the outcome  of a Bell measurement is $(x,y)\in\{0,1\}^2$ if
$\proj{\Psi_{x,y}}$ has been observed.
}
The quantum one-time-pad is a perfectly secure encryption of quantum
states\cite{AMTW00}. It encrypts a qubit $\ket{\psi}$ as $X^x
Z^z\ket{\psi}$, where the key is two classical bits, $(x,z) \in
\{0,1\}^2$ and $X^0 Z^0 = \Idt$, $X^0 Z^1 = Z$, $X^1 Z^0 = X$ and $X^1
Z^1 = Y$ are the Pauli operators.\fullv{ Quantum
teleportation\cite{BBCJPW93} can be used to implement the quantum
one-time-pad. Consider the teleportation circuit in
Fig.~\ref{fig:teleportation}. If the state to encrypt is $\ket{\psi}$
then the state of the lower wire  before entering the
out-dashed box is the encryption of $\ket{\psi}$ under a uniformly
random key produced by the Bell measurement. The two gates inside the
dashed-box is the decryption circuit.}{}

\newcommand{\rhoin}{\ensuremath{\rho_{\mathrm{in}}}}

\subsection{Modeling two-party strategies}\label{twoparty}

Consider an interactive two-party strategy $\Pi^{\qop{O}}$ between
parties $\aprot$ and $\bprot$ and oracle calls \qop{O}.
$\Pi^{\qop{O}}$ can be modeled by a sequence of quantum operations for
each player together with some oracle calls also modeled by quantum
operations.  Each quantum operation in the sequence corresponds to the
action of one party at a certain step of the strategy.  The following
definition is a straightforward adaptation of $n$-turn interactive
quantum strategies as described in \cite{GW07}.  The main difference
is that here, we provide a joint input state to both parties and that
quantum transmissions taking place during the execution is modeled by
a quantum operation; one that is moving a state on one party's side to the
other party.  

\begin{definition}
A {\em $n$--step two party strategy with oracle calls} denoted
$\Pi^{\qop{O}}=(\aprot,\bprot,\qop{O},n)$ consists of:
\begin{enumerate}
\item 

  input spaces ${\hreg{A}}_0$ and ${\hreg{B}}_0$ for parties \aprot\ and
  \bprot\ respectively,

\item 

  memory spaces ${\hreg{A}}_1,\ldots,{\hreg{A}}_n$ and
  ${\hreg{B}}_1,\ldots,{\hreg{B}}_n$ for \aprot\ and \bprot\
  respectively,

\item 

  an $n$-tuple of quantum operations $(\alice{1},\ldots,\alice{n})$
  for \aprot, $\alice{i}: \linop{{\hreg{A}}_{i-1}} \mapsto
  \linop{{\hreg{A}}_{i}}$, $(1\leq i \leq n)$,

\item 

  an $n$-tuple of quantum operations $(\bob{1},\ldots,\bob{n})$ for
  \bprot, $\bob{i}: \linop{{\hreg{B}}_{i-1}} \mapsto
  \linop{{\hreg{B}}_{i}}$, $(1\leq i \leq n)$,

\item 

  memory spaces $\hreg{A}_1,\ldots, \hreg{A}_n$ and
  $\hreg{B}_1,\ldots,\hreg{B}_n$ can be written as $\hreg{A}_i
  =\ao{i}\otimes \hreg{A}'_i$ and $\hreg{B}_i = \bo{i}\otimes
  \hreg{B}'_i$, $(1\leq i \leq n)$, and $\qop{O} =
  (\qop{O}_1,\qop{O}_2,\ldots, \qop{O}_n)$ is an $n$-tuple of quantum
  operations: $\qop{O}_i : \linop{\ao{i}\otimes\bo{i}} \mapsto
  \linop{\ao{i}\otimes\bo{i}}$, $(1\leq i \leq n)$.
   
\end{enumerate}
If $\Pi=(\aprot,\bprot,n)$ is a $n$-turn two-party protocol then the
final state of the interaction upon input state $\rhoin\in
\dens{{\hreg{A}}_0\otimes {\hreg{B}}_0 \otimes \hreg{R}}$, where
$\hreg{R}$ is a system of dimension $\dim \hreg{R} = \dim \hreg{A}_0
\dim \hreg{B}_0$, is:
\[ 
\begin{split}
[\aprot\compose\bprot](\rhoin) := & (\Idt_{\linop{\hreg{A}'_n \otimes \hreg{B}'_n\otimes \hreg{R}}}\otimes\qop{O}_n)(\alice{n}\otimes\bob{n} \otimes  \Idt_\hreg{R})\\
&\ldots(\Idt_{\linop{\hreg{A}'_1 \otimes \hreg{B}'_1\otimes \hreg{R}}}\otimes\qop{O}_1)(\alice{1}\otimes\bob{1} \otimes  \Idt_\hreg{R})(\rhoin) \enspace.
\end{split}
\]
\end{definition}
Step $i$ of the strategy corresponds to the actions of $\alice{i}$ and
$\bob{i}$ followed by the oracle call $\qop{O}_i$. 


Note that we consider input states defined on the input systems
together with a reference system $\hreg{R}$; this allows us to show
the correctness and privacy of the protocol not only for pure inputs,
but also for inputs that are entangled with a third party. This is the
most general case allowed by quantum mechanics.

A two-party strategy is therefore defined by quantum operation tuples
$(\alice{1},\ldots,\alice{n})$, $(\bob{1},\ldots,\bob{n})$, and
$(\qop{O}_1,\ldots,\qop{O}_n)$. These operations also
define working spaces ${\hreg{A}}_0,\ldots,{\hreg{A}}_n, 
{\hreg{B}}_0,\ldots,{\hreg{B}}_n$ together with the input-output spaces to the oracle calls
 $\hreg{A}_i^{\qop{O}}$ and $\hreg{B}_i^{\qop{O}}$ for $1\leq i \leq n$.
%

A \emph{communication oracle} from Alice to Bob is modeled by having
$\ao{i} \approx \bo{i}$ and letting $\qop{O}_i$ move the state in
$\ao{i}$ to $\bo{i}$ and erase $\ao{i}$. Similarly for
communication in the other direction. We define a \emph{bare model}
protocol to be one which only uses communication oracles.

\section{Specious Quantum Adversaries}\label{semihonest}

\subsection{Protocols for  two-party evaluation}\label{protdef}

Let us consider two-party protocols for the quantum evaluation of
unitary transform $U\in \unit{{\hreg{A}}_0\otimes{\hreg{B}}_0}$
between parties \aprot\ and \bprot\ upon joint input state
$\rhoin\in\dens{{\hreg{A}}_0\otimes{\hreg{B}}_0 \otimes \hreg{R}}$\fullv{.
We define these protocols as two-party interactive strategies with
placeholder for the output as follows:}{:}

\begin{definition}\label{twopartyprot}
  A \emph{two-party protocol $\puq{U}=(\aprot,\bprot,\qop{O},n)$ for
  $U\in \unit{{\hreg{A}}_0\otimes{\hreg{B}}_0}$} is an $n$--step
  two-party strategy with oracle calls, where $\hreg{A}_{n}\approx
  \hreg{A}_0$ and $\hreg{B}_n \approx \hreg{B}_0$. It is said to be
  \emph{$\varepsilon$--correct} if
  \[ \Delta\left([\aprot\compose\bprot](\rhoin), 
     (U \otimes \Idt_{\hreg{R}})\cdot\rhoin\right) 
     \leq \varepsilon\,\,\,\,\,\, \text{ for all } \rhoin\in
     \dens{{\hreg{A}}_0\otimes{\hreg{B}}_0 \otimes \hreg{R}}\enspace .
     \]
  We denote by $\pu{U}$ a two-party protocol in the {\em bare model}
  where, without loss of generality, we assume that $\qop{O}_{2i+1}$
  ($0\leq i \leq \lfloor \frac{n}{2} \rfloor$) implements a
  communication channel from \aprot\ to \bprot\ and $\qop{O}_{2i}$
  ($1\leq i \leq \lfloor \frac{n}{2} \rfloor$) implements a
  communication channel from \bprot\ to \aprot. Communication oracles
  are said to be {\em trivial}.
\end{definition}
In other words, a two-party protocol $\puq{U}$ for unitary $U$ is a
two-party interactive strategy where, at the end, the output of the
computation is stored in the memory of the players. $\puq{U}$ is
correct if, when restricted to the output registers (and $\hreg{R}$),
the final quantum state shared by \aprot\ and \bprot\ is $(U \otimes
\Idt_{\hreg{R}})\cdot \rhoin$.

As it will become clear when we discuss privacy in
Sect.~\ref{privacy}, we need to consider the joint state at any step
during the evolution of the protocol\fullv{.  We define,}{:}
\begin{align}
  \hstatei{1}{\rhoin} & := (\Idt_{\linop{\hreg{A}'_1\otimes
      \hreg{B}'_1\otimes\hreg{R}}}\otimes \qop{O}_1) (\alice{1}\otimes
      \bob{1}\otimes \Idt_{\linop{\hreg{R}}})(\rhoin), \nonumber \\
  \hstatei{i+1}{\rhoin} &:= (\Idt_{\linop{\hreg{B}'_{i+1} \otimes
       \hreg{A}'_{i+1}\otimes\hreg{R}}}\otimes\qop{O}_{i+1})(\alice{i+1}\otimes
       \bob{i+1} \otimes\Idt_{\linop{\hreg{R}}}) (\hstatei{i}{\rhoin})
       \enspace , \label{statei}
\end{align}
for $1\leq i <n$.
We also write the final state of $\puq{U}$  upon input state $\rhoin$ as
$\hstatei{n}{\rhoin} = [\aprot\compose\bprot](\rhoin)$.

\subsection{Modeling Specious Adversaries}

\newcommand{\maxta}{\ensuremath{t_\aprot}}
\newcommand{\maxtb}{\ensuremath{t_\bprot}}

Intuitively, a specious adversary acts in any way apparently
indistinguishable from the honest behavior, in the sense that 
no audit can distinguish the behavior of the adversary from
the honest one. 

More formally, a specious adversary in
$\puq{U}=(\aprot,\bprot,\qop{O},n)$ may use an arbitrary large quantum
memory space. However, at any step $1\leq i\leq n$, the adversary can
transform its own current state to one that is indistinguishable from
the honest joint state. These transforms are modeled by quantum
operations, one for each step of the adversary in $\puq{U}$, and are
part of the adversary's specification. We denote by
$(\trans{1},\ldots,\trans{n})$ these quantum operations where
$\trans{i}$ produces a valid transcript at the end of the $i$--th
step.
    
Let $\ada$ and $\adb$ be adversaries in $\puq{U}$.  We denote by
$\puq{U}(\ada) = (\ada,\bprot,\qop{O},n)$ and
$\puq{U}(\adb)=(\aprot,\adb,\qop{O},n)$ the resulting $n$--step
two-party strategies.  We denote by $\astatei{i}{\ada}{\rhoin}$ the
state defined in (\ref{statei}) for protocol $\puq{U}(\ada)$ and
similarly by $\astatei{i}{\adb}{\rhoin}$ that state for protocol
$\puq{U}(\adb)$.

Adding the possibility for the adversary to be
$\varepsilon$-\emph{close} to honest, we get the following definition:

\begin{definition}\label{specious}
  Let $\puq{U} = (\aprot,\bprot,\qop{O},n)$ be an $n$--step two-party
  protocol with oracle calls for $U \in \unit{{\hreg{A}}_0\otimes
  {\hreg{B}}_0}$. We say that:
  \begin{itemize}
  \item 

    \ada\ is \emph{$\varepsilon$--specious} if $\puq{U}(\ada) =
    (\ada,\bprot,\qop{O},n)$ is an $n$--step two-party strategy with
    $\tilde{\hreg{A}}_0={\hreg{A}}_0$ and there exists a sequence of
    quantum operations $(\trans{1},\ldots,\trans{n})$ such that:
    \begin{enumerate}
    \item 

      for every $1\leq i \leq n$, $\trans{i}:\linop{\tilde{\hreg{A}}_i}
      \mapsto \linop{{\hreg{A}}_i}$,

    \item 

      for every input state $\rhoin \in \dens{{\hreg{A}}_0\otimes
	{\hreg{B}}_0 \otimes \hreg{R}}$, and for all $1\leq i \leq n$,
      \[ \Delta\left((\trans{i}\otimes \Idt_{\linop{\hreg{B}_i \otimes
	\hreg{R}}})
      \left(\astatei{i}{\ada}{\rhoin}\right),
      \hstatei{i}{\rhoin}\right) \leq \varepsilon\enspace .
      \]
    \end{enumerate}

  \item 

    \adb\ is \emph{$\varepsilon$--specious} if $\puq{U}(\adb) =
    (\aprot,\adb,\qop{O},n)$ is a $n$--step two-party strategy with
    $\tilde{\hreg{B}}_0={\hreg{B}}_0$ and there exists a sequence of
    quantum operations $(\trans{1},\ldots,\trans{n})$ \fullv{such that:
    \begin{enumerate}
    \item 

      for every $1\leq i \leq n$,
      $\trans{i}:\linop{\tilde{\hreg{B}}_{i}} \mapsto
      \linop{{\hreg{B}}_{i}}$,

    \item  
      
      for every input state $\rhoin \in \dens{{\hreg{A}}_0\otimes
	{\hreg{B}}_0 \otimes \hreg{R}}$, and for all $1\leq i \leq n$,
      \[ \Delta\left((\Idt_{\linop{\hreg{A}_i \otimes
	  \hreg{R}}}\otimes\trans{i})
          \left(\astatei{i}{\adb}{\rhoin}\right),
          \hstatei{i}{\rhoin}\right) \leq \varepsilon\enspace .
      \]
    \end{enumerate}}{defined as before with $\hreg{B}_i, \tilde{\hreg{B}}_{i}$, and $\astatei{i}{\adb}{\rhoin}$
    replacing $\hreg{A}_i, \tilde{\hreg{A}}_{i}$, and  $\astatei{i}{\ada}{\rhoin}$ respectively.}
  \end{itemize}
  If a party is $\varepsilon(m)$--specious with $\varepsilon(m)$
  negligible for $m$ a security parameter then we say that this party
  is {\em statistically specious}.
\end{definition}

\subsection{Privacy}\label{privacy}

Privacy for $\puq{U}$ is defined as the ability for a simulator,
having only access to the adversary's input and the ideal
functionality $U$, to reproduce the state of the adversary at any step
in the execution of $\puq{U}$. Our definition is similar to the one
introduced in \cite{Watrous02} for statistical zero-knowledge proof
systems.

A simulator for an adversary in $\puq{U}$ is represented by a sequence
of quantum operations $( \simul{i} )_{i=1}^n$, where $\simul{i}$
re-produces the view of the adversary after step $i$.  \simul{i}
initially receives the adversary's input and has access to the ideal
functionality for $U$ evaluated upon the joint input of the adversary
and the honest player. Because of no-cloning, a simulator calling $U$
loses its input, and the input might be required to simulate
e.g.~early steps in the protocol, so we have to allow that $\simul{i}$
does not call $U$. For this purpose we introduce a bit
$q_i\in\{0,1\}$. When $q_i=0$, \simul{i}\ does not call $U$ and when
$q_i=1$, $\simul{i}$ must first call the ideal functionality $U$
before performing some post-processing. More precisely,

\begin{definition}\label{sim}
  Let $\puq{U}=(\aprot,\bprot,\qop{O},n)$ be an $n$--step two-party
  protocol for $U\in \dens{\hreg{A}_0\otimes\hreg{B}_0}$. Then,
  \begin{itemize}
  \item 

    \emph{$\simulator{\ada}=\langle (\simul{1},\ldots, \simul{n}), q
      \rangle$ is a simulator for adversary \ada\ in \puq{U}} if it
      consists of:
      \begin{enumerate}
      \item 

	a sequence of quantum operations
	$(\simul{1},\ldots,\simul{n})$ where for $1\leq i \leq n$,
	$\simul{i} : \linop{\hreg{A}_0} \mapsto
	\linop{\tilde{\hreg{A}}_i}$,\label{cc1}

      \item  

	a sequence of bits $q\in \{0,1\}^n$ determining if the simulator calls the ideal functionality at step $i$: $q_i = 1$ iff the simulator calls the ideal functionality.
        \label{cc2}
      \end{enumerate} 

    \item  

      Similarly, \emph{$\simulator{\adb}=\langle (\simul{1},\ldots,
	\simul{n}), q' \rangle$ is a simulator for adversary \adb\ in
	\puq{U}} \fullv{if it consists of:
      \begin{enumerate}
      \item 
	
	a sequence of quantum operations
        $(\simul{1},\ldots,\simul{n})$ where for $1\leq i\leq n$,
        $\simul{i} : \linop{\hreg{B}_0} \mapsto
        \linop{\tilde{\hreg{B}}_{i}}$

      \item 

	a sequence of bits $q'\in \{0,1\}^{n}$ determining if the simulator calls the ideal functionality at step $i$: $q'_i = 1$ iff the simulator calls the ideal functionality.

      \end{enumerate}}{if it satisfies conditions \ref{cc1} and \ref{cc2} above with $q', \hreg{B}_0, \hreg{B}_i$, and $\tilde{\hreg{B}}_i$
      replacing $q,\hreg{A}_0, \hreg{A}_i$, and $\tilde{\hreg{A}}_i$ respectively. }
  \end{itemize}
\end{definition}
Given an input state $\rhoin \in \dens{\hreg{A}_0 \otimes \hreg{B}_0
\otimes \hreg{R}}$, we define the \ada's respectively \adb's simulated views  as:
\begin{align*}
	  \view{\ada}{\rhoin}{i}&:=\trace[\hreg{B}_0]{(\simul{i}\otimes
	  \Idt_{\linop{\hreg{B}_0 \otimes \hreg{R}}})\left( (U^{q_i}
	  \otimes \Idt_{\hreg{R}})\cdot \rhoin\right)}\ ,\\
	  \view{\adb}{\rhoin}{i}&:=\trace[\hreg{A}_0]{(\Idt_{\linop{\hreg{A}_0
	  \otimes \hreg{R}}}\otimes \simul{i})\left( (U^{q'_i} \otimes
	  \Idt_{\hreg{R}})\cdot \rhoin\right)}\enspace .
\end{align*}
We say that protocol $\puq{U}$ is private against specious adversaries
if there exits a simulator for the view at any step of any such
adversary. In more details,
 
\begin{definition}\label{privdef}
  Let $\puq{U}=(\aprot,\bprot,\qop{O},n)$ be a protocol for $U\in
  \unit{\hreg{A}_0\otimes {\hreg{B}}_0}$ and let $0 \leq \delta\leq
  1$.  We say that $\puq{U}$ is {\em $\delta$--private against
  $\varepsilon$--specious \ada} if there exists a simulator
  $\simulator{\ada}$ such that for all input states $\rhoin \in
  \dens{\hreg{A}_0\otimes \hreg{B}_0 \otimes \hreg{R}}$ and for all $1
  \leq i \leq n$, $\Delta\left(\view{\ada}{\rhoin}{i},
  \trace[\hreg{B}_i]{\astatei{i}{\ada}{\rhoin}} \right) \leq \delta.$
  Similarly, we say that $\pu{U}$ is {\em $\delta$--private against
  $\varepsilon$--specious \adb} if there exists a simulator
  $\simulator{\adb}$ such that for all input states $\rhoin \in
  \dens{\hreg{A}_0\otimes \hreg{B}_0 \otimes \hreg{R}}$ and for all $1
  \leq i \leq n$, $\Delta\left(\view{\adb}{\rhoin}{i},
  \trace[\hreg{A}_{i}]{\astatei{i}{\adb}{\rhoin}} \right) \leq
  \delta.$ Protocol $\puq{U}$ is {\em $\delta$--private against
  $\varepsilon$--specious adversaries} if it is $\delta$--private
  against both\delete{$\varepsilon$--specious} \ada\ and \adb. For
  $\gamma>0$, if $\puq{U}$ is $2^{-\gamma
  m}$--private\delete{against specious adversaries} for $m\in \N^+$ a
  security parameter then we say that $\puq{U}$ is \emph{statistically
  private\delete{ against specious adversaries}}.
\end{definition}

\fullv{One should keep in mind that $\delta$ should be kept small compared to the number of rounds, since the protocol is only secure if we can ensure that, with high probability, the adversary cannot behave differently in the simulated world at \emph{any} of the rounds. If $\delta n$ is kept small, we can use the union bound over all the rounds to ensure this.}{}


We show next that for some unitary, statistical privacy cannot be
satisfied by any protocol 
 in the bare model. 
 
\section{Unitaries with no private protocols}\label{impossiblesect}

In this section, we show that no statistically private protocol for
the swap gate exists in the bare model.
The swap gate, denoted \swap,  is the following unitary transform:
\[ \swap : \ket{\phi_A}^{\hreg{A}_0} \ket{\phi_B}^{\hreg{B}_0} \mapsto
   \ket{\phi_B}^{\hreg{A}_0}\ket{\phi_A}^{\hreg{B}_0}\enspace ,
\]
for any one qubit states $\ket{\phi_A}\in \hreg{A}_0$ and
$\ket{\phi_B}\in \hreg{B}_0$ (i.e.,
$\dim{(\hreg{A}_0)}=\dim{(\hreg{B}_0)}=2$).  Notice that \swap\ is in
the Clifford group since it can be implemented with three \cnot\
gates.  It means that universality is not required (gates in the
Clifford groups are not universal for quantum computation) for a
unitary to be impossible to evaluate privately. The impossibility of
\swap\ essentially follows from no cloning.

\begin{theorem}[Impossibility of swapping] \label{noswap}
  There is no correct and statistically private two-party protocol
  $\pu{\swap}=(\aprot,\bprot,\qop{O},n(m))$ in the bare model.
\end{theorem}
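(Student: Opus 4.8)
The plan is to derive a contradiction from the existence of a correct, statistically private protocol $\pu{\swap}=(\aprot,\bprot,\qop{O},n(m))$ in the bare model. The key is an information-theoretic potential-function argument tracking how much each party ``knows'' about the other party's input. I would feed the protocol a maximally entangled input: let $\hreg{A}_0$ and $\hreg{B}_0$ each be a qubit, let $\hreg{R}=\hreg{R}_A\otimes\hreg{R}_B$, and take $\rhoin$ to be $\Phi^+$ between $\hreg{A}_0$ and $\hreg{R}_A$, tensored with $\Phi^+$ between $\hreg{B}_0$ and $\hreg{R}_B$. Correctness says that at the end Alice holds a system maximally entangled with $\hreg{R}_B$ and Bob holds one maximally entangled with $\hreg{R}_A$; equivalently, in terms of a suitable ``knowledge'' measure, Alice has full knowledge of Bob's input reference and none of her own, and symmetrically for Bob. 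At the start the situation is exactly reversed. So over the course of the protocol the quantity ``Alice's correlation with $\hreg{R}_B$'' rises from $0$ to maximal, and ``Bob's correlation with $\hreg{R}_A$'' does the same, while each party's correlation with its \emph{own} reference falls from maximal to $0$.

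Next I would argue that privacy forces a sharp dichotomy at every step. Fix step $i$ and look at $\trace[\hreg{B}_i]{\astatei{i}{\ada}{\rhoin}}$, Alice's reduced state (including $\hreg{R}$). The simulator $\simul{i}$ produces this state from Alice's input $\hreg{A}_0$ alone, either without calling \swap\ ($q_i=0$) or after one call ($q_i=1$). If $q_i=0$, the simulated state is a function of $\hreg{A}_0$ and is independent of $\hreg{R}_B$, so Alice's state at step $i$ is (statistically) uncorrelated with $\hreg{R}_B$. If $q_i=1$, the simulator has called the ideal \swap\ and thereby \emph{lost} $\hreg{A}_0$, so the simulated state is independent of $\hreg{R}_A$ and hence Alice's state at step $i$ is uncorrelated with her own reference $\hreg{R}_A$. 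Combining: at each step, either Alice has negligible correlation with Bob's input, or Alice has negligible correlation with her own input — she can never simultaneously ``hold'' nontrivial amounts of both. The same holds for Bob with the roles of $\hreg{R}_A,\hreg{R}_B$ swapped, using the simulator $\simulator{\adb}$ for the (honest-as-special-case) specious Bob — every honest party is trivially $0$-specious, so the definitions apply.

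Now I would look at the ``transition step.'' Let $i^\star$ be the first step at which Alice's correlation with $\hreg{R}_B$ becomes non-negligible; such a step exists because it is $0$ at step $0$ and maximal at step $n$ (take the smallest $i$ where it exceeds, say, a fixed constant, using that correctness forces it above that constant eventually, while a single communication oracle — the only nontrivial oracle in the bare model — moves the boundary by a bounded amount per step, so the jump cannot be avoided but happens at \emph{some} step; more carefully one picks the step where it crosses a threshold). At step $i^\star-1$ Alice's correlation with $\hreg{R}_B$ is still negligible. The only thing that happens between these steps on Alice's side is: her local operation $\alice{i^\star}$ (which cannot create correlation with $\hreg{R}_B$, a system she never touches) and a communication oracle. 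In the bare model a communication oracle either sends a register from Bob to Alice or from Alice to Bob. If it sends Alice$\to$Bob it cannot increase Alice's correlation with anything. So it must be a Bob$\to$Alice transmission that carries the new correlation with $\hreg{R}_B$ to Alice. But by the privacy dichotomy applied to Bob at step $i^\star-1$: since Bob is about to hand Alice information about $\hreg{R}_B$ and correctness has not yet driven Bob's correlation with $\hreg{R}_A$ to full, Bob at step $i^\star-1$ must have negligible correlation with $\hreg{R}_A$ (the $q=1$ branch) OR negligible with $\hreg{R}_B$; a short case analysis — together with the fact that the register Bob sends, once in Alice's hands, makes Alice correlated with $\hreg{R}_B$ while Alice still holds her original input correlated with $\hreg{R}_A$ — shows Alice at step $i^\star$ is non-negligibly correlated with \emph{both} $\hreg{R}_A$ and $\hreg{R}_B$, contradicting the dichotomy for Alice at step $i^\star$. (One must check Alice still holds essentially all of $\hreg{R}_A$-correlation: before the transition nothing has ``swapped out'' her input, because if it had, correctness at the end plus monogamy would be violated, or one invokes the dichotomy earlier.)

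The main obstacle — and where I expect to spend the real work — is making the monogamy/``cannot hold both'' step quantitative and composable: one needs a single scalar (e.g.\ based on the trace distance of Alice's reduced state to states independent of $\hreg{R}_B$, or a smooth entropic quantity like $I(\hreg{A}_i:\hreg{R}_B)$) for which (a) privacy gives the per-step dichotomy with additive error $O(\delta)=2^{-\Omega(m)}$, (b) local operations and Alice$\to$Bob communication are monotone, and (c) correctness pins the endpoint values, so that the inevitable crossing produces a step violating (a). Controlling error accumulation across $n(m)$ steps is why the hypothesis is \emph{statistical} privacy: $\delta=2^{-\gamma m}$ beats any polynomial $n(m)$ via a union bound, so the contradiction survives. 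I would also need the small technical lemma that a communication oracle changes these scalars by a controlled amount and only in the ``right'' direction, and the observation that honest players are $0$-specious so that Definition~\ref{privdef} can be invoked for both sides simultaneously.
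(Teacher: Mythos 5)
Your approach is essentially the paper's: the same maximally entangled test input, the same dichotomy driven by the simulator's call bit $q_i$ (no call $\Rightarrow$ decoupled from the other's reference; call $\Rightarrow$ decoupled from one's own), the same search for a transition step, and the same final contradiction that a party who only \emph{receives} a register cannot thereby lose its hold on its own input. The one step you flag as needing ``real work'' --- showing that at the transition the receiving party still retains its correlation with its own reference --- is exactly where the paper's proof supplies a tool you have not: it purifies the honest players (so every $\alice{i},\bob{i}$ is an isometry and the global state stays pure) and applies Uhlmann's theorem. Concretely, if $q_i=0$ then Alice's reduced state on $\hreg{A}_i\otimes\hreg{R}_{\hreg{A}}\otimes\hreg{R}_{\hreg{B}}$ is close to one with $\hreg{R}_{\hreg{B}}$ maximally mixed and decoupled, and purity plus Uhlmann yields an isometry on $\hreg{B}_i$ alone reconstructing $\ket{\Psi_{0,0}}^{\hreg{B}_0\hreg{R}_{\hreg{B}}}$; hence Bob can recover his own input, which forces $q'_i=0$, and symmetrically for Alice. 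This gives $q_i=q'_i$ at every step and replaces your scalar ``correlation'' bookkeeping with explicit reconstruction isometries, for which ``gaining a register cannot destroy reconstructability'' is immediate. Your proposed substitutes for this step (``correctness at the end plus monogamy,'' or ``invoke the dichotomy earlier'') do not obviously close it: a party can part with its input temporarily without violating end-of-protocol correctness (e.g.\ the trivial send-everything protocol), so at the step before your $i^\star$ Alice need not, a priori, still hold her $\hreg{R}_{\hreg{A}}$-correlation unless you have the two-sided Uhlmann linkage.

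A second, minor point: the quantitative machinery you anticipate is largely unnecessary. The contradiction lives entirely at the two consecutive steps straddling the transition, so there is no error accumulation over $n(m)$ rounds and no union bound; one only needs $\varepsilon$ small enough that a state $O(\sqrt{\varepsilon})$-close to being maximally entangled with $\hreg{R}_{\hreg{A}}$ cannot also be $O(\varepsilon)$-close to a product with $\hreg{R}_{\hreg{A}}$, which holds below a fixed constant. Likewise, no bound on how much a single message can move your potential function is needed --- only that local operations and sending cannot increase it.
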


\fullv{}{
\begin{proof}
Suppose that there exists an $\varepsilon$-correct, $\varepsilon$-private protocol in the bare model for $\swap$ for sufficiently small $\varepsilon$; we will show that this implies that one of the two players must \emph{lose} information upon receiving a message, which is clearly impossible.

We will consider the following particular pure input state: $\ket{\varphi} := \ket{\Psi_{0,0}}^{\hreg{A}_0 \hreg{R}_{\hreg{A}}} \otimes \ket{\Psi_{0,0}}^{\hreg{B}_0 \hreg{R}_{\hreg{B}}}$, a maximally entangled state between $\hreg{A}_0 \otimes \hreg{B}_0$ and the reference system $\hreg{R}_{\hreg{A}} \otimes \hreg{R}_{\hreg{B}}$ that is broken down into two subsystems for convenience. Furthermore, we will consider the ``purified'' versions of the honest players for this protocol; in other words, we will assume that the super-operators $\aprot_1,\dots,\aprot_n$ and $\bprot_1,\dots,\bprot_n$ are in fact linear isometries and that therefore the players never discard any information unless they have to send it to the other party. The global state $\hstatei{i}{\varphi}$ after step $i$ is therefore a pure state on $\hreg{A}_i \otimes \hreg{B}_i \otimes \hreg{R}_{\hreg{A}} \otimes \hreg{R}_{\hreg{B}}$. 

After step $i$ of the protocol (i.e., after the $i$th message has been sent), Alice's state must either depend only on her own original input (if $q_i = 0$ for her simulator), or on Bob's original input (if $q_i=1$). More precisely, by the definition of privacy (Definition \ref{privdef}), we have that
\[
\Delta\left( \view{\aprot}{\varphi}{i}, \tr_{\hreg{B}_i}[\hstatei{i}{\varphi}] \right) \leq \varepsilon \enspace ,
\]
where $\view{\aprot}{\varphi}{i}$ is \aprot's simulated view after step $i$ and $\hstatei{i}{\varphi}$ is the global state in the real protocol after step $i$. Now, suppose that $q_i=0$, and let $\ket{\xi} \in \hreg{A}_i \otimes \hreg{R}_{\hreg{A}} \otimes \hreg{R}_{\hreg{B}}' \otimes \hreg{Z}$ be a purification of $\view{\aprot}{\varphi}{i}$ with $\hreg{Z}$ being the purifying system, and $\hreg{R}_{\hreg{B}}$ renamed for upcoming technical reasons. The pure state $\ket{\xi} \otimes \ket{\Psi_{0,0}}^{\hreg{R}_{\hreg{B}} \hreg{B}_0}$ has the same reduced density matrix as $\view{\aprot}{\varphi}{i}$ on $\hreg{A}_i \otimes \hreg{R}_{\hreg{A}} \otimes \hreg{R}_{\hreg{B}}$. Hence, by Uhlmann's theorem, there exists a linear isometry $V : \hreg{B}_i \rightarrow \hreg{B}_0 \otimes \hreg{Z} \otimes \hreg{R}_{\hreg{B}}'$ such that
\[
V \view{\aprot}{\varphi}{i} V^\dagger  = \ket{\xi}\bra{\xi} \otimes \ket{\Psi_{0,0}}\bra{\Psi_{0,0}}^{\hreg{B}_0\hreg{R}_{\hreg{B}}}
\]
and hence
\[
\Delta\left( V \hstatei{i}{\varphi} V^\dagger, \ket{\xi}\bra{\xi} \otimes \ket{\Psi_{0,0}}\bra{\Psi_{0,0}}^{\hreg{B}_0\hreg{R}_{\hreg{B}}} \right) \leq \sqrt{2\varepsilon}
\enspace .
\]
This means that if $q_i = 0$, then Bob is still capable of reconstructing his own input state after step $i$ by applying $V$ to his working register. Clearly, this means that $q_i' = 0$ (i.e., Bob's simulator must also not call $\swap$), and therefore, by the same argument, Alice must also be able to reconstruct her own input with an isometry $V_A : \hreg{A}_i \rightarrow \hreg{B}_0 \otimes \hreg{Z} \otimes \hreg{R}'_{\hreg{A}}$. The same argument also holds if $q_i = 1$: we then conclude that $q_i' = 1$ and that Alice and Bob must have each other's inputs; no intermediate situation is possible. We conclude that, at every step $i$ of the protocol, $q_i = q'_i$.

Now, before the protocol starts, Alice must have her input, and Bob must have his, hence, $q_0 = q_0' = 0$. At the end, the two inputs must have been swapped, which means that $q_n = q'_n = 1$; there must therefore be a step $k$ in the protocol after which the two inputs are swapped but not before, meaning that $q_k = 1$ and $q_{k-1} = 0$. But at each step, only one player receives information, which means that at this step $k$, the player who received the message must lose the ability to reconstruct his own input, which is clearly impossible. \qed
\end{proof}
}

Using this line of reasoning, Theorem~\ref{noswap} can be extended to apply to any protocol for almost any unitary preventing both parties to recover their input states from its output.


\subsubsection{Sufficient Assumptions for Private \swap.}
A private protocol for \swap\ in the bare model would exist if the
players could rely on special relativity and a lower bound on
their separation in space: they simply send their messages
simultaneously. The fact that messages cannot travel faster than the
speed of light ensures that the messages are independent of each
other.
It is also straightforward to devise a private protocol for \swap\
based on commitment schemes.  \aprot\ sends one half EPR-pair to
\bprot\ while keeping the other half. \aprot\ then teleports (without
announcing the outcome of the measurement) her register and commits on
the outcome of the Bell measurement. \bprot\ sends his register to
\aprot\ before she opens her commitment. \fullv{This allows \bprot\ to
reconstruct \aprot's initial state.}{}

\section{The Protocol}\label{prot}

\newcommand{\qcol}[1]{\ensuremath{\mathbf{Co}_{#1}}}
\newcommand{\qcnl}{\ensuremath{\mathbf{AND}}}
\newcommand{\ikru}{\ensuremath{n_U}}

We now describe a private protocol for the two-party evaluation of any
unitary $U\in \unit{\hreg{A}_0\otimes\hreg{B}_0}$ denoted by
$\pup{U}^{\qop{O}}=(\aprots,\bprots,\qop{O},n_U+1)$ where $U$ is represented by
a circuit $C_U$ with $u$ gates in \ug.  
We slightly abuse the notation with respect to the parameter $n_U+1$.
Given circuit $C_U$, we let $n_U$ be the number of oracle
calls (including calls to communication oracles). Setting the last 
parameter to $n_U+1$ instead of $n_U$ comes from the fact
that in our protocol, \aprots\ and \bprots\ have to perform a last
operation each in order to get their outcome. These last operations
do not involve a call to any oracle.   
Let $G_j$ be the $j$-th gate
in $C_U = G_{u} G_{u-1} \ldots G_1$.  The protocol is obtained by
composing sub-protocols for each gate similarly to well-known
classical constructions\cite{Yao86,kilian88}.  Notice that
$\pup{U}^{\qop{O}}$ will not be presented in the form of
Definition~\ref{twopartyprot}.  \aprots\ is not necessarily sending
the first and the last messages. This can be done without consequences
since we provide a simulation for each step where a message from the
honest party is received or the output of a call to an ideal
functionality is available. Putting $\pup{U}^{\qop{O}}$ in the
standard form of Definition~\ref{twopartyprot} is straightforward and
changes nothing to the proof of privacy.

The evaluation of each gate is performed over shared encrypted states.
Each wire in $C_U$ will be updated from initially holding the input
$\rhoin \in \dens{\hreg{A}_0\otimes\hreg{B}_0\otimes\hreg{R}}$ to
finally holding the output $(U\otimes\Idt_{\hreg{R}} )\cdot \rhoin \in
\dens{\hreg{A}_0\otimes \hreg{B}_0\otimes\hreg{R}}$.  The state of
wires $\wire{w}\in \hreg{A}_0\cup\hreg{B}_0$ after the evaluation of
$G_j$ are stored at \aprots's or \bprots's according if $\wire{w}\in
\hreg{A}_0$ or $\wire{w}\in \hreg{B}_0$.  The shared encryption keys
for wire $\wire{w}\in \hreg{A}_0 \cup \hreg{B}_0$ updated after the
evaluation of $G_j$ are denoted by
$\keya{w}{j}=(\kxa{w}{j},\kza{w}{j})\in \{0,1\}^2$ and
$\keyb{w}{j}=(\kxb{w}{j},\kzb{w}{j})\in \{0,1\}^2$ for
\aprots\ and \bprots\ respectively and are held privately in internal
registers of each party.

The final phase of the protocol is where a call to an ideal
functionality is required.  \aprots\ and \bprots\ exchange their own
part of each encryption key for the other party's wires. In order
to do this, the {\em key-releasing phase} invokes an ideal \swap-gate
as functionality:
$\qop{O}_{\ikru}:\linop{\hreg{A}^{\qop{O}}_{\ikru}\otimes
   \hreg{B}^{\qop{O}}_{\ikru}} \mapsto
   \linop{\hreg{A}^{\qop{O}}_{\ikru}\otimes
   \hreg{B}^{\qop{O}}_{\ikru}}$,
where $\qop{O}_{\ikru}(\rho):= \swap\cdot \rho$.  Upon
joint input state $\rhoin\in
\dens{\hreg{A}_0\otimes\hreg{B}_0\otimes\hreg{R}}$, protocol
$\pup{U}^{\qop{O}(U)}$ runs the following phases:
\begin{description}
\item[Initialization:] 

  We assume that \aprots\ and \bprots\ have agreed upon a description
  of $U$ by a circuit $C_U$ made out of $u$ gates $(G_1,\ldots,
  G_{u})$ in \ug.  For all wires $\wire{w}\in\hreg{A}_0\cup
  \hreg{B}_0$, \aprots\ and $\bprots$ set their initial encryption
  keys as $\keya{w}{0}=(\kxa{w}{0},\kza{w}{0}):=(0,0)$ and
  $\keyb{w}{0}=(\kxb{w}{0},\kzb{w}{0}):=(0,0)$ respectively.

\item[Evaluation:] 

  For each gate number $1\leq j \leq u$, \aprots\ and \bprots\
  evaluate $G_j$ as described in details below.  This evaluation
  results in shared encryption under keys
  $\keya{w}{j}=(\kxa{w}{j},\kza{w}{j})$ and
  $\keyb{w}{j}=(\kxb{w}{j},\kzb{w}{j})$ for all wires $\wire{w}\in
  \hreg{A}_0\cup \hreg{B}_0$, which at that point hold a shared
  encryption of $((G_j G_{j-1} \ldots
  G_{1})\otimes\Idt_{\hreg{R}})\cdot \rhoin$. Only the evaluation of
  the \rgate-gate requires a call to an ideal functionality (i.e., an
  \nl).

\item[Key-Releasing:]  

  Let $\hreg{A}^{\qop{O}}_{\ikru}$ and $\hreg{B}^{\qop{O}}_{\ikru}$ be
  the set of registers holding respectively
  $\keya{w}{u}=(\kxa{w}{u},\kza{w}{u})$ for $\wire{w}\in \hreg{B}_0$
  and $\keyb{w}{u}=(\kxb{w}{u},\kzb{w}{u})$ for $\wire{w}\in
  \hreg{A}_0$. We assume w.l.g that dimensions of both sets of
  registers are identical\footnote{Otherwise, add enough registers
  initially in state $\ket{0}$ to the smaller set.}:
  \begin{enumerate}
  \item 

    \aprots\ and \bprots\ run the ideal functionality for the
    \swap-gate upon registers $\hreg{A}^{\qop{O}}_{\ikru}$ and
    $\hreg{B}^{\qop{O}}_{\ikru}$.

  \item 

    \aprots\ applies the decryption operator
    $\keya{w}{}=(\kxa{w}{u}\oplus \kxb{w}{u}, \kza{w}{u}\oplus
    \kzb{w}{u})$ to each of her wires $\wire{w}\in \hreg{A}_0$.

  \item 

    \bprots\ applies the decryption operator for key
    $\keyb{w}{}=(\kxa{w}{u}\oplus \kxb{w}{u}, \\ \kza{w}{u}\oplus
    \kzb{w}{u})$ to each of his wires $\wire{w}\in \hreg{B}_0$.

  \end{enumerate}
\end{description}
\fullv{In the following subsections \ref{sect:crypt} to
\ref{sect:rgate}, we describe the evaluation phase 
for each gate in $\ug$.}{}

\subsubsection{Swapping for key-releasing.}
Notice that the key-releasing phase only uses the \swap-gate with classical input states. The reader might therefore wonder why this functionality is defined quantumly when a classical swap would work equally well.  The reason is that, perhaps somewhat surprisingly, a classical swap is a potentially stronger primitive than a quantum swap. From a classical swap one can build a quantum swap by encrypting the quantum states with classical keys, exchange the encrypted states using quantum communication, and then using the classical swap to exchange the keys. Obtaining a classical swap from a quantum one, however, is not obvious.  Suppose that registers \hreg{A}\ and \hreg{B}\ should be swapped classically while holding quantum states beforehand. These registers could be entangled with some purification registers before being swapped.  Using a quantum swap between \hreg{A}\ and \hreg{B}\ will always leave these registers entangled with the purification registers until they become measured while a classical swap will ensure that \hreg{A}\ and \hreg{B}\ become unentangled with the purification registers after its invocation. In other words, a classical swap could prevent an adversary from exploiting entanglement in his attack.

\subsubsection{The ideal AND-box functionality.} 

As we are going to see next, a call to an ideal AND-box is required
during the evaluation of the \rgate-gate. Unlike the ideal \swap\ used
for key-releasing, the AND-box will be modeled by a purely classical
primitive denoted \nl.  This is required for privacy of our protocol
since any implementation of it by some unitary will necessarily
leak\cite{SSS09}.  The quantum operation implementing it will first
measure the two one-qubit input registers in the computational basis
in order to get classical inputs $x,y \in\{0,1\}$ for \aprots\ and
\bprots\ respectively. The classical output bits are then set to
$a\in_R\{0,1\}$ for \aprots\ and $b=a\oplus x y$ for \bprots.

\subsection{Computing over Encrypted States}\label{sect:crypt}

Before the execution of $G_{j+1}$ in $C_U$, \aprots\ and \bprots\
share an encryption of $\rho_j = \left((G_j\cdot G_{j-1} \cdot \ldots
\cdot G_1)\otimes \Idt_{\hreg{R}}\right) \cdot \rhoin$ in registers\footnote{To ease the notation in
the following, we assume $\rho_j\in \dens{\hreg{A}_0 \otimes
\hreg{B}_0}$ rather than in $\dens{\hreg{A}_0 \otimes
\hreg{B}_0\otimes\hreg{R}}$. It is easy to see that this can be done
without loss of generality.}
holding wires $\wire{w}\in \hreg{A}_0\cup \hreg{B}_0$.  Each wire
$\wire{w}\in \hreg{A}_0 \cup \hreg{B}_0$ is encrypted by a shared
quantum one-time pad as
\begin{equation}\label{encrypt}
  \left(
  \left(
  \bigotimes_{\wire{w}\in \hreg{A}_0 \cup \hreg{B}_0}
  X^{X_{\aprots}^j(\wire{w})
    \oplus X_{\bprots}^j(\wire{w})}Z^{Z_{\aprots}^j(\wire{w})\oplus
  Z_{\bprots}^j(\wire{w})}\right)\otimes \Idt_{\hreg{R}} \right) \cdot
  \rho_j\enspace ,
\end{equation}
where $K_{\aprots}^{j}(\wire{w}) :=
(X_{\aprots}^j(\wire{w}),Z_{\aprots}^j(\wire{w}))\in \{0,1\}^2$ and
$K_{\bprots}^j(\wire{w}) :=
(X_{\bprots}^j(\wire{w}),Z_{\bprots}^j(\wire{w}))\in \{0,1\}^2$ are
two bits of secret keys for \aprots\ and \bprots\ respectively.  In
other words, wires $\wire{w}\in \hreg{A}_0 \cup \hreg{B}_0$ are
encrypted by $X^{x}Z^{z}$ where $x=X_{\aprots}^j(\wire{w}) \oplus
X_{\bprots}^j(\wire{w})$ and $z=Z_{\aprots}^j(\wire{w})\oplus
Z_{\bprots}^j(\wire{w})$ are additive sharings for the encryption of
$\wire{w}$.  Then, evaluating $G_{j+1}$ upon state (\ref{encrypt})
will produce a new sharing
$K_A^{j+1}(\wire{w}):=(X_A^{j+1}(\wire{w}),Z_A^{j+1}(\wire{w}))$ and
$K_B^{j+1}(\wire{w}):= (X_B^{j+1}(\wire{w}),Z_B^{j+1}(\wire{w}))$ for
the encryption of state $\rho_{j+1} =
(G_{j+1}\otimes\Idt_{\hreg{R}})\cdot \rho_j $.  In the following, we
describe how to update the keys for the wires involved in the current
gate to be evaluated---all other wires retain their previous
values.

\subsection{Evaluation of Gates in the Pauli and Clifford Groups}\label{cliffeval}

\fullv{\subsubsection{Pauli gates.}}{}
Non-trivial Pauli gates (i.e., $X,Y,$ and $Z$) can easily be computed
on encrypted quantum states since they commute or anti-commute pairwise.
Let $G_{j+1}\in \{X,Y,Z\}$ be the Pauli gate to be executed on wire
$\wire{w}$. 
\fullv{We have: 
\begin{equation*}\label{paul}
  G_{j+1}   \left(X^{X_{\aprots}^i(\wire{w})
    \oplus X_{\bprots}^j(\wire{w})}Z^{Z_{\aprots}^j(\wire{w})\oplus
    Z_{\bprots}^j(\wire{w})}\right) 
 = \pm \left( X^{X_{\aprots}^j(\wire{w})
    \oplus X_{\bprots}^j(\wire{w})}Z^{Z_{\aprots}^j(\wire{w})\oplus
    Z_{\bprots}^j(\wire{w})}\right) G_{j+1}
\enspace.
\end{equation*}}{}
It means that up to an irrelevant phase factor, it suffices for the
owner of \wire{w}\ to apply $G_{j+1}$ without the need for neither
party to update their shared keys, i.e.,
$K_{\aprots}^{j+1}(\wire{w}):=K_{\aprots}^{j}(\wire{w})$ and
$K_{\bprots}^{j+1}(\wire{w}):=K_{\bprots}^{j}(\wire{w})$.

\fullv{
\subsubsection{\hgate,  \pgate{}, and \cnot\   on local wires.}}{}
Now, suppose  that $G_{j+1}\in \{\hgate, \pgate\}$.
Each of these one-qubit gates applied upon
wire \wire{w}\  
will be computed  by simply letting the party owning \wire{w}\
apply $G_{j+1}$. 
\fullv{Since
\begin{eqnarray*}
\hgate\, \left(  X^{X_{\aprots}^j(\wire{w})\oplus X_{\bprots}^j(\wire{w})}
          Z^{Z_{\aprots}^j(\wire{w})\oplus
          Z_{\bprots}^j(\wire{w})}\right)  
&=&  \left( X^{Z_{\aprots}^j(\wire{w})\oplus Z_{\bprots}^j(\wire{w})}
     Z^{X_{\aprots}^j(\wire{w})\oplus
          X_{\bprots}^j(\wire{w})}\right)\, \hgate 
          \enspace ,
          \mbox{ and }\\
 \pgate{}\, \left( X^{X_{\aprots}^j(\wire{w})\oplus X_{\bprots}^j(\wire{w})}
           Z^{Z_{\aprots}^j(\wire{w})\oplus
          Z_{\bprots}^j(\wire{w})}\right)
&=&  \left( X^{X_{\aprots}^j(\wire{w})\oplus X_{\bprots}^j(\wire{w})}
     Z^{X_{\aprots}^j(\wire{w})\oplus X_{\bprots}^j(\wire{w})\oplus
          Z_{\aprots}^j(\wire{w})\oplus Z_{\bprots}^j(\wire{w})}
     \right) \pgate \enspace ,
\end{eqnarray*}
the encryption keys are updated as follows:}{Encryption
keys are updated locally as:}
\begin{eqnarray*}
\hgate &:& 
K_{\aprots}^{j+1}
=
(X_{\aprots}^{j+1}(\wire{w}),Z_{\aprots}^{j+1}(\wire{w}))
:=
(Z_{\aprots}^i(\wire{w}),X_{\aprots}^j(\wire{w}))\enspace ,\\ 
& & 
K_{\bprots}^{j+1}=(X_{\bprots}^{j+1}(\wire{w}),Z_{\bprots}^{j+1}(\wire{w}))
:= 
(Z_{\bprots}^j(\wire{w}),X_{\bprots}^j(\wire{w}))\enspace ,\\
\pgate{} &:& K
_{\aprots}^{j+1}=(X_{\aprots}^{j+1}(\wire{w}),Z_{\aprots}^{j+1}(\wire{w}))
:= 
(X_{\aprots}^j(\wire{w}),X_{\aprots}^j(\wire{w})\oplus
   Z_{\aprots}^j(\wire{w}))\enspace ,\\ 
& & 
K_{\bprots}^{j+1}
=
(X_{\bprots}^{j+1}(\wire{w}), Z_{\bprots}^{j+1}(\wire{w}))
:=
(X_{\bprots}^j(\wire{w}),X_{\bprots}^j(\wire{w})\oplus 
   Z_{\bprots}^j(\wire{w}))\enspace .
\end{eqnarray*}
Any one-qubit gate in the Clifford group can be implemented the same
way using their own commutation relations with the Pauli operators
used for encryption.  A \cnot-gate on local wires can be evaluated in
a similar way.  That is, whenever both wires \wire{w}\ and \wire{w'}\
feeding the \cnot\ belong to the same party.  Assume that \wire{w}\ is
the control wire while \wire{w'}\ is the target and that \aprots\
holds them both\fullv{(i.e., $\wire{w},\wire{w'} \in \hreg{A}_0$)}{}.  Then,
\aprots\ simply applies \cnot{}\  on wires \wire{w}\ and
\wire{w'}. Encryption keys are updated as:
\begin{eqnarray*}
\cnot{} &:& 
K_{\aprots}^{j+1}(\wire{w})
=
(X_{\aprots}^{j+1}(\wire{w}), Z_{\aprots}^{j+1}(\wire{w}))
:=
(X_{\aprots}^j(\wire{w}),Z_{\aprots}^j(\wire{w})\oplus 
   Z_{\aprots}^j(\wire{w'}))\enspace ,\\ 
& & K_{\aprots}^{j+1}(\wire{w'})
=
(X_{\aprots}^{j+1}(\wire{w'}), Z_{\aprots}^{j+1}(\wire{w'}))
:=
(X_{\aprots}^j(\wire{w'})\oplus X_{\aprots}^{j}(\wire{w}),
   Z_{\aprots}^j(\wire{w'}))\ ,\\ 
& & K_{\bprots}^{j+1}(\wire{w})
:= 
K_{\bprots}^{j}(\wire{w}) \mbox{ and } K_{\bprots}^{j+1}(\wire{w'})
:= 
K_{\bprots}^{j}(\wire{w'})\enspace .
\end{eqnarray*}
When \bprots\ holds both wires, the procedure is simply performed with
the roles of \aprots\ and \bprots\ reversed.

\subsubsection{Nonlocal \cnot{}.}\label{sect:cnot}
We now look at the case where $G_{j+1}=\cnot$ 
upon wires $\wire{w}$ and $\wire{w'}$,  one of which is owned by \aprots\
while the other is owned by \bprots.
In this case, interaction is unavoidable
for the evaluation of the gate.  Let us assume w.l.g that \aprots\ holds
the control wire \wire{w}\ while \bprots\ holds the target wire \wire{w'}
(i.e., $\wire{w}\in \hreg{A}_0$ and $\wire{w'}\in \hreg{B}_0$).
We start from a construction introduced in \cite{GC99a} in the context
of fault tolerant quantum computation.

\begin{wrapfigure}{r}{0.4\textwidth}
\vspace{-20pt}
\centering
{
        \mbox{\scriptsize \Qcircuit @C=1em @R=.7em {
      \lstick{\wire{w}} & \multigate{1}{Bell} & \ustick{a_x} \cw & \cw             & \control\cw & & &\\
      \twowire{\ket{\Psi_{0,0}}}  & \ghost{Bell}        & \ustick{a_z} \cw & \control\cw     & \cwx  \\
                            & \ctrl{1}            & \qw            & \gate{Z}\cwx    & \gate{X}\cwx & \gate{Z}     & \qw \\
      \twowire{\ket{\Psi_{0,0}}}  & \targ               & \qw            & \gate{X}        & \gate{X}\cwx & \gate{Z}\cwx & \qw \\
                            & \multigate{1}{Bell} & \ustick{b_x} \cw & \control\cw\cwx &              & \cwx \\
      \lstick{\wire{w'}} & \ghost{Bell}        & \ustick{b_z} \cw & \cw             & \cw          & \control\cwx\cw                  
        \gategroup{1}{4}{6}{6}{2em}{--}
        }}
      }    
        \small{\caption{Evaluation of \cnot.}  
	\label{fig:postteleportation}}
\vspace{-10pt}
\end{wrapfigure}

The idea behind the sub-protocol is depicted in
Fig.~\ref{fig:postteleportation}. The effect of the Bell measurement
is to {\em teleport} the input state of wires $\wire{w}$ and
$\wire{w'}$ {\em through} the \cnot-gate\cite{GC99a}.  The input to
the \cnot\ appearing in the circuit of Fig.~\ref{fig:postteleportation} is
independent of both input wires \wire{w}\ and \wire{w'} (they are just
two half \textsc{epr}-pairs).

The sub-protocol for the evaluation of \cnot\ simply consists in
executing the circuit of Fig.~\ref{fig:postteleportation} without the
decryption part (i.e., the part inside the dotted rectangle).  The
state $\ket{\xi} := (\Idt_{\reg{A}} \tensor CNOT \tensor
\Idt_{\reg{B}}) \ket{\Psi_{0,0}} \ket{\Psi_{0,0}}$ can be prepared by
one party.  We let the holder of the {\em control wire} (i.e., \aprots\
in Fig.~\ref{fig:postteleportation}) prepare $\ket{\xi}$ before
sending its two rightmost registers to the other party.  The
decryption in the dotted-rectangle is used to update the encryption
keys according to the measurement outcomes $(a_x,a_z,b_x,b_z)$:
\begin{eqnarray*}
\cnot{} &:&  
K_{\aprots}^{j+1}(\wire{w})
:=
(X_{\aprots}^j(\wire{w})\oplus a_x,Z_{\aprots}^j(\wire{w})\oplus
a_z)\enspace ,\\
& & 
K_{\bprots}^{j+1}(\wire{w}) 
:=
(X_{\bprots}^i(\wire{w}),Z_{\bprots}^j(\wire{w})\oplus b_z)\enspace ,\\
& & K_{\aprots}^{j+1}(\wire{w'}) 
:= 
(X_{\aprots}^j(\wire{w'})\oplus a_x, Z_{\aprots}^j(\wire{w'}))\enspace ,\\ 
& & 
K_{\bprots}^{j+1}(\wire{w'}) 
:= 
(X_{\bprots}^j(\wire{w'})\oplus b_x, Z_{\bprots}^j(\wire{w'})\oplus
b_z)\enspace .
\end{eqnarray*}
As for all previous gates, the key updating phase is performed
locally without the need for communication.


\subsection{Evaluation of the \rgate-Gate}\label{sect:rgate}

The only gate left in \ug\ is $G_{j+1}:=\rgate$.  We assume without
loss of generality that \aprots\ owns wire \wire{w}\ upon which
\rgate{}\ is applied (i.e., $\wire{w}\in \hreg{A}_0$).  The subprotocol
needs a call to an ideal \nl\ in order to guarantee privacy during the
key updating process. Observe first that the \rgate-gate commutes with
Pauli encryption operator $Z$.  It means that 
applying the \rgate-gate upon a state encrypted with $Z$ produces the 
correct output state still encrypted with $Z$. However, 
the equality $\rgate{}\cdot X = e^{-i\pi/4}Y\pgate{}\cdot \rgate{}$
tells us that a \pgate{}-gate should be applied for the decryption of
the output when the input has been encrypted using $X$. This breaks
the invariant that wires after each gate are all encrypted by Pauli
operators. We remove the \pgate{}-gate by converting it into a
sequence of Pauli operators.

\fullv{Suppose \aprot's wire \wire{w}\ is encrypted as usual by shared keys
$K_{\aprots}^{j}(\wire{w}) :=
(X_{\aprots}^j(\wire{w}),Z_{\aprots}^j(\wire{w}))$, and
$K_{\bprots}^{j}(\wire{w}) :=
(X_{\bprots}^j(\wire{w}),Z_{\bprots}^j(\wire{w}))$. }{}  Ignoring an irrelevant global phase, the result of
applying \rgate{}\ on wire \wire{w}\ is
\begin{align}\label{comr}
\begin{split} 
\rgate{} &
Z^{Z_{\aprots}^j(\wire{w}) \oplus Z_{\bprots}^i(\wire{w})} X^{X_{\aprots}^j(\wire{w}) \oplus
   X_{\bprots}^j(\wire{w})}
  = \\
 & Z^{Z_{\aprots}^j(\wire{w}) \oplus Z_{\bprots}^j(\wire{w})\oplus X_{\aprots}^j(\wire{w}) \oplus X_{\bprots}^j(\wire{w})}
X^{X_{\aprots}^j(\wire{w}) \oplus X_{\bprots}^j(\wire{w})}
\pgate^{X_{\aprots}^j(\wire{w}) \oplus X_{\bprots}^j(\wire{w})} \rgate{} \enspace ,
\end{split}\end{align}
\fullv{
}{}
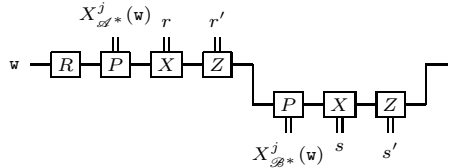
\begin{wrapfigure}{r}{0.55\textwidth} 
  \centering
  \mbox{\scriptsize \Qcircuit @C=1em @R=.7em {
                            &          & \ustick{{X_{\aprots}^j(\wire{w})}} & \ustick{r} & \ustick{r'} &      & & & & &     & \\
      \lstick{\wire{w}} & \gate{R} & \gate{P}\cwx & \gate{X}\cwx & \gate{Z}\cwx & \qw  & & & & & \qw & \\
                            &          &          &          &          & \qwx & \gate{P} & \gate{X} & \gate{Z} & \qwx \qw \\
                            &          &          &          &  &  & 
\dstick{{X_{\bprots}^j(\wire{w})}}\cwx & \dstick{s}\cwx & \dstick{s'}\cwx\\
                            & 
    } 
    }
  \small{\caption{Implementation of the \rgate{}-gate.}
  \label{fig:postrotation}}
  \vspace{-20pt}
\end{wrapfigure}
To remove the \pgate{}-gate, we let each party remove his part of
$\pgate{}^{X_{\aprots}^j(\wire{w}) \oplus X_{\bprots}^j(\wire{w})}$ in
a private interactive process.  To do this, \aprots\ picks random bits
$r$ and $r'$, and \bprots\ picks random bits $s$ and $s'$.  \aprots\
applies the operator $X^{r}Z^{r'}P^{X_{\aprots}^i(\wire{w})}$ and
sends the resulting quantum state to \bprots. \bprots\ applies the
operator $X^{s}Z^{s'}P^{ X_{\bprots}^j(\wire{w})}$ and sends the
result back to \aprots. The resulting protocol is shown in
Fig.~\ref{fig:postrotation}.  It starts with \aprots\ applying
\rgate{}\ upon the encrypted state before the one-round interactive
process described above starts.

After \aprots's application of \rgate{}, the resulting state is as
described on the right-hand side of (\ref{comr}). At the end of the
process (i.e., circuit of Fig.~\ref{fig:postrotation}), the encryption
becomes:
\begin{align}\label{lg}
\begin{split}
Z^{s'}& X^{s} \pgate{}^{X_{\bprots}^j(\wire{w})}Z^{r'} 
X^{r} \pgate{}^{X_{\aprots}^j(\wire{w})} \\
&
Z^{Z_{\aprots}^j(\wire{w}) \oplus Z_{\bprots}^j(\wire{w})\oplus X_{\aprots}^j(\wire{w}) \oplus X_{\bprots}^j(\wire{w})}
X^{X_{\aprots}^j(\wire{w}) \oplus X_{\bprots}^j(\wire{w})}
\pgate{}^{X_{\aprots}^j(\wire{w}) \oplus X_{\bprots}^j(\wire{w})}\enspace .
\end{split}
\end{align}
\fullv{Since $Z$ and \pgate{}\ commute and $\pgate\cdot X=X Z\cdot\pgate$, we 
can re-write (\ref{lg}) (i.e., up to an irrelevant phase factor)
as 
\begin{align*}
\begin{split}
Z^{s'\oplus r' \oplus r\cdot X_{\bprots}^j(\wire{w}) } & 
X^{s\oplus r}\pgate{}^{X_{\aprots}^j(\wire{w})+X_{\bprots}^j(\wire{w})} \\ 
& 
 Z^{Z_{\aprots}^j(\wire{w}) \oplus Z_{\bprots}^j(\wire{w})\oplus X_{\aprots}^j(\wire{w}) \oplus X_{\bprots}^j(\wire{w})}
X^{X_{\aprots}^j(\wire{w}) \oplus X_{\bprots}^j(\wire{w})
      X_{\aprots}^j(\wire{w}) \oplus X_{\bprots}^j(\wire{w})}\enspace .
\end{split}
\end{align*}
Using the fact that for $a,b\in\{0,1\}$, $\pgate{}^{a+b}=Z^{ab}\pgate{}^{a\oplus b}$,
the previous equation can be re-written as
\begin{equation}\label{lg3}
\begin{split} 
Z^{s'\oplus r' \oplus Z_{\aprots}^j(\wire{w})\oplus Z_{\bprots}^j(\wire{w})\oplus X_{\aprots}(\wire{w})\oplus X_{\bprots}(\wire{w})\oplus (r\oplus X_{\aprots}^j(\wire{w}))\cdot X_{\bprots}^j(\wire{w}) } & \\ 
 X^{s\oplus r} 
\pgate{}^{X_{\aprots}^j(\wire{w})\oplus X_{\bprots}^j(\wire{w})} 
X^{X_{\aprots}^j(\wire{w}) \oplus X_{\bprots}^j(\wire{w})} &
\pgate{}^{X_{\aprots}^j(\wire{w}) \oplus X_{\bprots}^j(\wire{w})}.
\end{split}
\end{equation}
Moving the leftmost \pgate{}-gate to the right results in Pauli encryption,}{Now, we use the fact that
$Z$ commutes with \pgate{}\ and $\pgate\cdot X=X Z\cdot\pgate$. In addition, since 
for $a,b\in\{0,1\}$, $\pgate{}^{a+b}=Z^{ab}\pgate{}^{a\oplus b}$ we re-write (\ref{lg}) as}
\begin{equation}\label{lg4}
\begin{split}
Z^{s'\oplus r'\oplus X_{\aprots}^j(\wire{w}) \oplus X_{\bprots}^j(\wire{w})\oplus Z_{\aprots}^j(\wire{w})\oplus Z_{\bprots}^j(\wire{w})\oplus (r\oplus X_{\aprots}^j(\wire{w}))\cdot X_{\bprots}^j(\wire{w}) } \\ X^{s\oplus r\oplus X_{\aprots}^j(\wire{w}) \oplus X_{\bprots}^j(\wire{w})} \enspace .
\end{split}
\end{equation}
Encryption (\ref{lg4}) is not a proper additive sharing since the
$Z$-operator depends on $(r\oplus X_{\aprots}^j(\wire{w}))\cdot
X_{\bprots}^j(\wire{w})$; the logical {\sc and} between a value known
only by \aprots\ (i.e., $r\oplus X_{\aprots}^j(\wire{w})$) and a value
known only by \bprots\ (i.e., $X_{\bprots}^j(\wire{w})$).
\begin{wrapfigure}{r}{0.4\textwidth}
\vspace{-15pt}\hspace{0.5in}
\centering 
  \mbox{\scriptsize \Qcircuit @C=1em @R=.7em {
      \lstick{{r\oplus X_{\aprots}^j(\wire{w})}} \ar[r] &
      \multigate{1}{\mbox{\nl}} \ar[r] & \rstick{\alpha} \\
      \lstick{X_{\bprots}^j(\wire{w})} \ar[r] & \ghost{\mbox{\nl}} \ar[r] & 
           \rstick{\beta}\\
    }}
 \small{\caption{
$\alpha\oplus\beta=(r\oplus X_{\aprots}^j(\wire{w}))\cdot X_{\bprots}^{j}(\wire{w})$ from an \nl.}
  \label{fig:resharingand}}
\vspace{-10pt}
\end{wrapfigure}
To get back to an additive sharing, \aprots\ and \bprots\ can simply
call the \nl\ once with inputs $r\oplus X_{\aprots}^j(\wire{w})$ and
$X_{\bprots}^j(\wire{w})$ respectively as depicted in
Fig.~\ref{fig:resharingand}. After this, \aprots\ and \bprots\ share a
proper encryption of the resulting state. The new encryption key for
\aprots's wire $\wire{w}$ becomes:
\begin{eqnarray*}
\rgate &:& 
K_{\aprots}^{j+1}(\wire{w})
:=
(r\oplus X_{\aprots}^j(\wire{w}),r'\oplus \alpha \oplus 
Z_{\aprots}^j(\wire{w})\oplus  X_{\aprots}^j(\wire{w}))\ ,\\ 
& & 
K_{\bprots}^{j+1}(\wire{w})
:=
(s\oplus X_{\bprots}^j(\wire{w}),s'\oplus \beta \oplus
Z_{\bprots}^j(\wire{w})\oplus X_{\bprots}^j(\wire{w}))\ .
\end{eqnarray*}

\subsection{On the Necessity of Swapping Privately}

One may ask whether relying upon \swap\ is necessary for the protocol
to be private against specious adversaries. For instance, what would happen if one party
announces the encryption keys before the other party? 
We now show that as soon as one party gets the other party's decryption key before having announced
its own, a specious adversary  can break privacy.

Consider the protocol for a quantum circuit made out of one single \cnot-gate.  
Suppose that \aprots\ holds the control wire \wire{w}\ while \bprots\
holds the target wire \wire{w'}. Suppose also the key-releasing phase first asks
 \bprots\  to announce the encryption keys $\keyb{w}{}$ before \aprots\ announces $\keya{w'}{}$. 
 Suppose \ada's input state is $\ket{0}$.
 
 The adversary \ada\ can now act as follows. \ada\ runs the protocol for \cnot\ without
 performing the Bell measurement until she receives the encryption key $b_z$ from \bprots. 
 Clearly, \ada's behavior is specious up to that point since she could re-produce the 
 honest state by just applying the Bell measurement on her input state \fullv{stored in register
 $\hreg{A}_0$}{}.  
 However,  given $b_z$ she could also
 in principle compute the \cnot\ upon any input state of her choice. This means that 
 the state she holds after $b_z$ has been announced and before applying her Bell measurement
 contains information about \bprots's input.
%
%
 On the one hand, 
 when \ada's input state is $\ket{0}$ no information whatsoever on \bprots's
 input state should be available to her (i.e., in this case \cnot\ behaves like the
 identity). On the other hand, had her input state been 
 $\ket{-}$,  information about  \bprots's state would have become available since
 the control and target wires exchange their roles when the input states are in the Hadamard basis. 
 However,
 when \ada's input state is $\ket{0}$, 
 any simulation of her view can only call the ideal functionality  with
 input state $\ket{0}$.\delete{and  no information about \bprots's state
 is available to the simulator.} It follows that no simulator can 
 reproduce   \ada's state right after the announcement
 of $b_z$.

 \fullv{}{ 
 \section{Proof of Privacy} 
 \subsubsection{Privacy of the Evaluation Phase.}
 We start by showing privacy of protocol $\pup{U}^{\qop{O}}=(\aprots,\bprots,n_U+1)$
 at all steps $1\leq i \leq n_U-1$ occurring during 
the {\em evaluation  phase}
of quantum  
circuit $C_U$ implementing $U$ with $u$ gates in \ug. 
The last step of the evaluation phase is $n_U-1$ since only
one oracle call is left to complete the execution.
This phase is the easy part of the simulation since 
all transmissions  are
independent of the joint input state $\rhoin \in \dens{\hreg{A}_0\otimes\hreg{B}_0\otimes \hreg{R}}$.
The lemma below can easily be proven and provides a 
perfect simulation of  any adversary's view generated
during the evaluation of any gate in $C_U$. No call to the ideal functionality for $U$
is required.

\begin{lemma}\label{induct} 
$\pup{U}^{\qop{O}}=(\aprots,\bprots,n_U+1)$
admits a simulator $\simulator{\ada}$ for any adversary \ada\ (not necessarily specious)
that does not
call the ideal functionality for $U\in \unit{\hreg{A}_0 \otimes \hreg{B}_0}$ such that
for any joint input state $\rhoin \in \dens{\hreg{A}_0 \otimes\hreg{B}_0\otimes \hreg{R}}$, 
every $1\leq i \leq n_U-1$:
\begin{equation*}
\Delta\left(\view{\ada}{\rhoin}{i},\trace[\hreg{B}_i]{\astatei{i}{\ada}{\rhoin}} \right) = 0 \enspace .
\end{equation*}
The same holds against any adversary \adb.
\end{lemma}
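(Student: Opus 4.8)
The plan is to write down an explicit simulator and then reduce the statement to a single structural fact: throughout the evaluation phase every message the honest party sends is independent of that party's input, so that tracing out the honest party leaves the adversary's reduced state unchanged when we feed the honest party a dummy input.

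I would treat $\ada$ first (the case of $\adb$ is symmetric). For $1\le i\le n_U-1$ define $\simul{i}:\linop{\hreg{A}_0}\mapsto\linop{\tilde{\hreg{A}}_i}$ to be the operation that adjoins a fresh register in state $\proj{0}^{\hreg{B}_0}$, runs $\pup{U}^{\qop{O}}$ with $\ada$ and the honest $\bprots$ internally for $i$ steps — the simulator plays $\bprots$ itself and also carries out the oracle calls occurring in the evaluation phase (communication oracles and \nl-boxes), which it can do since it holds the registers of both parties — and outputs the content of $\ada$'s registers. Setting $q\equiv 0$ makes $\simulator{\ada}$ a simulator that never queries the ideal functionality for $U$, and conditions \ref{cc1}--\ref{cc2} of Definition~\ref{sim} hold by construction; the remaining $\simul{n_U},\simul{n_U+1}$ may be set arbitrarily since the lemma asserts nothing about them. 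Because $\simul{i}$ acts only on $\hreg{A}_0$ and uses its private $\hreg{B}_0$-register, one reads off directly that $\view{\ada}{\rhoin}{i}=\trace[\hreg{B}_i]{\astatei{i}{\ada}{(\trace[\hreg{B}_0]{\rhoin})\otimes\proj{0}^{\hreg{B}_0}}}$ for every $\rhoin$.

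It then remains to show that $\trace[\hreg{B}_i]{\astatei{i}{\ada}{\rhoin}}$ depends on $\rhoin$ only through $\trace[\hreg{B}_0]{\rhoin}$, i.e.\ that replacing $\bprots$'s honest input by $\proj{0}^{\hreg{B}_0}$ does not change $\ada$'s view at any step $i\le n_U-1$. I would prove this by induction on the step index, with invariant: the global state restricted to $\ada$'s registers together with $\hreg{R}$ coincides for $\rhoin$ and for its $\hreg{B}_0$-erased variant. The inductive step comes down to the fact that every message $\bprots$ sends $\ada$ in the evaluation phase has an input-independent form. (i) In a nonlocal \cnot\ in which $\bprots$ holds the control wire, $\bprots$ transmits two registers of the fixed state $\ket{\xi}=(\Idt\otimes\cnot\otimes\Idt)\ket{\Psi_{0,0}}\ket{\Psi_{0,0}}$, which depends on nothing. (ii) In an \rgate-gate on a wire $\wire{w}$ of $\aprots$, $\bprots$ returns the qubit $\hreg{Q}$ it received after applying $X^sZ^{s'}\pgate^{X_{\bprots}^j(\wire{w})}$ with $s,s'$ uniformly random; since $\aprots$ never learns $(s,s')$ during the evaluation phase — the post-\rgate\ key update gives $\aprots$ a share built from $r,r',\alpha$ but not from $s,s'$, and $\alpha$ is uniform and independent of $s,s'$ — the Pauli twirl $\tfrac{1}{4}\sum_{P}(P\otimes\Idt)M(P\otimes\Idt)=\tfrac{1}{2}\Idt_{\hreg{Q}}\otimes\trace[\hreg{Q}]{M}$, with $P$ ranging over the four Paulis on $\hreg{Q}$, shows that from $\aprots$'s side this returned qubit is maximally mixed and in tensor product with all her other registers and with $\hreg{R}$, hence exactly what the simulator produces. (iii) The outputs $\aprots$ gets from an \nl-box are a computational-basis measurement of the register she submitted together with a uniformly random bit $\alpha$, again independent of $\bprots$'s input; local gates produce no message at all. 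One also has to note that the classical data $\bprots$ uses to form these messages — the bits $X_{\bprots}^j(\wire{w})$, the Bell-measurement outcomes, the pads $s,s'$ — are drawn from $\bprots$'s private randomness only and not from $\bprots$'s input or from anything $\aprots$ sent, so the honest $\bprots$'s response functions in the evaluation phase are genuinely input-independent. This closes the induction and yields $\Delta(\view{\ada}{\rhoin}{i},\trace[\hreg{B}_i]{\astatei{i}{\ada}{\rhoin}})=0$; the claim for $\adb$ is obtained by interchanging $\aprots$ and $\bprots$.

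The delicate point is (ii): one must confirm that within the evaluation phase $\aprots$ cannot recover $(s,s')$ and thereby undo the random pad, so that the twirled qubit really decouples from her view, and — more generally, for any wire held by $\aprots$ on which a nonlocal \cnot\ has already acted — that the half of the one-time-pad key still missing on $\aprots$'s side supplies exactly the dephasing needed to erase every dependence on $\bprots$'s input. Everything else is bookkeeping with the key-update rules of Sects.~\ref{cliffeval}--\ref{sect:rgate}.
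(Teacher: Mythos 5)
Your proof is correct and follows essentially the same route as the paper's: both rest on the observation that every message the honest party sends during the evaluation phase is input-independent (a half of the fixed state $\ket{\xi}$ for the nonlocal \cnot, a freshly one-time-padded and hence maximally mixed, decoupled qubit for the \rgate, and a uniform bit from the \nl). The only cosmetic difference is that you run the honest party internally on a dummy input and establish an input-independence invariant by induction on steps, whereas the paper's simulator synthesizes these messages directly gate by gate; the underlying case analysis is identical.
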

 
 \subsubsection{Privacy of the Key-Releasing Phase.}
 Before proving privacy of the key-releasing phase, we need the following lemma 
 establishing that at the end of the protocol, 
specious adversaries must leave their extra working registers (used to implement the attack) 
independent of the joint input state.
In other words, no extra information is available to the adversary at the very end of any correct  protocol. Hence, if the adversary can break the privacy of a protocol, then he must ``rush'' to do so before the last step.

\begin{lemma}[Rushing Lemma]\label{rushing}
	Let $\puq{U} = (\aprot,\bprot, n)$ be a correct protocol for the two party evaluation of $U$.  Let \ada\ be any $\varepsilon$--specious adversary in $\puq{U}$. Then, there exists an isometry $T : \tilde{\hreg{A}}_n \rightarrow \hreg{A}_n \otimes \widehat{\hreg{A}}$ and a mixed state $\tilde{\varrho} \in \dens{\widehat{\hreg{A}}}$ such that for all joint input states $\rhoin \in \dens{\hreg{A}_0\otimes\hreg{B}_0 \otimes \hreg{R}}$,
\begin{equation*}
	\Delta\left((T \otimes \Idt_{\hreg{B}_n \otimes \hreg{R}})\cdot\left([\ada\compose\bprot](\rhoin)\right),\tilde{\varrho}\otimes 
	(U \otimes \Idt_{\hreg{R}})\cdot\rhoin \right)\leq 12 \sqrt{2\varepsilon} \enspace .
\end{equation*} 
The same also applies to any $\varepsilon$--specious adversary $\adb$.
\end{lemma}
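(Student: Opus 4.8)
The plan is to exploit the definition of $\varepsilon$--speciousness at the last step $i=n$ together with the correctness of the protocol. At step $n$, the specious adversary \ada\ has a transform $\trans{n}:\linop{\tilde{\hreg{A}}_n}\mapsto\linop{\hreg{A}_n}$ such that $(\trans{n}\otimes\Idt_{\hreg{B}_n\otimes\hreg{R}})(\astatei{n}{\ada}{\rhoin})$ is $\varepsilon$--close to the honest final state $\hstatei{n}{\rhoin}$, which by correctness is itself $\varepsilon$--close to $(U\otimes\Idt_{\hreg{R}})\cdot\rhoin$. So after applying $\trans{n}$, the adversary's registers $\hreg{A}_n$, together with Bob's $\hreg{B}_n$ and the reference $\hreg{R}$, hold (up to $2\varepsilon$ in trace distance) exactly the ideal output state $(U\otimes\Idt_{\hreg{R}})\cdot\rhoin$; in particular \ada's leftover workspace has been discarded or decoupled by $\trans{n}$. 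What I want is the reverse direction: an \emph{isometry} $T$ that \emph{reconstructs} $[\ada\compose\bprot](\rhoin)$ from the ideal output (plus a fixed ancilla $\tilde\varrho$).

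The key tool is Uhlmann's theorem, applied on the purified level. First I would purify everything: assume w.l.o.g.\ that \ada\ and \bprot\ act by isometries (as in the proof of Theorem~\ref{noswap}), so that $\astatei{n}{\ada}{\rhoin}$ extended by a purification of $\rhoin$ on a fresh reference is pure on $\tilde{\hreg{A}}_n\otimes\hreg{B}_n\otimes\hreg{R}\otimes\hreg{R}'$. The honest final state is then also pure on $\hreg{A}_n\otimes\hreg{B}_n\otimes\hreg{R}\otimes\hreg{R}'$ and, by correctness, $\sqrt{2\varepsilon}$--close (in trace distance of the pure states, via Fuchs--van~de~Graaf) to $\ket{\Omega}:=(U\otimes\Idt)\ket{\Psi_{\rhoin}}$. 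The crucial observation is that the reduction of $(\trans{n}\otimes\Idt)(\astatei{n}{\ada}{\rhoin})$ to $\hreg{B}_n\otimes\hreg{R}\otimes\hreg{R}'$ is close to the reduction of $\ket{\Omega}$ to those systems, \emph{and} — because speciousness holds \emph{for every input state} $\rhoin$ — one argues (this is where a little work is needed) that the ancilla register of \ada\ after $\trans{n}$ carries no information about $\rhoin$, i.e.\ its state $\tilde\varrho$ is (nearly) independent of the input. Combining, $(\trans{n}\otimes\Idt_{\hreg{B}_n\hreg{R}\hreg{R}'})(\astatei{n}{\ada}{\rhoin})\approx \tilde\varrho\otimes (U\otimes\Idt)\cdot\rhoin$ (as a state on $\hreg{A}_n\otimes\widehat{\hreg{A}}\otimes\hreg{B}_n\otimes\hreg{R}\otimes\hreg{R}'$, where $\widehat{\hreg A}$ is the ancilla register carrying $\tilde\varrho$). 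Now both $[\ada\compose\bprot](\rhoin)$ — i.e.\ $\astatei{n}{\ada}{\rhoin}$ — and $\tilde\varrho\otimes(U\otimes\Idt)\cdot\rhoin$ are (close to) purifications of \emph{the same} state on $\hreg{B}_n\otimes\hreg{R}$, so Uhlmann gives an isometry $T:\tilde{\hreg{A}}_n\to\hreg{A}_n\otimes\widehat{\hreg{A}}$ taking one to the other up to an error controlled by the accumulated closeness parameters; tracking constants through the triangle inequality and the quadratic loss from Fuchs--van~de~Graaf yields the stated bound $12\sqrt{2\varepsilon}$.

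I would organize the write-up as: (1) purify the players and fix a purification $\ket{\Psi_{\rhoin}}$ of the input; (2) chain speciousness at step $n$ with $\varepsilon$--correctness to get $(\trans{n}\otimes\Idt)(\astatei{n}{\ada}{\rhoin})$ within $2\varepsilon$ of the ideal output; (3) show the post-$\trans{n}$ ancilla state $\tilde\varrho$ is input-independent — here the trick is to compare two different inputs $\rhoin,\rhoin'$ and use that both must map under $\trans{n}$ close to their respective ideal outputs, whose $\hreg{B}_n\hreg{R}$-marginals determine the $\hreg{A}_n$-side up to local isometry, forcing the residual ancilla to agree; (4) invoke Uhlmann's theorem to get $T$ reversing $\trans{n}$ approximately; (5) bookkeeping of the error terms. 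The main obstacle is step~(3): making precise and quantitative the claim that a specious adversary cannot smuggle input-dependent information into its private ancilla at the final step. This is really the heart of the "rushing" phenomenon — the adversary must dump any extra correlation \emph{before} the end — and getting clean constants there (rather than a blowup) is the delicate part; everything else is standard Uhlmann/Fuchs--van~de~Graaf manipulation.
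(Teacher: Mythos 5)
Your skeleton matches the paper's (speciousness at step $n$ chained with correctness, a Stinespring dilation $T$ of $\trans{n}$ keeping the discarded register $\widehat{\hreg{A}}$, purity of the ideal output forcing the extension to be near-product, and a final triangle-inequality bookkeeping), and you have correctly isolated the crux: proving that the residual state $\tilde{\varrho}$ on $\widehat{\hreg{A}}$ is (nearly) independent of $\rhoin$. But the method you sketch for that step does not work. Comparing two inputs $\rhoin,\rhoin'$ separately only yields, for each input, \emph{some} state $\tilde{\varrho}_{\rhoin}$ with $(T\otimes\Idt)\cdot[\ada\compose\bprot](\rhoin)\approx\tilde{\varrho}_{\rhoin}\otimes (U\otimes\Idt_{\hreg{R}})\cdot\rhoin$; since the $\hreg{B}_n\hreg{R}$-marginals of the two ideal outputs are \emph{different} states, purification uniqueness gives no relation whatsoever between $\tilde{\varrho}_{\rhoin}$ and $\tilde{\varrho}_{\rhoin'}$. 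Nor does feeding in the classical mixture $\tfrac12\rhoin+\tfrac12\rhoin'$ help: an ancilla classically correlated with which branch occurred is perfectly consistent with speciousness on the mixture, so no contradiction arises. A specious adversary could, a priori, keep a classical record of its input in $\widehat{\hreg{A}}$, and nothing in your argument rules this out.

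The paper's missing idea is a \emph{coherent superposition} trick. For two pure inputs $\ket{\psi_1},\ket{\psi_2}$ one adjoins a fresh reference qubit $\hreg{R}_2$ and runs the argument once on the single pure input $\ket{\psi}=\tfrac1{\sqrt2}(\ket{\psi_1}\ket{1}+\ket{\psi_2}\ket{2})$. Speciousness plus correctness for \emph{this} input, and the fact that any extension of a pure state is a product, force one common $\tilde{\varrho}$ such that $(T\otimes\Idt)\cdot[\ada\compose\bprot](\proj{\psi})$ is $2\sqrt{2\varepsilon}$-close to $\tilde{\varrho}\otimes(U\otimes\Idt)\cdot\proj{\psi}$; if $\widehat{\hreg{A}}$ were correlated with the branch, the $\hreg{A}_n\hreg{B}_n\hreg{R}\hreg{R}_2$-marginal would decohere in the $\{\ket{1},\ket{2}\}$ basis and could not be close to the coherent pure ideal output. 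Projecting $\hreg{R}_2$ onto $\ket{1}$ and onto $\ket{2}$ (a trace-non-increasing map, costing a factor $2$) shows the \emph{same} $\tilde{\varrho}$ works for $\ket{\psi_1}$ and $\ket{\psi_2}$ individually up to $4\sqrt{2\varepsilon}$; varying the second input over all $\ket{\psi_3}$ pins $\tilde{\varrho}$ down to within $8\sqrt{2\varepsilon}$, and the triangle inequality gives the stated $12\sqrt{2\varepsilon}$. Without this superposition step your proof cannot be completed as proposed.
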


\begin{proof}
	We shall only prove the statement for an $\varepsilon$--specious \ada; the statement for an $\varepsilon$--specious \adb\ is identical. Furthermore, by convexity, it is sufficient to prove the theorem for pure $\rhoin$.

Consider any pair of pure input states $\ket{\psi_1}$ and $\ket{\psi_2}$ in $\hreg{A}_0 \otimes \hreg{B}_0 \otimes \hreg{R}$. Now, let $\hreg{R}' := \hreg{R} \otimes \hreg{R}_2$, where $\hreg{R}_2 = \Span\{\ket{1}, \ket{2}\}$ represents a single qubit, and define the state $\ket{\psi} := \frac{1}{\sqrt{2}}(\ket{\psi_1}\ket{1} + \ket{\psi_2}\ket{2}) \in \hreg{A}_0 \otimes \hreg{B}_0 \otimes \hreg{R}'$. Note that $\trace[\hreg{R}_2]{\proj{\psi}} = \frac{1}{2} \ket{\psi_1}\bra{\psi_1} + \frac{1}{2} \ket{\psi_2}\bra{\psi_2}$. Due to the correctness of the protocol and to the speciousness of $\ada$, there exists a quantum operation $\trans{n} : \linop{\tilde{\hreg{A}_n}} \rightarrow \linop{ \hreg{A}_n}$ such that
\begin{equation*}
	\Delta\left( (\trans{n} \otimes \Idt_{\linop{\hreg{B}_n \otimes \hreg{R}'}})([\ada \compose \bprot](\proj{\psi})) , (U \otimes \Idt_{\hreg{R}'})\cdot \proj{\psi}  \right) \leq 2\varepsilon\enspace .
\end{equation*}
Now, consider any isometry $T : \tilde{\hreg{A}}_n \rightarrow \hreg{A}_n \otimes \widehat{\hreg{A}}$ such that $\trans{n}(\sigma) = \trace[\widehat{\hreg{A}}]{T \sigma T^\dagger}$ for every $\sigma \in \linop{\tilde{\hreg{A}}_n}$ --- in other words, any operation that implements $\trans{n}$ while keeping any information that would otherwise be destroyed in $\widehat{\hreg{A}}$. By Uhlmann's theorem, there must exist a state $\tilde{\varrho} \in \dens{\widehat{\hreg{A}}}$ such that
\begin{equation*}
	\Delta\left((T \otimes \Idt_{\hreg{B}_n \otimes \hreg{R}'})\cdot \left([\ada\compose\bprot](\ket{\psi}\bra{\psi})\right),
	\tilde{\varrho}\otimes \left((U \otimes \Idt_{\hreg{R}'})\cdot\proj{\psi}\right) \right)\leq 2 \sqrt{2 \varepsilon}\enspace .
\end{equation*}
Now, the trace distance is monotonous under completely positive, trace non-increasing maps. In particular, we can apply the projector $P_1 = \Idt_{\linop{\hreg{A}_n \otimes \hreg{B}_n \otimes \hreg{R}}} \otimes \ket{1}\bra{1}$ to both states in the above trace distance and the inequality will still hold. In other words, we project both states onto $\ket{1}$ on $\hreg{R}_2$, thereby turning $\ket{\psi}\bra{\psi}$ into $\frac{1}{2} \ket{\psi_1}\bra{\psi_1}$. Factoring out the $\frac{1}{2}$, we get that
\begin{equation*}
	\Delta\left((T \otimes \Idt_{\hreg{B}_n \otimes \hreg{R}})\cdot\left([\ada\compose\bprot](\proj{\psi_1})\right),
	\tilde{\varrho}\otimes \left( (U \otimes \Idt_{\hreg{R}})\cdot \proj{\psi_1} \right) \right)\leq 4 \sqrt{2 \varepsilon}\enspace .
\end{equation*} 
Likewise, projecting onto $\ket{2}$ yields
\begin{equation*}
	\Delta\left((T \otimes \Idt_{\hreg{B}_n \otimes \hreg{R}})\cdot\left([\ada\compose\bprot](\proj{\psi_2})\right),
	\tilde{\varrho}\otimes \left( (U \otimes \Idt_{\hreg{R}})\cdot \proj{\psi_2} \right) \right)\leq 4 \sqrt{2 \varepsilon}\enspace .
\end{equation*} 
Our only problem at this point is that $\tilde{\varrho}$ in principle depends on $\ket{\psi_1}$ and $\ket{\psi_2}$. However, repeating the above argument with $\ket{\psi_1}$ and $\ket{\psi_3}$ for any $\ket{\psi_3}$ will yield a $\tilde{\varrho}'$ with 
\begin{equation*}
	\Delta\left((T \otimes \Idt_{\hreg{B}_n \otimes \hreg{R}})\cdot \left([\ada\compose\bprot](\ket{\psi_1}\bra{\psi_1})\right),
	\tilde{\varrho}'\otimes \left( (U \otimes \Idt_{\hreg{R}})\cdot \proj{\psi_1} \right) \right)\leq 4 \sqrt{2 \varepsilon}
\end{equation*} 
and hence, by the triangle inequality, $\Delta(\tilde{\varrho}, \tilde{\varrho}') \leq 8 \sqrt{2\varepsilon}$. Therefore, for any state $\ket{\varphi} \in \hreg{A}_0 \otimes \hreg{B}_0 \otimes \hreg{R}$, there exists a state $\tilde{\rho} \in \widehat{\hreg{A}}$ with $\Delta(\tilde{\rho}, \tilde{\varrho}) \leq 8 \sqrt{2\varepsilon}$ such that
\begin{equation*}
	\Delta\left((T \otimes \Idt_{\hreg{B}_n \otimes \hreg{R}})\cdot \left([\ada\compose\bprot](\ket{\varphi}\bra{\varphi})\right),
	\tilde{\rho}\otimes \left( (U \otimes \Idt_{\hreg{R}})\cdot \proj{\varphi} \right) \right)\leq 4 \sqrt{2 \varepsilon} \enspace .
\end{equation*} 
The lemma then follows by the triangle inequality:
\begin{multline*}
	\Delta\left((T \otimes \Idt_{\hreg{B}_n \otimes \hreg{R}})\cdot \left([\ada\compose\bprot](\proj{\varphi}\right),
	\tilde{\varrho}\otimes \left( (U \otimes \Idt_{\hreg{R}})\cdot \proj{\varphi} \right) \right)\\
	\begin{split}
		&\leq \Delta\left((T \otimes \Idt_{\hreg{B}_n \otimes \hreg{R}})\cdot \left([\ada\compose\bprot](\proj{\varphi})\right),
		\tilde{\rho}\otimes \left( (U \otimes \Idt_{\hreg{R}})\cdot \proj{\varphi} \right) \right) + \Delta(\tilde{\rho}, \tilde{\varrho})\\
		&\leq 4 \sqrt{2\varepsilon} + 8 \sqrt{2\varepsilon} = 12 \sqrt{2\varepsilon} \enspace .
	\end{split}
\end{multline*}
\qed
\end{proof}

In order to conclude the privacy of $\pup{U}^{\qop{O}}$, families $\simulator{\ada}$
and $\simulator{\adb}$  
need one more simulator each: $\simul{n_U}\in  \simulator{\ada}$
and $\simul{n_U}' \in  \simulator{\adb}$  corresponding
to the simulation of the key-releasing phase. 
This time, these simulators 
need to query the ideal functionality for $U$ and also
need the adversary to be specious.  
We show that privacy of the key-releasing phase 
follows from the ``Rushing Lemma'' (Lemma~\ref{rushing}).
This is the role of the ideal \swap\ to make sure that 
before the adversary  gets the output of the computation, the information
needed by the honest player to recover its own output has been given away
by the adversary. 

It should be mentioned that we're not explicitly simulating the 
final state of the adversary since simulating the
\swap\ allows also to get \ada's final state  
by simply adding \ada's last quantum operation to the simulated
view. We therefore set step $n_U$ in $\pup{U}^{\qop{O}}$
to be the step reached after the call to \swap. This abuses the notation
a  bit since after \swap, \ada\ and \bprots\ must each apply 
a final quantum operation with no more oracle call. We'll denote
by $\ada_{n_U+1}$ and $\bprots_{n_U+1}$ these last operations
allowing to reconstruct the output of the computation (no comunication).

\begin{lemma}\label{release}
For any $\varepsilon$-specious quantum adversary \ada\ 
against $\pup{U}^{\qop{O}}=(\aprots,\bprots,n_U+1)$, there 
exist  simulators $\simul{n_U}\in \simulator{\ada}$ 
such that for all 
$\rhoin \in \dens{\hreg{A}_0\otimes \hreg{B}_0\otimes\hreg{R}}$,
\begin{equation*}
\Delta\left(\view{\ada}{\rhoin}{n_U},\trace[\hreg{B}_{n_U}]{\astatei{n_U}{\ada}{\rhoin}} \right) \leq  
24\sqrt{2\varepsilon} \enspace .
\end{equation*}
Simulator $\simul{n_U}$ 
calls the ideal functionality for $U$
and can be used directly to simulate  step $n_U+1$ as well. The same holds
for adversary \adb.
\end{lemma}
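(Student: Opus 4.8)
The plan is to assemble $\simul{n_U}$ from the evaluation-phase simulator of Lemma~\ref{induct} followed by a simulation of the single \swap\ call, and then to control the error with the Rushing Lemma. First I would invoke Lemma~\ref{induct} to obtain, from $\ada$'s input alone and \emph{without} any call to the ideal functionality, an exact copy of $\trace[\hreg{B}_{n_U-1}]{\astatei{n_U-1}{\ada}{\rhoin}}$ --- this works for any $\ada$ because every message of the evaluation phase is a half of a fixed \textsc{epr}-pair, a uniformly random classical string, or a wire carried by a uniformly random quantum one-time pad --- and then apply $\ada$'s operation $\ada_{n_U}$ just before the key-releasing call. It remains to simulate the \swap, which strips from $\ada$ the register $\ao{n_U}$ (her shares of \bprots's keys) and hands her $\bo{n_U}$ (holding \bprots's shares of $\ada$'s keys). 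So $\simul{n_U}$ traces out $\ao{n_U}$, then queries the ideal functionality for $U$ (hence $q_{n_U}=1$), obtaining $\ada$'s output wires $\left((U\otimes\Idt_{\hreg{R}})\cdot\rhoin\right)$ restricted to $\hreg{A}_0$ and $\hreg{R}$, samples \bprots's key shares with exactly the distribution (and correlations with $\ada$'s classical data) they have in an honest run --- a purely classical bookkeeping driven only by input-independent randomness and by $\ada$'s inputs to the \nl\ during evaluation of \rgate-gates, all available to the simulator --- writes these shares into $\bo{n_U}$, and replaces $\ada$'s wire registers by a fresh one-time pad of the queried output under the XOR of $\ada$'s own shares with the sampled ones. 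This $\simul{n_U}$ then also serves for step $n_U+1$ by post-composing with $\ada_{n_U+1}$, which involves no oracle.

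Second I would verify the bound $\Delta\!\left(\view{\ada}{\rhoin}{n_U},\,\trace[\hreg{B}_{n_U}]{\astatei{n_U}{\ada}{\rhoin}}\right)\le 24\sqrt{2\varepsilon}$. For an honest $\ada$ the re-encryption step is exact, because after the \swap\ the honest Alice holds precisely a one-time pad of her part of $U\rhoin$ under the (now fully known) combined key, correlated with the just-received shares exactly as the simulator reconstructs; so all the error is produced by $\ada$'s deviation, and that is what Lemma~\ref{rushing} measures. The Rushing Lemma supplies an isometry $T$ and a fixed state $\tilde\varrho$ with $\Delta\!\left((T\otimes\Idt)\cdot[\ada\compose\bprot](\rhoin),\ \tilde\varrho\otimes(U\otimes\Idt)\cdot\rhoin\right)\le 12\sqrt{2\varepsilon}$; tracing out \bprots\ and applying the trace-non-increasing map $\sigma\mapsto T^{\dagger}\sigma T$ (under which $\Delta$ does not grow) shows that the real final view $\trace[\hreg{B}_{n_U+1}]{\astatei{n_U+1}{\ada}{\rhoin}}$ is, up to $12\sqrt{2\varepsilon}$, a fixed isometry applied to ``junk $\otimes$ ideal output''. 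Since the step-$(n_U+1)$ state is obtained from the step-$n_U$ state by $\ada$'s own oracle-free operation $\ada_{n_U+1}$, and $\ada_{n_U+1}$ applied to the honest step-$n_U$ state is, by correctness, the honest output, a triangle inequality through this idealized post-\swap\ form --- one leg from the Rushing bound, a second leg of the same magnitude incurred when transporting the bound back across $\ada_{n_U+1}$ and the re-encryption --- yields $24\sqrt{2\varepsilon}$. The same argument applies verbatim to $\adb$.

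The step I expect to be the real obstacle is the simulation of the \swap\ itself. A \emph{specious} $\ada$ need not actually hold a legal encryption of $U\rhoin$ on its wire registers at step $n_U$, so one cannot literally ``decrypt and re-encrypt''; justifying the substitution of a fresh honest ciphertext, together with \bprots's genuinely distributed key shares, for whatever $\ada$ holds requires combining the $\varepsilon$-speciousness guarantee $\Delta\!\left((\trans{n_U}\otimes\Idt)\astatei{n_U}{\ada}{\rhoin},\hstatei{n_U}{\rhoin}\right)\le\varepsilon$ with Lemma~\ref{rushing}, and carefully tracking the direction of each approximation (the honest state versus the $\trans{n_U}$-image versus the $T^{\dagger}$-image) and the constants. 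This is precisely the place where one uses the fact that the \swap\ has \emph{already} delivered to the honest player the shares it needs to reconstruct its output --- the property isolated in the discussion on the necessity of swapping privately above --- so that $\ada$ can retain no extra input-dependent information beyond what a single call to the ideal $U$ accounts for.
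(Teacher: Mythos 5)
There is a genuine gap, and it sits exactly where you flagged it --- but the issue is more basic than a matter of ``carefully tracking constants.'' Your simulator is not a legal simulator under Definition~\ref{sim}. By that definition, $\simul{n_U}$ with $q_{n_U}=1$ is a map $\linop{\hreg{A}_0}\mapsto\linop{\tilde{\hreg{A}}_{n_U}}$ applied \emph{after} the ideal functionality: it receives only \ada's share of $(U\otimes\Idt_{\hreg{R}})\cdot\rhoin$ and cannot also touch the raw input. Your construction needs both: the raw input $\rhoin$ restricted to $\hreg{A}_0$ in order to rerun the evaluation-phase simulation of Lemma~\ref{induct} and reproduce whatever state the adversary has accumulated by step $n_U-1$, \emph{and} the ideal output in order to paste a fresh one-time pad onto the ``wire registers.'' Since the simulator cannot invert $U$ locally, it cannot recover the input from the output, so the two accesses cannot be merged. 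Moreover, even granting both accesses, the surgical step ``replace \ada's wire registers by a fresh ciphertext'' is undefined for a specious adversary, whose memory $\tilde{\hreg{A}}_{n_U}$ need not contain identifiable wire registers at all; you name this obstacle but leave it unresolved, and resolving it is the entire content of the lemma.

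The paper's proof closes this gap with an idea absent from your proposal: the simulator runs the \emph{real} adversary against the \emph{real} honest \bprots\ on a fixed dummy input $\ket{\phi^*}$, all the way past the \swap\ and through both parties' final local operations, then applies the isometry $T$ from Lemma~\ref{rushing}. The Rushing Lemma guarantees that the resulting state is $12\sqrt{2\varepsilon}$-close to $\tilde{\varrho}\otimes (U\otimes\Idt_{\hreg{R}})\cdot\ket{\phi^*}\bra{\phi^*}$ with $\tilde{\varrho}$ \emph{independent of the input}; this input-independence is precisely what licenses substituting the real ideal output $(U\otimes\Idt_{\hreg{R}})\cdot\rhoin$ for the dummy output while keeping the dummy run's residual registers, after which $(T\tilde{A}_{n_U})^{\dagger}$ rewinds to the post-\swap\ step. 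The bound $24\sqrt{2\varepsilon}$ then falls out of two invocations of the Rushing Lemma (one for the dummy run, one for the real run) joined by the triangle inequality through the common state $\tilde{\varrho}\otimes(\text{ideal output})$, followed by monotonicity of $\Delta$ under the isometry rewind --- not from ``transporting the bound back across $\ada_{n_U+1}$'' as you suggest. Running the full protocol on a dummy input also disposes of your secondary concern about reproducing the correlations between \bprots's swapped key shares and the adversary's quantum memory: those correlations are generated automatically by the genuine interaction rather than by classical bookkeeping.
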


\begin{proof}[sketch]
We only prove privacy against adversary \ada, 
privacy against \adb\ follows directly since the key-releasing phase
is symmetric.  The idea behind the proof is to run 
\ada\ and \bprots\ upon a dummy joint input state until the end
of the protocol. Since the adversary is specious, it can re-produce
the honest state at the end. The Rushing Lemma tells us that
at this point, the output of the computation is essentially in tensor
product with all the other registers. Moreover, the state of all other
registers is independent of the input state upon which the protocol
is executed. The {\em dummy output} can then be replaced
by the output of the ideal functionality for $U$ before \ada\
goes back to the stage reached just after \swap. 


More formally,
we define a simulator $\simul{n_U}\in \simulator{\ada}$ producing 
\ada's view just after the call to \swap. 
Let $\ada_{\swap}\in \lin{\hreg{A}_0}{\tilde{\hreg{A}}_{n_U}}$ and 
$\bprots_{\swap}\in \lin{\hreg{B}_0}{\tilde{\hreg{B}}_{n_U}}$ be the 
quantum operations run by \ada\ and \bprots\  respectively until after \swap\ is executed.
Notice that at this point, \ada's and \bprots's registers
do not  have any further oracle registers  
since no more communication or oracle call will take place.
Let $\tilde{A}_{n_U}\in\lin{\tilde{\hreg{A}}_{n_U}}{\tilde{\hreg{A}}_{n_U+1}\otimes \hreg{Z}}$ be the isometry 
implementing \ada's  last quantum operation  taking
place after the call to \swap\ (and producing her final state) and
let   ${B}_{n_U}\in\lin{{\hreg{B}}_{n_U}}{{\hreg{B}}_{n_U+1}\otimes \hreg{W}}$ be the isometry 
implementing \bprots's  last quantum operation.  Finally,
let $T\in \lin{\tilde{\hreg{A}}_{n_U+1}}{\hreg{A}_{n_U+1}\otimes\hat{\hreg{A}}}$ be the isometry
implementing $\trans{n_U+1}$ as defined in Lemma~\ref{rushing}
(i.e., the transcript produced at the very end of the protocol). 
As usual, let $\rhoin \in \dens{\hreg{A}_0\otimes\hreg{B}_0\otimes\hreg{R}}$ 
be the joint input state. 
The simulator $\simul{n_U}$  performs the following operations:
\begin{enumerate}
\item $\simul{n_U}$ generates 
the quantum state $\sigma(\phi^*)=[\ada_{\swap}\compose\bprots_{\swap}](\proj{\phi^*})
\in \dens{\tilde{\hreg{A}}_{n_U}\otimes \hreg{B}_{n_U}}$
 implementing 
\ada\ interacting with \bprots\ until \swap\  is applied. The execution
is performed  upon a predetermined (dummy) 
arbitrary input state
$\ket{\phi^*} \in \hreg{A}_0\otimes\hreg{B}_0$. \label{s1}

\item $\simul{n_U}$ sets $\sigma'(\phi^*) = (T \tilde{A}_{n_U}\otimes {B}_{n_U})\cdot
\sigma(\phi^*) \in \dens{\hreg{A}_{n_U+1}\otimes \hreg{B}_{n_U+1}
\otimes \hreg{Z}\otimes\hat{\hreg{A}}\otimes\hreg{W}}$.\label{s2}

\item \simul{n_U}\ replaces register $\hreg{A}_{n_U+1}\approx \hreg{A}_0$ 
by \aprots's output
of the ideal functionality for $U$ evaluated upon \rhoin.
That is, \simul{n_U}\ generates the state
$\sigma'(\rhoin)= (U\otimes \Idt_{\hreg{R}})\cdot\rhoin
\otimes \trace[\hreg{A}_{n_U+1}\hreg{B}_{n_U+1}]{\sigma'(\phi^*)}
\in \dens{\hreg{A}_{n_U+1}\otimes \hreg{B}_{n_U+1}\otimes\hreg{R}
\otimes \hreg{Z}\otimes\hat{\hreg{A}}\otimes\hreg{W}}$.
\label{s3}

\item \simul{n_U}\ finally sets $\view{\ada}{\rhoin}{n_U}=\trace[\hreg{B}_{n_U+1}\hreg{W}]
{(T \tilde{A}_{n_U}\otimes \Idt_{\hreg{B}_{n_U+1}\hreg{R}})^{\dagger}\cdot \sigma'(\rhoin)}
\in \dens{\tilde{\hreg{A}}_{n_U}\otimes\hreg{R}}$.
 \label{s4}
\end{enumerate}
Notice that execution of  the ideal \swap\ ensures that 
the keys swapped are independent of each other
and of the joint input state $\rhoin$. This is because
for any input state, all these keys are uniformly distributed
bits if they are outcomes of Bell measurements and otherwise
are set to $0$.  By Lemma~\ref{rushing} and the fact
that \ada\ is $\varepsilon$--specious,
we have:
\[ 
\begin{split}
 \Delta\left( \trace[\hreg{Z}\hat{\hreg{A}}\hreg{W}]{\sigma'(\phi^*)}, \tilde{\varrho}\otimes U\cdot \proj{\phi^*}\right) \leq 
12\sqrt{2\varepsilon} \mbox{ and \hspace{0in}}  &    \\ 
 \Delta\left( (\trans{n_U+1}\otimes\Idt_{\linop{\hreg{B}_{n_U+1}}}) \left([\ada\compose \bprots](\rhoin)\right), \tilde{\varrho}\otimes U\cdot\rhoin\right)  \leq 
12\sqrt{2\varepsilon}\enspace .
\end{split}
\]
It follows using the triangle inequality  that,
\begin{equation}\label{wow}
 \Delta\left( (\trans{n_U+1}\otimes\Idt_{\linop{\hreg{B}_{n_U+1}}}) \left([\ada\compose \bprots](\rhoin)\right),   
\trace[\hreg{Z}\hat{\hreg{A}}\hreg{W}]{\sigma'(\rhoin)}
\right) \leq 24 \sqrt{2\varepsilon}\enspace .
\end{equation}
Using the fact that isometries cannot increase the trace-norm distance and
that $(T\tilde{A}_{n_U})^{\dagger}$ allows \ada\  to go back from the end of the protocol
to the step reached after \swap, 
we get from (\ref{wow}) that
\[
\begin{split}
&\Delta\left(\view{\ada}{\rhoin}{n_U},\trace[\hreg{B}_{n_U}]{\astatei{n_U}{\ada}{\rhoin}} \right)
 = \\ & \Delta\left(
(\trans{n_U+1}\otimes\Idt_{\linop{\hreg{B}_{n_U+1}}}) \left([\ada\compose \bprots](\rhoin)\right),   
\trace[\hreg{Z}\hat{\hreg{A}}\hreg{W}]{\sigma'(\rhoin)}
\right) 
\leq 24\sqrt{2\varepsilon}   \enspace .
\end{split}\]
The proof of the statement follows.
\qed
\end{proof}}

\section{Main Result and Open Questions}

Putting Lemma~\ref{induct} and Lemma~\ref{release} together gives
the desired result:

\begin{theorem}[Main Result]\label{final}
  Protocol $\pup{U}^{\qop{O}}$ is
  statistically private against any statistically specious quantum
  adversary and for any $U\in \unit{\hreg{A}_0\otimes
  \hreg{B}_0}$. If $U$ is in the Clifford group then the only
  non-trivial oracle call in \qop{O}\ is one call to an ideal \swap.
  If $U$ is not in the Clifford group then \qop{O}\ contains an
  additional oracle call to \nl\ for each \rgate-gate in the circuit
  for $U$.
\end{theorem}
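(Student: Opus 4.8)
The strategy is to obtain Theorem~\ref{final} by assembling the two privacy lemmas already in hand, after recording that $\pup{U}^{\qop{O}}$ is correct. I would first check $0$--correctness: propagating the key-update rules of Sections~\ref{sect:crypt}--\ref{sect:rgate} gate by gate shows that just before the key-releasing phase every wire $\wire{w}\in\hreg{A}_0\cup\hreg{B}_0$ is still a single qubit (so the output registers have the same dimensions as the input registers, as required of a protocol for $U$) and carries $(U\otimes\Idt_\hreg{R})\cdot\rhoin$ encrypted as in~(\ref{encrypt}) with the keys $\keya{w}{u},\keyb{w}{u}$ produced after the last gate; the ideal \swap\ then delivers each party its missing share and the local decryptions cancel the pad exactly, up to irrelevant global phases. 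This step uses only the Pauli/Clifford commutation identities displayed in the body together with perfect correctness of the quantum one-time pad, so it introduces no new idea.

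Next I would build the simulator. Fix an $\varepsilon$--specious adversary; by the symmetry of the key-releasing phase it is enough to treat \ada. I would concatenate the simulators of Lemmas~\ref{induct} and~\ref{release}, taking $\simulator{\ada}=\langle(\simul{1},\ldots,\simul{n_U+1}),q\rangle$, where $\simul{1},\ldots,\simul{n_U-1}$ are the evaluation-phase simulators of Lemma~\ref{induct} with $q_i=0$ --- these need not query the ideal functionality, since every transmission during the evaluation phase is independent of $\rhoin$ --- and $\simul{n_U},\simul{n_U+1}$ are the key-releasing simulators of Lemma~\ref{release} with $q_{n_U}=q_{n_U+1}=1$. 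Lemma~\ref{induct} bounds the simulation error by $0$ for $1\le i\le n_U-1$ and Lemma~\ref{release} bounds it by $24\sqrt{2\varepsilon}$ for $i\in\{n_U,n_U+1\}$; the mirror-image construction handles \adb. Hence $\pup{U}^{\qop{O}}$ is $24\sqrt{2\varepsilon}$--private against $\varepsilon$--specious adversaries, and when $\varepsilon=\varepsilon(m)$ is negligible (exponentially small, matching the convention behind Definition~\ref{privdef}) so is $24\sqrt{2\varepsilon(m)}$, which yields statistical privacy.

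It remains to count the oracle calls. If $U$ lies in the Clifford group I would use the fact that $\{X,Y,Z,\cnot,\hgate,\pgate\}$ already generates that group to pick a circuit $C_U$ with no \rgate-gate; by Sections~\ref{cliffeval} and~\ref{sect:cnot} every gate of such a circuit is realised by local operations together with plain communication, so the only non-trivial oracle in $\qop{O}$ is the single ideal \swap\ used for key-releasing --- establishing that \swap\ is universal for the private evaluation of Clifford unitaries. If $U$ is not Clifford, every \ug-circuit for it contains at least one \rgate-gate, and the sub-protocol of Section~\ref{sect:rgate} spends exactly one \nl\ per \rgate-gate, in addition to the final \swap.

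I expect essentially all of the difficulty to sit in Lemmas~\ref{induct} and~\ref{release}, and in the Rushing Lemma (Lemma~\ref{rushing}) behind the latter, which I take as given; relative to those, the proof of the main theorem is bookkeeping. The spots that still need a little care are the consistency of the $q$-vector across the concatenation --- namely that the evaluation-phase simulators can legitimately avoid calling $U$ while the key-releasing simulator must call it --- and the elementary verification that $\varepsilon\mapsto 24\sqrt{2\varepsilon}$ preserves the decay rate required for statistical privacy in the sense of Definition~\ref{privdef}.
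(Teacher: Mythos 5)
Your proposal is correct and follows essentially the same route as the paper: the paper also obtains Theorem~\ref{final} by directly combining Lemma~\ref{induct} (perfect, query-free simulation of the evaluation phase) with Lemma~\ref{release} (the $24\sqrt{2\varepsilon}$-close simulation of the key-releasing phase via the Rushing Lemma), and then reads off the oracle count from the gate sub-protocols exactly as you do. The extra bookkeeping you flag --- correctness of the pad propagation and consistency of the $q$-vector --- is left implicit in the paper but is handled correctly in your write-up.
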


It should be mentioned that it is not too difficult to modify our
protocol in order to privately evaluate quantum operations rather than
only unitary transforms. Classical two party computation together with
the fact that quantum operations can be viewed as unitaries acting in
larger spaces can be used to achieve this extra functionality. Privacy
can be preserved by keeping these extra registers encrypted after the
execution of the protocol. We leave this discussion to the full version of the paper.

A few interesting questions remain open:
\begin{itemize}


\item 

  It would be interesting to know whether there exists a unitary
  transform that can act as a universal primitive for private
  two-party evaluation of unitaries.  This would allow to determine
  whether classical cryptographic assumptions are required for this
  task.

\item 

  Finally, is there a way to compile quantum protocols secure against
  specious adversaries into protocols secure against arbitrary
  quantum adversaries? An affirmative answer would allow to simplify
  greatly the design of quantum protocols. Are extra assumptions needed
  to preserve privacy against any adversary? 

\end{itemize}

\section{Acknowledgements}

The authors would like to thank the referees for their comments
and suggestions. 
We would also like to thank Thomas Pedersen for numerous 
helpful discussions in the early stage of this work.


\addcontentsline{toc}{chapter}{Bibliography} 
\bibliographystyle{plain}
\bibliography{crypto,qip,procs}


\begin{appendix}

\fullv{\section{Commutations Rules}\label{commut}

\begin{equation*}
    X =\left[ \begin{array}{cc} 0 & 1 \\ 1 & 0 \end{array} 
      \right]\ ,\ \ 
    Y = \left[ \begin{array}{cc} 0 & -1 \\ 1 & 0 \end{array} 
      \right]\ ,\ \ 
    Z = \left[ \begin{array}{cc} 1 & 0 \\ 0 & -1 \end{array} 
      \right]\ ,
\end{equation*}
\begin{equation*}
    P =\left[ \begin{array}{cc} 1 & 0 \\ 0 & i \end{array} 
      \right]\ ,\ \ 
    H = \frac1{\sqrt2} \left[ \begin{array}{cc} 1 & 1 \\ 1 & -1 \end{array} 
      \right]\ ,\ \ 
    R = \left[ \begin{array}{cc} 1 & 0 \\ 0 & e^{i \pi/4} \end{array} 
      \right]\ ,
\end{equation*}
\begin{equation*}
  CNOT = 
  \left[ 
    \begin{array}{cccc} 
      1 & 0 & 0 & 0 \\ 
      0 & 1 & 0 & 0 \\
      0 & 0 & 0 & 1 \\
      0 & 0 & 1 & 0 
    \end{array} 
    \right]\ .
\end{equation*}

\begin{figure}[h]
  \centering
  \subfigure[$HX=ZH$]{\Qcircuit @C=1em @R=.7em {
      & \gate{X} & \gate{H} & \qw & = & & \gate{H} & \gate{Z} & \qw
    }} \\
  \subfigure[$PX=YP$]{\Qcircuit @C=1em @R=.7em {
      & \gate{X} & \gate{P} & \qw & = & & \gate{P} & \gate{Y} & \qw
    }} \\
  \subfigure[$RX=e^{-i\pi/4}YPR$]{\Qcircuit @C=1em @R=.7em {
      & \gate{X} & \gate{R} & \qw & = & & \gate{R} & \gate{P} & \gate{Y} & \qw
    }} \\
  \subfigure[$CNOT(X\tensor\id) = (X\tensor X)CNOT$]{\Qcircuit @C=1em @R=.7em {
      & \gate{X} & \ctrl{1} & \qw & = & & \ctrl{1} & \gate{X} & \qw \\
      & \qw      & \targ    & \qw &   & & \targ    & \gate{X} & \qw
    }} \\
  \subfigure[$CNOT(\id\tensor X) = (\id\tensor X)CNOT$]{\Qcircuit @C=1em @R=.7em {
      & \qw      & \ctrl{1} & \qw & = & & \ctrl{1} & \qw      & \qw \\
      & \gate{X} & \targ    & \qw &   & & \targ    & \gate{X} & \qw
    }} 
  \caption{Commutation relations for $X$.}
  \label{fig:xcom}
\end{figure}

\begin{figure}[h]
  \centering
  \subfigure[$HZ=XH$]{\Qcircuit @C=1em @R=.7em {
      & \gate{Z} & \gate{H} & \qw & = & & \gate{H} & \gate{X} & \qw
    }} \\
  \subfigure[$PZ=ZP$]{\Qcircuit @C=1em @R=.7em {
      & \gate{Z} & \gate{P} & \qw & = & & \gate{P} & \gate{Z} & \qw
    }} \\
  \subfigure[$RZ=ZR$]{\Qcircuit @C=1em @R=.7em {
      & \gate{Z} & \gate{R} & \qw & = & & \gate{R} & \gate{Z} & \qw
    }} \\
  \subfigure[$CNOT(Z\tensor\id) = (Z\tensor\id)CNOT$]{\Qcircuit @C=1em @R=.7em {
      & \gate{Z} & \ctrl{1} & \qw & = & & \ctrl{1} & \gate{Z} & \qw \\
      & \qw      & \targ    & \qw &   & & \targ    & \qw      & \qw
    }} \\
  \subfigure[$CNOT(\id\tensor Z) = (Z \tensor Z)CNOT$]{\Qcircuit @C=1em @R=.7em {
      & \qw      & \ctrl{1} & \qw & = & & \ctrl{1} & \gate{Z} & \qw \\
      & \gate{Z} & \targ    & \qw &   & & \targ    & \gate{Z} & \qw
    }} 
  \caption{Commutation relations for $Z$.}
  \label{fig:zcom}
\end{figure}
}{}

\fullv{
\section{Classical Definition of a Specious Adversary} \label{app:spec}

In this section we briefly discuss the definition of an specious
adversary and the definition of security against such an adversary,
and we compare it to the notion of a semi-honest classical adversary
to illustrate the difference. 

\subsection{Specious Adversary}

As usual we let an $n$-party function $(y_1,\ldots,y_n) =
f(x_1,\ldots,x_n)$ define $n$ functions $y_i = f_i(x_1,\ldots,x_n)$.

For our purpose, an $n$-party protocol $\pi = (\pi_1,\ldots,\pi_n)$
consists of $n$ parties $\pi_i$ connected by secure channels.  If the
protocol is for the $h$-hybrid model, for an $n$-party function $h$,
there are additionally some designated rounds where each $\pi_i$ must
specify an input $a_i$ to $h$. Then $(b_1,\ldots,b_n) =
h(a_1,\ldots,a_n)$ is computed and each $\pi_i$ is given back $b_i$. A
receiving point in a protocol is a point where the parties just
exchanged messages or just received outputs $b_i$ from $h$.

For an $n$-party protocol $\pi$ and for $H \subset \{ 1, \ldots, n\}$
we denote by $\pi_H$ the set $\{ \pi_i \}_{i \in H}$ of parties
indexed by $i \in H$.

For an $n$-party protocol $\pi$ and for $C \subset \{ 1, \ldots, n\}$
we denote by $\tilde{\pi}_C$ an adversary for $\pi$ acting on behalf
of parties indexed by $i \in C$. It receives the inputs, randomness
and messages of all parties indexed by $i \in C$ and decides what
messages they should send. By$ (\pi_{\bar{C}},\tilde{\pi}_C)$ we mean
the protocol consisting of the parties $\pi_i$, $i \not\in C$, running
with the adversary $\tilde{\pi}_C$.

We use the following notation for vectors. We sometimes identify a
vector $v = (v_1,\ldots,v_n)$ with the set $\{ (i,v_i) \}_{i \in
\{1,\ldots,n\}}$. For $S \subset \{ 1, \ldots, n \}$ we let $v_S$ be
the vector $v$ restricted to indices in $S$, formally $v_S = \{
(i,v_i) \}_{i \in S}$.  For $S_1,S_2 \subset \{ 1, \ldots, n \}$ with
$S_1 \cap S_2 = \emptyset$ we let $(v_{S_1}, v_{S_2}) = v_{S_1} \cup
v_{S_2}$.

\begin{definition}[execution of (corrupted) protocol]
  For an $n$-party protocol $\pi$ and input $x = (x_1,\ldots,x_n)$,
  the distribution $\pi(x)$ is defined as follows: sample $r =
  (r_1,\ldots,r_n)$ uniformly at random. Run $\pi$ on input $x$ and
  randomness $r$. Let $y = (y_1,\ldots,y_n)$, where $y_i$ is the
  output of party $\pi_i$, and let $\pi(x) = (x,y)$.  For an $n$-party
  protocol $\pi$, input $(x_1,\ldots,x_n)$, subset $C \subset \{ 1,
  \ldots, n \}$, adversary $\tilde{\pi}_C$ and $\tilde{\pi} =
  (\pi_{\bar{C}},\tilde{\pi}_C)$, the distribution $\tilde{\pi}(x)$ is
  defined as follows: Sample $\tilde{r}_i$, $i \in C$, uniformly at
  random.  Sample $r_i$, $i \not\in C$, uniformly at random.  Let
  $\tilde{\pi} = (\pi_{\bar{C}},\tilde{\pi}_C)$. Run $\tilde{\pi}$ on
  input $x$ and randomness $(r_{\bar{C}},\tilde{r}_C)$. Let
  $\tilde{y}_C$ be the output of the adversary, let $y_{\bar{C}}$ be
  the outputs of the parties $\pi_{\bar{C}}$, and let $\tilde{\pi}(x)
  = (x,(y_{\bar{C}},\tilde{y}_C))$.
\end{definition}

\begin{definition}[specious adversary]
  Let $\pi$ be an $n$-party protocol, let $C \subset \{1,\ldots,n\}$,
  let $\tilde{\pi}_C$ be an adversary, let $\tilde{\pi} =
  (\pi_{\bar{C}},\tilde{\pi}_C)$. We say that $\tilde{\pi}_C$ is
  specious in $\pi$ if there exists a poly-time view simulator $V$
  such that for all inputs $x = (x_1,\ldots,x_n)$  and for all
  receiving points $p$ in $\tilde{\pi}$ it holds that $D^{(p)}$ and
  $\tilde{D}^{(p)}$ have the same distribution, where the distribution
  $D^{(p)}$ is defined as follows: sample $r = (r_1,\ldots,r_n)$
  uniformly at random. Run $\pi$ on input $x$ and randomness $r$ until
  receiving point $p$. Let $M = (M_1,\ldots,M_n)$, where $M_i$ is the
  messages sent and received by party $\pi_i$, and let $D^{(p)} =
  (x,r,M)$.  The distribution $\tilde{D}^{(p)}$ is defined as follows:
  Sample $r_i$, $i \not\in C$, uniformly at random. Sample
  $\tilde{r}_i$, $i \in C$, uniformly at random.  Run $\tilde{\pi}$ on
  input $x$ and randomness $(r_{\bar{C}},\tilde{r}_C)$ until receiving
  point $p$. Let $\tilde{M}_C$ be the messages sent and received by
  the adversary $\tilde{\pi}_C$, let $M_{\bar{C}}$ be the messages
  sent and received by parties $\pi_{\bar{C}}$, let $(r_C,M_C) =
  V(p,x_C,\tilde{r}_C,\tilde{M}_C)$, and let $\tilde{D}^{(p)} =
  (x,(r_C,r_{\bar{C}}),(M_C,M_{\bar{C}}))$.
\end{definition}

\begin{definition}[specious security]
  Let $\pi$ be an $n$-party protocol and let $f$ be an $n$-party
  function. By $\delta^f$ we denote the dummy protocol for $f$: it
  runs in the $f$-hybrid model and party $\delta^f_i$ on input $x_i$
  sends $x_i$ to $f$, waits for the output $y_i$ from $f$, outputs
  $y_i$ and terminates. We say that $\pi$ is a specious implementation
  of $f$ against corruptions from adversary structure $\mathcal{C}$ if
  for all $C \in \mathcal{C}$ and all adversaries $\tilde{\pi}_C$
  which are specious in $\pi$ there exists an adversary
  $\tilde{\delta}^f_C$ which is specious in $\delta^f$ such that
  $(\pi_{\bar{C}},\tilde{\pi}_C)(x) =
  (\delta^f_{\bar{C}},\tilde{\delta}^f_C)(x)$.
\end{definition}

The adversary $\tilde{\delta}^f_C$ is also called the simulator. It
gets the input $x_C$ and can then choose alternative inputs
$x_C'$. Then it receives $y'_C$, where $y' = f(x_{\bar{C}},x'_C)$, and
outputs some $\tilde{y}_C$. In the dummy protocol, there is only one
receiving point, namely after the ideal evaluation of $f$. So, for
$\tilde{\delta}^f_C$ to be specious in $\delta^f$ it needs only be
able to compute the correct view at this point. The correct view is
$y_C$ for $y = f(x)$, so a specious $\tilde{\delta}^f_C$ (in
$\delta^f$) can by definition compute $y_C$ from $x_C$, $x'_C$ and
$y_C'$ (and its own randomness if it is randomized). In words, being
specious in the ideal process means that for all inputs $x$ you give
an alternative input to $f$ which allows to reconstruct the right
output.
 
Note that if we consider an $n$-party function $f$ where all parties
receive the same output, $f_i = f_j$, then it is clear that for
$\tilde{\delta}^f_C$ to be specious it should hold that
$f(x_{\bar{C}},\tilde{x}_C) = f(x)$ for all inputs $x$, as
$f(x_{\bar{C}},\tilde{x}_C)$ is included in the messages received by
$\delta^f_{\bar{C}}$. In words, for a function $f$ with common output,
being specious in the ideal process means that for all inputs $x$ you
give an alternative input to $f$ which makes $f$ give the right
output; You can therefore only make insignificant changes to your true
input.

\subsection{Specious Adversaries can be Stronger than Semi-Honest Adversaries}

In some settings a specious adversary is strictly stronger than a
semi-honest adversary. We demonstrate this by first giving a protocol
for one-out-of-two oblivious transfer (OT) which is secure against a
poly-time semi-honest adversary, but insecure against a poly-time
specious adversary. We then show that there exists a function $f$ and
a protocol $\pi$ which is a perfectly secure implementation of $f$
against an unbounded semi-honest adversary in the OT-hybrid model, but
insecure against even a poly-time specious adversary. The first
example exploits that a specious adversary can prepare its randomness
in any way it wants. The second example exploits that a specious
adversary can provide any input it wants to ideal functionalities (in
our case the OT's of the OT-hybrid model) as long as it can later make
it look as if it gave the right input.

\begin{theorem}
  Under the computational assumption given below, there exists a
  protocol which is a secure implementation of oblivious transfer
  against a static, poly-time semi-honest adversary but which is insecure
  against a static, poly-time specious adversary.
\end{theorem}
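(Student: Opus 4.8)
The computational assumption I will use is the existence of an IND-CPA secure public-key encryption scheme $(\mathrm{Gen},\mathrm{Enc},\mathrm{Dec})$ whose public keys admit \emph{invertible oblivious sampling}: there is an efficient algorithm $\widetilde{\mathrm{Gen}}$ that outputs a public key alone (no secret key) and an efficient ``explainer'' $\mathrm{Expl}$ such that, for $(pk,sk)\leftarrow\mathrm{Gen}(1^k)$ and uniform $r$, the pairs $(pk,\mathrm{Expl}(pk))$ and $(\widetilde{\mathrm{Gen}}(1^k;r),r)$ are computationally indistinguishable. (Enhanced trapdoor permutations give a perfect instantiation: a public key is a uniform domain element, $\widetilde{\mathrm{Gen}}$ picks one without learning a preimage, and $\mathrm{Expl}$ is the identity.) The protocol $\pi$ for \udot{} is the folklore one: on choice bit $c$ the receiver sets $(pk_c,sk_c)\leftarrow\mathrm{Gen}(1^k)$ and $pk_{1-c}\leftarrow\widetilde{\mathrm{Gen}}(1^k)$ and sends $(pk_0,pk_1)$; the sender, on inputs $m_0,m_1$, replies with $C_0=\mathrm{Enc}_{pk_0}(m_0)$ and $C_1=\mathrm{Enc}_{pk_1}(m_1)$; the receiver outputs $\mathrm{Dec}_{sk_c}(C_c)$.

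\textbf{Security against a semi-honest adversary.} I would exhibit the two simulators. For a corrupted receiver with input $(c,m_c)$: honestly generate $(pk_c,sk_c)$ with coins $r_c$ and $pk_{1-c}$ with coins $r_{1-c}$, set $C_c=\mathrm{Enc}_{pk_c}(m_c)$ and $C_{1-c}=\mathrm{Enc}_{pk_{1-c}}(0)$, and output this as the view; indistinguishability from the real view is a one-line reduction to IND-CPA under $pk_{1-c}$, using that a semi-honest receiver never holds $sk_{1-c}$. For a corrupted sender with input $(m_0,m_1)$: sample both $pk_0,pk_1$ with $\mathrm{Gen}$ and discard the secret keys; by the oblivious-sampling indistinguishability this is indistinguishable from the honest receiver's message for either value of $c$, so the sender's view is independent of $c$. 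Correctness is immediate, so $\pi$ securely implements \udot{} against a static poly-time semi-honest adversary.

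\textbf{Insecurity against a specious adversary.} Take the corruption set $C=\{R\}$ and the receiver strategy $\tilde\pi_R$ that runs honestly \emph{except} that it generates $pk_{1-c}$ with $\mathrm{Gen}$ (hence also learning $sk_{1-c}$) and, on receiving $(C_0,C_1)$, decrypts \emph{both} ciphertexts and outputs $(m_0,m_1)$. Since in the ideal \udot{} the receiver learns only $m_c$, choosing $m_{1-c}$ to be a uniform string independent of everything else shows that no specious adversary against the dummy/ideal protocol can reproduce $\tilde\pi_R$'s output distribution; hence $\pi$ is insecure against $\tilde\pi_R$. To check that $\tilde\pi_R$ \emph{is} specious, let the view simulator $V$, given $\tilde\pi_R$'s actual randomness $(\tilde r_0,\tilde r_1)$ (two $\mathrm{Gen}$ coins) and the transcript, output the ``honest-looking'' randomness $r_c:=\tilde r_c$ together with $r_{1-c}:=\mathrm{Expl}(pk_{1-c})$, leaving the messages unchanged. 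By the invertible-sampling assumption the pair $(pk_{1-c},r_{1-c})$ is indistinguishable from $(\widetilde{\mathrm{Gen}}(1^k;r),r)$ for uniform $r$, and pushing $pk_{1-c}$ through the honest sender's randomized map $m_{1-c}\mapsto\mathrm{Enc}_{pk_{1-c}}(m_{1-c})$ preserves this; all remaining components ($c$, $pk_c$, $r_c$, $C_c$, $C_{1-c}$) are already distributed exactly as in an honest run. Thus the presented view $\tilde D^{(p)}$ is indistinguishable from the honest $D^{(p)}$ at every receiving point, so $\tilde\pi_R$ is a static poly-time specious adversary (with perfectly indistinguishable views under the trapdoor-permutation instantiation).

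\textbf{Main obstacle.} The delicate step is the last one: certifying that $\tilde\pi_R$ passes every audit even though it actually holds $sk_{1-c}$. This is precisely what forces \emph{invertibility}, not mere obliviousness, of the public-key sampler into the hypothesis, since the view simulator must transform a genuinely $\mathrm{Gen}$-generated key into convincing oblivious-sampling coins. Everything else---the two semi-honest simulators and the impossibility of an ideal-world simulator matching $\tilde\pi_R$'s output---is routine.
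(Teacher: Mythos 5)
Your proposal is correct and is essentially the paper's own argument: the paper uses the EGL-style oblivious transfer from trapdoor permutations whose indices are uniformly random strings, the specious receiver samples the ``oblivious'' index together with its trapdoor so it can recover both messages, and the view simulator explains the index away as the honest sampling coins---exactly your $\mathrm{Expl}$ step, which in the paper's instantiation is the identity. Your phrasing via PKE with invertible oblivious key sampling is a mild abstraction of the same construction, and you correctly flag that the paper's definition of specious demands identical (not merely computationally indistinguishable) view distributions, which the trapdoor-permutation instantiation delivers.
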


Assume that we have a family of trapdoor permutations, where the
description of a random permutation is a random string. More formally:
\begin{itemize}
\item

  on input $n \in \mathbb{N}$ the generator $G$ outputs $(i,t)$, where
  $i$ is uniformly random in some $\zo^\ell$, and $G$ runs in
  poly-time in $n$.

\item

  Each index $i \in \zo^\ell$ defines a permutation $p_i: \zo^n
  \rightarrow \zo^n$. Given $i \in \zo^\ell$ and $x \in \zo^n$ one can
  compute $y = p_i(x)$ in poly-time in $n$. 

\item

  Given $t$, where $(i,t) \leftarrow G(n)$ and $y \in \zo^n$ one can
  compute $x = p_i^{-1}(y)$ in poly-time in $n$.

\item

  It holds for all poly-time algorithms $A$ that the probability that
  it outputs $p_i^{-1}(y)$ on input $(i,y)$, where $(i,t) \leftarrow
  G(n)$ and $y \ran \zo^n$, is negligible in $n$.

\end{itemize}

\noindent
On security parameter $n$ the protocol runs as follows:
\begin{enumerate}
\item

  The sender \xprot{S} has input two messages $m_0,m_1 \in \zo$.

\item

  The receiver \xprot{R} has input a choice bit $c \in \zo$.

\item

  \xprot{R} samples $(i_c,t_c) \leftarrow G(n)$ and $i_{1-c} \ran
  \zo^{\vert i_c \vert}$ and sends $(i_0,i_1)$ to \xprot{S}.

\item

  \xprot{S} samples $x_0,x_1 \ran \zo^n$ and sends $(p_{i_0}(x_0),
  H(x_0) \oplus m_0)$ and $(p_{i_1}(x_1), H(x_1) \oplus m_1)$, where
  $H$ is a (possibly randomized) hard-core bit for $p$.

\item

  \xprot{R} uses $t_c$ to compute $m_c = H(p_{i_c}^{-1}(p_{i_c}(x_c)))
  \oplus (H(x_c) \oplus m_c)$.

\end{enumerate}

It is straight-forward to prove that this protocol is computationally
secure against a static semi-honest adversary in the stand-alone
model\cite{Canetti00}: The security for the receiver is perfect, and
the receiver picks $i_{1-c}$ as to not learn $t_{1-c}$ and hence
$H(x_{1-c}) \oplus m_{1-c}$ hides $m_{1-c}$ in the sense of semantic
security.

On the other hand it is clear that the protocol is not secure against
a specious adversary: A specious adversary runs the protocol honestly,
except that it prepares $i_{1-c}$ by sampling $(i_{1-c},t_{1-c})
\leftarrow G(n)$ and then uses $t_{1-c}$ to learn $m_{1-c}$. The view
simulator $V$ adds $i_{1-c}$ to the random string $r$ such that an
execution of \xprot{R} on $r$ samples the uniformly random $i_{1-c}
\ran \zo^\ell$.

\begin{theorem}
  There exists a function $f$ and a protocol $\pi$ such that $\pi$ is
  a perfectly secure implementation of $f$ in the OT hybrid model
  against a static, unbounded semi-honest adversary, but insecure against a
  static, poly-time specious adversary.
\end{theorem}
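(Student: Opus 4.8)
The plan is to take $f$ to be $1$-out-of-$2$ bit oblivious transfer, viewed as the two-party function with $f_{\mathsf{S}} = \bot$ and $f_{\mathsf{R}}((m_0,m_1),c) = m_c$, and to build a deliberately \emph{redundant} protocol $\pi$ for it in the OT-hybrid model. The sender $\mathsf{S}$ holds $m_0,m_1\in\zo$, the receiver $\mathsf{R}$ holds a choice bit $c\in\zo$, and $f$ delivers $m_c$ to $\mathsf{R}$ and nothing to $\mathsf{S}$. In $\pi$, $\mathsf{S}$ first samples $r_0,r_1\in\zo$ uniformly; then $\mathsf{S}$ and $\mathsf{R}$ invoke the ideal OT \emph{twice}, each time with sender input $(r_0,r_1)$ and receiver input $c$, so that $\mathsf{R}$ obtains $r_c$ on both calls; finally $\mathsf{S}$ sends $(m_0\oplus r_0,\,m_1\oplus r_1)$ and $\mathsf{R}$ outputs $m_c:=(m_c\oplus r_c)\oplus r_c$. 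Correctness is immediate.

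For perfect security against a static, unbounded semi-honest adversary I would exhibit the two obvious simulators. A corrupted $\mathsf{S}$ sees $(m_0,m_1)$, its own coins $r_0,r_1$, and two empty OT answers; the simulator reproduces this exactly from $(m_0,m_1)$ by sampling $r_0,r_1$. A corrupted $\mathsf{R}$ with output $m_c$ is simulated by drawing a uniform bit $\rho$, presenting it as the answer of both OT calls, and presenting $(m_c\oplus\rho,\,\tau)$ as $\mathsf{S}$'s message for a fresh uniform $\tau$; since in the real run $r_{1-c}$ is uniform and independent of everything in $\mathsf{R}$'s view, the $(1-c)$-coordinate of $\mathsf{S}$'s message is uniform and independent there as well, so the two distributions coincide. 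Hence $\pi$ perfectly realizes $f$ against semi-honest adversaries, with unbounded simulators.

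The core of the argument is the specious attack. A corrupted $\mathsf{R}$ runs honestly except that it feeds the \emph{second} OT call the flipped bit $1-c$ (while feeding the first call the true bit $c$); it thereby learns $r_c$ \emph{and} $r_{1-c}$, so after $\mathsf{S}$'s message it recovers both $m_0$ and $m_1$ and outputs the pair. No ideal-world adversary can match this: the dummy protocol $\delta^f$ queries $f$ only once, hence an adversary in $\delta^f$ learns at most one of $m_0,m_1$ and cannot output the pair for every choice of $(m_0,m_1)$; thus $\pi$ is insecure in the specious model. It then remains to check that this $\mathsf{R}$ is specious, i.e.\ that a poly-time view simulator $V$ can fabricate an honest-looking view at every receiving point. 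After the first OT there is nothing to correct. After the second OT the honest view would record $c$ as the input and $r_c$ as the answer of that call, whereas the adversary recorded $1-c$ and $r_{1-c}$; but $V$ simply overwrites that record with input $c$ and answer $r_c$, the latter \emph{copied from the first OT call}, which the adversary legitimately holds. The same overwrite makes the view after $\mathsf{S}$'s message honest-looking (and $\mathsf{S}$'s own view and message are untouched by $\mathsf{R}$'s deviation), so the real and simulated joint distributions are identical.

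The one delicate point — and the conceptual content of the example — is exactly this speciousness check: it goes through only because of the built-in redundancy of the two identical OT calls, which lets the deviating receiver \emph{retroactively account for} the flipped query using information it obtained honestly from the other call. With a single OT call the analogous deviation fails, since a receiver that had queried $1-c$ would need $\mathsf{S}$'s secret $m_c$ in order to reconstruct the honest answer $r_c$. Thus the separation rests precisely on the feature highlighted after the statement: a specious adversary, unlike a semi-honest one, may feed ideal subroutines any inputs it can later explain away.
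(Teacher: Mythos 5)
Your proposal is correct and matches the paper's own construction in all essentials: the same function (1-out-of-2 OT), the same redundant double invocation of the ideal OT with identical inputs, the same specious attack that flips the choice bit in the second call, and the same view simulator that patches the second OT's record using the answer legitimately obtained from the first. The only difference is your added one-time-pad masking layer ($r_0,r_1$ plus a final message), which the paper omits by having the OT calls act directly on $(b_0,b_1)$; this is harmless but not needed.
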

\begin{proof}
  We look at a function $(a,b) \mapsto (x,y)$. Let $a$ be a bit, let
  $b = (b_0,b_1)$ be two bits, and let $x = b_a$ and $y =
  \epsilon$. Consider the following protocol $\pi$: it contains two
  applications of OT, where in both $\bprot$ will offer input
  $(b_0,b_1)$ and where in both $\aprot$ will input $a$. At the end
  $\aprot$ outputs $b_a$.

  It is trivial that $\pi$ is perfectly secure against a semi-honest
  adversary. It is, on the other hand, also clear that $\pi$ is not
  secure against a specious adversary, as $\ada$ can use selection bit
  $1-a$ in the second OT to learn $b_{1-a}$ and then output
  $(b_0,b_1)$. In the transcript $\alpha$ of received messages the
  view simulator $V$ simply replaces $b_{1-a}$ by $b_a$ as the message
  received from the second OT, so $\ada$ is indeed specious. It is
  also clear that no simulator for the ideal model (even if it was
  allowed active corruptions) can always output both $b_0$ and $b_1$.\qed
\end{proof}}{}

\fullv{
\section{Proof of Theorem~\ref{noswap}}\label{proofnoswap}
Suppose that there exists an $\varepsilon$-correct, $\varepsilon$-private protocol in the bare model for $\swap$ for sufficiently small $\varepsilon$; we will show that this implies that one of the two players must \emph{lose} information upon receiving a message, which is clearly impossible.

We will consider the following particular pure input state: $\ket{\varphi} := \ket{\Psi_{0,0}}^{\hreg{A}_0 \hreg{R}_{\hreg{A}}} \otimes \ket{\Psi_{0,0}}^{\hreg{B}_0 \hreg{R}_{\hreg{B}}}$, a maximally entangled state between $\hreg{A}_0 \otimes \hreg{B}_0$ and the reference system $\hreg{R}_{\hreg{A}} \otimes \hreg{R}_{\hreg{B}}$ that is broken down into two subsystems for convenience. Furthermore, we will consider the ``purified'' versions of the honest players for this protocol; in other words, we will assume that the super-operators $\aprot_1,\dots,\aprot_n$ and $\bprot_1,\dots,\bprot_n$ are in fact linear isometries and that therefore the players never discard any information unless they have to send it to the other party. The global state $\hstatei{i}{\varphi}$ after step $i$ is therefore a pure state on $\hreg{A}_i \otimes \hreg{B}_i \otimes \hreg{R}_{\hreg{A}} \otimes \hreg{R}_{\hreg{B}}$. 

After step $i$ of the protocol (i.e., after the $i$th message has been sent), Alice's state must either depend only on her own original input (if $q_i = 0$ for her simulator), or on Bob's original input (if $q_i=1$). More precisely, by the definition of privacy (Definition \ref{privdef}), we have that
\[
\Delta\left( \view{\aprot}{\varphi}{i}, \tr_{\hreg{B}_i}[\hstatei{i}{\varphi}] \right) \leq \varepsilon \enspace ,
\]
where $\view{\aprot}{\varphi}{i}$ is \aprot's simulated view after step $i$ and $\hstatei{i}{\varphi}$ is the global state in the real protocol after step $i$. Now, suppose that $q_i=0$, and let $\ket{\xi} \in \hreg{A}_i \otimes \hreg{R}_{\hreg{A}} \otimes \hreg{R}_{\hreg{B}}' \otimes \hreg{Z}$ be a purification of $\view{\aprot}{\varphi}{i}$ with $\hreg{Z}$ being the purifying system, and $\hreg{R}_{\hreg{B}}$ renamed for upcoming technical reasons. The pure state $\ket{\xi} \otimes \ket{\Psi_{0,0}}^{\hreg{R}_{\hreg{B}} \hreg{B}_0}$ has the same reduced density matrix as $\view{\aprot}{\varphi}{i}$ on $\hreg{A}_i \otimes \hreg{R}_{\hreg{A}} \otimes \hreg{R}_{\hreg{B}}$. Hence, by Uhlmann's theorem, there exists a linear isometry $V : \hreg{B}_i \rightarrow \hreg{B}_0 \otimes \hreg{Z} \otimes \hreg{R}_{\hreg{B}}'$ such that
\[
V \view{\aprot}{\varphi}{i} V^\dagger  = \ket{\xi}\bra{\xi} \otimes \ket{\Psi_{0,0}}\bra{\Psi_{0,0}}^{\hreg{B}_0\hreg{R}_{\hreg{B}}}
\]
and hence
\[
\Delta\left( V \hstatei{i}{\varphi} V^\dagger, \ket{\xi}\bra{\xi} \otimes \ket{\Psi_{0,0}}\bra{\Psi_{0,0}}^{\hreg{B}_0\hreg{R}_{\hreg{B}}} \right) \leq \sqrt{2\varepsilon}
\enspace .
\]
This means that if $q_i = 0$, then Bob is still capable of reconstructing his own input state after step $i$ by applying $V$ to his working register. Clearly, this means that $q_i' = 0$ (i.e., Bob's simulator must also not call $\swap$), and therefore, by the same argument, Alice must also be able to reconstruct her own input with an isometry $V_A : \hreg{A}_i \rightarrow \hreg{B}_0 \otimes \hreg{Z} \otimes \hreg{R}'_{\hreg{A}}$. The same argument also holds if $q_i = 1$: we then conclude that $q_i' = 1$ and that Alice and Bob must have each other's inputs; no intermediate situation is possible. We conclude that, at every step $i$ of the protocol, $q_i = q'_i$.

Now, before the protocol starts, Alice must have her input, and Bob must have his, hence, $q_0 = q_0' = 0$. At the end, the two inputs must have been swapped, which means that $q_n = q'_n = 1$; there must therefore be a step $k$ in the protocol after which the two inputs are swapped but not before, meaning that $q_k = 1$ and $q_{k-1} = 0$. But at each step, only one player receives information, which means that at this step $k$, the player who received the message must lose the ability to reconstruct his own input, which is clearly impossible. \qed}

\fullv{\section{The Rushing Lemma}\label{tensor}


Specious adversaries are guaranteed to get the correct output state
after the execution of a correct protocol. This implies that at the end of the protocol,
any extra working registers (used to implement its attack) of
any specious adversary are independent of the joint input state
of the computation. 
In other words, no extra information is available to the adversary
at the very end of the protocol. If the adversary can break the privacy
of a protocol for the two party evaluation of unitaries then it must do so 
before the last step. The adversary
must therefore rush to break privacy before the protocol ends.

\begin{lemma}[Rushing Lemma]\label{rushing}
	Let $\puq{U} = (\aprot,\bprot, n)$ be a correct protocol for the two party evaluation of $U$.  Let \ada\ be any $\varepsilon$--specious adversary in $\puq{U}$. Then, there exist an isometry $T : \tilde{\hreg{A}}_n \rightarrow \hreg{A}_n \otimes \widehat{\hreg{A}}$ and a mixed state $\tilde{\varrho} \in \dens{\widehat{\hreg{A}}}$ such that for all joint input states $\rhoin \in \dens{\hreg{A}_0\otimes\hreg{B}_0 \otimes \hreg{R}}$,
\begin{equation}\label{rusha}
	\Delta\left((T \otimes \Idt_{\hreg{B}_n \otimes \hreg{R}})\left([\ada\compose\bprot](\rhoin)\right) (V^{\dagger} \otimes \Idt_{\hreg{B}_n \otimes \hreg{R}}),\tilde{\varrho}\otimes (U \otimes \Idt_{\hreg{R}})\rhoin (U^{\dagger} \otimes \Idt_{\hreg{R}}) \right)\leq 12 \sqrt{2\varepsilon}.
\end{equation} 
The same also applies to any $\varepsilon$--specious adversary $\adb$: there exists a $T : \tilde{\hreg{B}}_n \rightarrow \hreg{B}_n \otimes \widehat{\hreg{B}}$ and a $\tilde{\varrho} \in \dens{\widehat{\hreg{B}}}$ such that
\begin{equation}\label{rushb}
	\Delta\left((T \otimes \Idt_{\hreg{A}_n \otimes \hreg{R}})\left([\aprot\compose\adb](\rhoin)\right) (V^{\dagger} \otimes \Idt_{\hreg{A}_n \otimes \hreg{R}}),\tilde{\varrho}\otimes (U \otimes \Idt_{\hreg{R}})\rhoin (U^{\dagger} \otimes \Idt_{\hreg{R}}) \right)\leq 12 \sqrt{2\varepsilon},
\end{equation} 
for every $\rhoin$.
\end{lemma}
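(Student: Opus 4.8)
The plan is to exploit the fact that an $\varepsilon$--specious \ada\ must, at the \emph{last} step, be able to reconstruct something close to the honest final state, which by correctness is itself close to the ideal output $(U\otimes\Idt_{\hreg{R}})\cdot\rhoin$; this will force the adversary's leftover registers to be essentially uncorrelated with $\rhoin$. By convexity of the trace distance it suffices to treat pure inputs, so fix a pure $\ket{\psi}\in\hreg{A}_0\otimes\hreg{B}_0\otimes\hreg{R}$. Speciousness at step $n$ supplies a quantum operation $\trans{n}:\linop{\tilde{\hreg{A}}_n}\mapsto\linop{\hreg{A}_n}$ --- the \emph{same} for every input, as it is part of the adversary's specification --- with $\Delta\big((\trans{n}\otimes\Idt)([\ada\compose\bprot](\proj{\psi})),\hstatei{n}{\psi}\big)\le\varepsilon$, and correctness gives $\Delta(\hstatei{n}{\psi},(U\otimes\Idt_{\hreg{R}})\cdot\proj{\psi})\le\varepsilon$; by the triangle inequality, $(\trans{n}\otimes\Idt)([\ada\compose\bprot](\proj{\psi}))$ is within $2\varepsilon$ of the pure state $(U\otimes\Idt_{\hreg{R}})\cdot\proj{\psi}$. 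Now fix any isometry $T:\tilde{\hreg{A}}_n\to\hreg{A}_n\otimes\widehat{\hreg{A}}$ with $\trans{n}(\sigma)=\mathrm{tr}_{\widehat{\hreg{A}}}(T\sigma T^\dagger)$, i.e.\ a dilation that keeps in $\widehat{\hreg{A}}$ whatever $\trans{n}$ discards. Then $(T\otimes\Idt)\cdot([\ada\compose\bprot](\proj{\psi}))$ is a purification (on $\widehat{\hreg{A}}$) of a state $2\varepsilon$--close to the \emph{pure} state $(U\otimes\Idt_{\hreg{R}})\cdot\proj{\psi}$; converting trace distance to fidelity and invoking Uhlmann's theorem yields a state $\tilde\varrho\in\dens{\widehat{\hreg{A}}}$ with $\Delta\big((T\otimes\Idt)\cdot([\ada\compose\bprot](\proj{\psi})),\tilde\varrho\otimes(U\otimes\Idt_{\hreg{R}})\cdot\proj{\psi}\big)\le 2\sqrt{2\varepsilon}$.

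The only real difficulty is that $\tilde\varrho$ a priori depends on $\ket{\psi}$, whereas the lemma needs one $\tilde\varrho$ good for every input. To decouple them I would use a flag construction. Given two pure inputs $\ket{\psi_1},\ket{\psi_2}$, adjoin a fresh qubit $\hreg{R}_2=\Span\{\ket{1},\ket{2}\}$ to the reference and run the previous argument on $\ket{\psi}:=\tfrac1{\sqrt2}(\ket{\psi_1}\ket{1}+\ket{\psi_2}\ket{2})$, getting some $\tilde\varrho$ with the $2\sqrt{2\varepsilon}$ bound. Since the protocol and $T$ act trivially on $\hreg{R}_2$, applying the completely positive, trace non-increasing map $\rho\mapsto(\Idt\otimes\proj{1})\rho(\Idt\otimes\proj{1})$ --- which does not increase trace distance --- collapses both states in that bound onto their $\ket{\psi_1}$--branches scaled by $\tfrac12$; factoring out the $\tfrac12$ (which at worst doubles the distance) gives, for $k=1,2$, $\Delta\big((T\otimes\Idt)\cdot([\ada\compose\bprot](\proj{\psi_k})),\tilde\varrho\otimes(U\otimes\Idt_{\hreg{R}})\cdot\proj{\psi_k}\big)\le 4\sqrt{2\varepsilon}$ with a \emph{common} $\tilde\varrho$.

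To pin down a single residual state, fix a canonical pair $\ket{\psi_1^{*}},\ket{\psi_2^{*}}$ and let $\tilde\varrho$ be the state the flag argument produces from it. For an arbitrary pure $\ket{\varphi}$, rerun the flag argument with $\ket{\psi_1^{*}}$ and $\ket{\varphi}$; this yields a $\tilde\rho$ satisfying the $4\sqrt{2\varepsilon}$ bound for both $\ket{\psi_1^{*}}$ and $\ket{\varphi}$. Comparing the two bounds at $\ket{\psi_1^{*}}$ and using that the ideal output of $\ket{\psi_1^{*}}$ is a fixed state, the triangle inequality gives $\Delta(\tilde\varrho,\tilde\rho)\le 8\sqrt{2\varepsilon}$, and one further triangle inequality gives $\Delta\big((T\otimes\Idt)\cdot([\ada\compose\bprot](\proj{\varphi})),\tilde\varrho\otimes(U\otimes\Idt_{\hreg{R}})\cdot\proj{\varphi}\big)\le 4\sqrt{2\varepsilon}+8\sqrt{2\varepsilon}=12\sqrt{2\varepsilon}$. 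Convexity of the trace distance lifts this to all mixed $\rhoin$, and the statement for an $\varepsilon$--specious \adb\ follows by exchanging the roles of the two parties throughout.

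I expect the genuinely delicate part to be keeping track of the constants in the flag argument --- making the trace-distance/fidelity conversion in the Uhlmann step, the doubling from factoring out $\tfrac12$, and the two triangle inequalities combine to exactly $12\sqrt{2\varepsilon}$ --- together with the bookkeeping point that $\trans{n}$, and hence $T$, must be chosen once and for all; the latter is legitimate precisely because speciousness supplies the $\trans{i}$'s independently of $\rhoin$ and of the reference system, so enlarging the reference by the flag qubit does not affect the speciousness hypothesis. Everything else (the dilation, combining correctness with speciousness, the convexity reduction) is routine.
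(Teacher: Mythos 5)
Your proposal is correct and matches the paper's own proof essentially step for step: the convexity reduction to pure inputs, the $2\varepsilon$ bound from combining speciousness with correctness, the Uhlmann dilation of $\trans{n}$ giving $2\sqrt{2\varepsilon}$, the flag-qubit superposition and projection yielding a common $\tilde\varrho$ at $4\sqrt{2\varepsilon}$, and the final triangle inequalities producing $8\sqrt{2\varepsilon}+4\sqrt{2\varepsilon}=12\sqrt{2\varepsilon}$. No gaps; the constants and the once-and-for-all choice of $T$ are handled exactly as in the paper.
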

\begin{proof}
	We shall only prove the statement for an $\varepsilon$--specious \ada; the statement for an $\varepsilon$--specious \adb\ is identical. Furthermore, by convexity, it is sufficient to prove the theorem for pure $\rhoin$.

Consider any pair of pure input states $\ket{\psi_1}$ and $\ket{\psi_2}$ in $\hreg{A}_0 \otimes \hreg{B}_0 \otimes \hreg{R}$. Now, let $\hreg{R}' := \hreg{R} \otimes \hreg{R}_2$, where $\hreg{R}_2 = \Span\{\ket{1}, \ket{2}\}$ represents a single qubit, and define the state $\ket{\psi} := \frac{1}{\sqrt{2}}(\ket{\psi_1}\ket{1} + \ket{\psi_2}\ket{2}) \in \hreg{A}_0 \otimes \hreg{B}_0 \otimes \hreg{R}'$. Note that $\trace[\hreg{R}_2]{\proj{\psi}} = \frac{1}{2} \ket{\psi_1}\bra{\psi_1} + \frac{1}{2} \ket{\psi_2}\bra{\psi_2}$. Due to the correctness of the protocol and to the speciousness of $\ada$, there exists a quantum operation $\trans{n} : \linop{\tilde{\hreg{A}_n}} \rightarrow \linop{ \hreg{A}_n}$ such that
\begin{equation*}
	\Delta\left( (\trans{n} \otimes \Idt_{\linop{\hreg{B}_n \otimes \hreg{R}'}})([\ada \compose \bprot](\ket{\psi}\bra{\psi})) , (U \otimes \Idt_{\hreg{R}'}) \ket{\psi}\bra{\psi} (U \otimes \Idt_{\hreg{R}'})^\dagger \right) \leq 2\varepsilon.
\end{equation*}
Now, consider any isometry $T : \tilde{\hreg{A}}_n \rightarrow \hreg{A}_n \otimes \widehat{\hreg{A}}$ such that $\trans{n}(\sigma) = \trace[\widehat{\hreg{A}}]{T \sigma T^\dagger}$ for every $\sigma \in \linop{\tilde{\hreg{A}}_n}$ --- in other words, any operation that implements $\trans{n}$ while keeping any information that would otherwise be destroyed in $\widehat{\hreg{A}}$. By Uhlmann's theorem, there must exist a state $\tilde{\varrho} \in \dens{\widehat{\hreg{A}}}$ such that
\begin{equation*}
	\Delta\left((T \otimes \Idt_{\hreg{B}_n \otimes \hreg{R}'})\left([\ada\compose\bprot](\ket{\psi}\bra{\psi})\right) (T^{\dagger} \otimes \Idt_{\hreg{B}_n \otimes \hreg{R}'}),\tilde{\varrho}\otimes (U \otimes \Idt_{\hreg{R}'}) \ket{\psi}\bra{\psi} (U^{\dagger} \otimes \Idt_{\hreg{R}'}) \right)\leq 2 \sqrt{2 \varepsilon}.
\end{equation*}
Now, the trace distance is monotonous under completely positive, trace non-increasing maps. In particular, we can apply the projector $P_1 = \Idt_{\linop{\hreg{A}_n \otimes \hreg{B}_n \otimes \hreg{R}}} \otimes \ket{1}\bra{1}$ to both states in the above trace distance and the inequality will still hold. In other words, we project both states onto $\ket{1}$ on $\hreg{R}_2$, thereby turning $\ket{\psi}\bra{\psi}$ into $\frac{1}{2} \ket{\psi_1}\bra{\psi_1}$. Factoring out the $\frac{1}{2}$, we get that
\begin{equation*}
	\Delta\left((T \otimes \Idt_{\hreg{B}_n \otimes \hreg{R}})\left([\ada\compose\bprot](\ket{\psi_1}\bra{\psi_1})\right) (T \otimes \Idt_{\hreg{B}_n \otimes \hreg{R}})^{\dagger},\tilde{\varrho}\otimes (U \otimes \Idt_{\hreg{R}}) \ket{\psi_1}\bra{\psi_1} (U^{\dagger} \otimes \Idt_{\hreg{R}}) \right)\leq 4 \sqrt{2 \varepsilon}.
\end{equation*} 
Likewise, projecting onto $\ket{2}$ yields
\begin{equation*}
	\Delta\left((T \otimes \Idt_{\hreg{B}_n \otimes \hreg{R}})\left([\ada\compose\bprot](\ket{\psi_2}\bra{\psi_2})\right) (T\otimes \Idt_{\hreg{B}_n \otimes \hreg{R}})^{\dagger},\tilde{\varrho}\otimes (U \otimes \Idt_{\hreg{R}}) \ket{\psi_2}\bra{\psi_2} (U^{\dagger} \otimes \Idt_{\hreg{R}}) \right)\leq 4 \sqrt{2 \varepsilon}.
\end{equation*} 
Our only problem at this point is that $\tilde{\varrho}$ in principle depends on $\ket{\psi_1}$ and $\ket{\psi_2}$. However, repeating the above argument with $\ket{\psi_1}$ and $\ket{\psi_3}$ for any $\ket{\psi_3}$ will yield a $\tilde{\varrho}'$ with 
\begin{equation*}
	\Delta\left((T \otimes \Idt_{\hreg{B}_n \otimes \hreg{R}})\left([\ada\compose\bprot](\ket{\psi_1}\bra{\psi_1})\right) (T^ \otimes \Idt_{\hreg{B}_n \otimes \hreg{R}})^\dagger,\tilde{\varrho}'\otimes (U \otimes \Idt_{\hreg{R}}) \ket{\psi_1}\bra{\psi_1} (U^{\dagger} \otimes \Idt_{\hreg{R}}) \right)\leq 4 \sqrt{2 \varepsilon}
\end{equation*} 
and hence, by the triangle inequality, $\Delta(\tilde{\varrho}, \tilde{\varrho}') \leq 8 \sqrt{2\varepsilon}$. Therefore, for any state $\ket{\varphi} \in \hreg{A}_0 \otimes \hreg{B}_0 \otimes \hreg{R}$, there exists a state $\tilde{\rho} \in \widehat{\hreg{A}}$ with $\Delta(\tilde{\rho}, \tilde{\varrho}) \leq 8 \sqrt{2\varepsilon}$ such that
\begin{equation*}
	\Delta\left((T \otimes \Idt_{\hreg{B}_n \otimes \hreg{R}})\left([\ada\compose\bprot](\ket{\varphi}\bra{\varphi})\right) (T \otimes \Idt_{\hreg{B}_n \otimes \hreg{R}})^{\dagger},\tilde{\rho}\otimes (U \otimes \Idt_{\hreg{R}}) \ket{\varphi}\bra{\varphi} (U^{\dagger} \otimes \Idt_{\hreg{R}}) \right)\leq 4 \sqrt{2 \varepsilon}.
\end{equation*} 
The lemma then follows by the triangle inequality:
\begin{multline*}
	\Delta\left((T \otimes \Idt_{\hreg{B}_n \otimes \hreg{R}})\left([\ada\compose\bprot](\ket{\varphi}\bra{\varphi})\right) (T \otimes \Idt_{\hreg{B}_n \otimes \hreg{R}})^{\dagger},\tilde{\varrho}\otimes (U \otimes \Idt_{\hreg{R}}) \ket{\varphi}\bra{\varphi} (U^{\dagger} \otimes \Idt_{\hreg{R}}) \right)\\
	\begin{split}
		&\leq \Delta\left((T \otimes \Idt_{\hreg{B}_n \otimes \hreg{R}})\left([\ada\compose\bprot](\ket{\varphi}\bra{\varphi})\right) (T \otimes \Idt_{\hreg{B}_n \otimes \hreg{R}})^{\dagger},\tilde{\rho}\otimes (U \otimes \Idt_{\hreg{R}}) \ket{\varphi}\bra{\varphi} (U^{\dagger} \otimes \Idt_{\hreg{R}}) \right) + \Delta(\tilde{\rho}, \tilde{\varrho})\\
		&\leq 4 \sqrt{2\varepsilon} + 8 \sqrt{2\varepsilon} = 12 \sqrt{2\varepsilon}
	\end{split}
\end{multline*}
\qed
\end{proof}}{}


%

\fullv{\section{Proof of Privacy}\label{privacyproof}

\newcommand{\sbc}{\ensuremath{S_{\mbox{\sc bc}}}}
\newcommand{\xbc}{\ensuremath{X_{\mbox{\sc bc}}}}
\newcommand{\badphiin}{\ensuremath{\ket{\varphi^*_{\tt in}}}}

In the following we prove the privacy 
 $\pup{U}^{\qop{O}}=(\aprots,\bprots,n_U+1)$
against 
specious quantum adversaries and that 
for any unitary $U\in \unit{\hreg{A}_0\otimes\hreg{B}_0}$ 
represented by a quantum
circuit $C_U$ with $u$ gates in \ug.
We provide families of simulators $\simul{\ada}$ and $\simul{\adb}$
for any specious quantum adversaries $\ada$ and $\adb$ respectively.
Since the protocol has $n_U$ oracle calls, it is sufficient to provide
simulators for each of these $n_U$ steps since the final 
quantum operations (i.e., \alice{n_U+1}\ and \bob{n_U+1})
are local.
No simulator for a round occurring before 
the start of the {\em key-releasing} phase
needs to call the ideal functionality for $U$. 
The output of these simulators will be shown identical
to the adversary's view (i.e., the simulation is perfect)
even if the adversary is arbitrarily malicious. 
Only the last
simulator of each family  
needs to call to the ideal functionality for $U$. 
The last simulation produces a state that is essentially
$\sqrt{\varepsilon}$--close
to adversary's view provided it is $\varepsilon$--specious.


First, we show privacy of the {\em evaluation phase}
before addressing privacy of the {\em key-releasing phase}.
Privacy of the entire protocol will then follow.







\subsection{Privacy of the Evaluation Phase}\label{sect:cliffordprivate}

We start by showing privacy of protocol $\pup{U}^{\qop{O}}=(\aprots,\bprots,n_U+1)$
 at all steps $1\leq i \leq n_U-1$ occurring during 
the {\em evaluation  phase}
of quantum  
circuit $C_U$ implementing $U$ with $u$ gates in \ug. 
The last step of the evaluation phase is $n_U-1$ since only
one oracle call is left to complete the execution.
This phase is the easy part of the simulation since 
all transmissions  are
independent of the joint input state $\rhoin \in \dens{\hreg{A}_0\otimes\hreg{B}_0\otimes \hreg{R}}$.
The theorem below provides a perfect simulation of  any adversary's view generated
during the evaluation of any gate in $C_U$. No call to the ideal functionality for $U$
is required.

\begin{theorem}[Privacy of the Evaluation]\label{induct} 
$\pup{U}^{\qop{O}}=(\aprots,\bprots,n_U+1)$
admits simulators $\simulator{\ada}$
and $\simulator{\adb}$ 
that do not
call the ideal functionality for $U\in \unit{\hreg{A}_0 \otimes \hreg{B}_0}$ such that
for any joint input state $\rhoin \in \dens{\hreg{A}_0 \otimes\hreg{B}_0\otimes \hreg{R}}$, 
every $1\leq i \leq n_U-1$:
\begin{equation}\label{ss1}
\Delta\left(\view{\ada}{\rhoin}{i},\trace[\hreg{B}_i]{\astatei{i}{\ada}{\rhoin}} \right) = 0 
\mbox{ and }
\Delta\left(\view{\adb}{\rhoin}{i},\trace[\hreg{A}_{i}]{\astatei{i}{\adb}{\rhoin}} \right) = 0,
\end{equation}
This  holds against any adversaries \ada\ and \adb, not necessarily specious.
\end{theorem}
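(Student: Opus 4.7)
The plan is an induction on the step index $i$, exploiting the fact that during the evaluation phase of $\pup{U}^{\qop{O}}$ every message exchanged between the parties is masked by fresh quantum-one-time-pad randomness, or is a classical output of an ideal \nl\ whose marginals are uniformly random bits. The marginal distribution of every incoming message, conditioned on the receiver's current view, is therefore a fixed input-independent state, which allows the simulator to fabricate the interaction without calling the ideal functionality for $U$.

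For the adversary \ada\ I would construct $\simulator{\ada} = \langle(\simul{1},\ldots,\simul{n_U}), q\rangle$ with $q_i = 0$ for every $i \leq n_U - 1$. On input the reduced state $\trace[\hreg{B}_0]{\rhoin}$, the simulator $\simul{i}$ runs \ada's first $i$ quantum operations in sequence; whenever the protocol prescribes that \ada\ receives a message or an oracle output, $\simul{i}$ supplies a \emph{canonical substitute} in its place --- the maximally mixed state $\id_d$ on any incoming quantum register of dimension $d$, and a uniformly random classical bit for any bit delivered by \nl. A symmetric construction gives $\simulator{\adb}$.

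The proof that these substitutes faithfully reproduce the real marginal distribution proceeds by a case analysis over the message types that appear during the evaluation phase. In the nonlocal \cnot\ sub-protocol the holder of the control prepares the fixed state $\ket{\xi} = (\Idt \otimes \cnot \otimes \Idt)(\ket{\Psi_{0,0}} \otimes \ket{\Psi_{0,0}})$ and transmits its two rightmost registers; those registers, traced over the sender's remaining registers, are maximally mixed independently of both inputs, and the subsequent Bell-measurement outcomes $(a_x,a_z,b_x,b_z)$ are announced as uniformly distributed classical bits. In the \rgate\ sub-protocol the qubit \aprots\ sends to \bprots\ has been pre-masked with the fresh random pad $X^{r}Z^{r'}$ unknown to \bprots, and the qubit \bprots\ returns is masked with $X^{s}Z^{s'}$; both are maximally mixed from the receiver's marginal perspective. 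Finally, by the definition of the \nl\ functionality, each party's share of the AND output is a uniformly random bit when conditioned on the other party's view.

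Induction on $i$ then gives the two identities in (\ref{ss1}). The base case $i=1$ is immediate because only the local initial operations have occurred and no message has yet been exchanged. For the inductive step, assuming $\trace[\hreg{B}_i]{\astatei{i}{\ada}{\rhoin}} = \view{\ada}{\rhoin}{i}$, the real and simulated processes differ at step $i+1$ only on the register newly delivered to \ada; by the case analysis above, this register carries identical marginal law in the two experiments, so the invariant propagates with trace distance zero. The main subtlety is bookkeeping the randomness: one must verify that every pad bit introduced during a gate procedure is independent of all prior communication visible to \ada, which follows because the pads are sampled locally at the moment the gate is evaluated and the older pads that may already correlate with \ada's view are superseded by the new one-time-pad layer. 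The argument uses no hypothesis of speciousness, so it applies uniformly to any \ada\ and any \adb.
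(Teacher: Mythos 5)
Your overall strategy---a perfect, input-independent simulation of every message of the evaluation phase with $q_i=0$ and no call to the ideal $U$, established by induction over the transcript---is exactly the paper's. Your treatment of the \rgate{} messages (each transmitted qubit is masked by a fresh pad $X^{r}Z^{r'}$ resp.\ $X^{s}Z^{s'}$ unknown to the receiver, hence maximally mixed and decoupled from everything the receiver holds) and of the \nl{} output (a fresh uniform bit, by the additive-sharing property of the AND-box) is correct and matches the paper.

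However, your canonical substitute for the nonlocal \cnot{} message is wrong, and the error is not cosmetic: it falsifies the equality in (\ref{ss1}). The two registers of $\ket{\xi}=(\Idt_2\otimes\cnot\otimes\Idt_2)(\ket{\Psi_{0,0}}\otimes\ket{\Psi_{0,0}})$ that the control holder transmits are \emph{not} maximally mixed after tracing out the sender's half $\hreg{W}$. A direct computation gives
\[
\trace[\hreg{W}]{\proj{\xi}}=\tfrac{1}{2}\bigl(\proj{\Psi_{0,0}}+\proj{\Psi_{1,0}}\bigr)=\tfrac{1}{2}\bigl(\proj{++}+\proj{--}\bigr)\neq \id_4\enspace ,
\]
i.e.\ the two received qubits are perfectly correlated in the Hadamard basis, whereas under your substitute $\id_4$ they are uncorrelated. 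The trace distance between these two states is $1/2$, so an adversary who measures both received qubits in the Hadamard basis and checks for agreement distinguishes your simulation from the real interaction with constant advantage; the claimed distance $0$ in (\ref{ss1}) would fail at that step. This also shows that matching single-register marginals is not enough---the joint state of the delivered registers matters. The fix is what the paper does: the simulator itself prepares the fixed, input-independent state $\ket{\xi}$ in a private working register and forwards the appropriate two registers to \ada, discarding its own half only when the view is extracted. A smaller inaccuracy: the Bell outcomes $(a_x,a_z,b_x,b_z)$ are never announced---each party updates its key shares from its own outcomes locally---so there is nothing to simulate there.
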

\begin{proof}[Sketch]
Without loss of generality, we prove privacy
only against adversary $\ada$. The protocol
being symmetric,
privacy against $\adb$ follows.
The proof proceeds by induction on the current 
gate $G_j$ in the circuit $C_U:= G_{u} G_{u-1}  \ldots G_1$
evaluated in $\pup{U}^{\qop{O}}$.
We provide simulators $\simulg{j}$ producing
\ada's view after the evaluation of $G_j$. 
During the execution of $G_j$, \ada\ may receive
at most one message from \bprot\  and may call
the ideal \nl\ at most once (when $G_j=\rgate$).
It means that during the evaluation of $G_j$, 
no, one, or two simulations will be  needed since it consists
in no, one or two oracle calls out of which at most one
is non-trivial.
Let $s[j]\in \{0,1,2\}$ for $1\leq j\leq u$  be the number of steps
to be simulated 
during the evaluation of $G_j$. Let $i[0]:=0$ and
$i[j] = s[j]+i[j-1]$ for $1\leq j \leq u$ be all steps to simulate for the evaluation
of $G_j G_{j-1}\cdots G_{1}$.  In order to fulfill privacy as defined in Definition~\ref{privacy},
each simulator
\simulg{j}\  must be converted  into 
$\simul{i[j-1]+1}\in \simulator{\ada}$ if $i[j]=i[j-1]+1$ (i.e., $G_j$ requires
only one step to be simulated and this step is a message from \bprots) 
and into $\{\simul{i[j-1]+1}, \simul{i[j-1]+2}\}\subseteq \simulator{\ada}$
if $i[j]=i[j-1]+2$ (i.e., $G_j=\rgate$ and therefore requires two simulation steps: one message
from \bprots\ and one call to \nl). This conversion is performed the following way.
We let $\simul{i[j-1]+1}$
run
\simulg{j}\  until the $i[j-1]+1$--th step is reached. This step
is necessarily a message transmitted from \bprots\ to
\ada. 
If $i[j]=i[j-1]+2$ then 
$\simul{i[j]} := \simulg{j}$  which corresponds to the simulation
of \ada's  view  after the call to \nl. We now provide
$\simulg{j}$ for each gate $G_j$ in $C_U$. Notice
that we do not explicitly simulates a communication from
\ada\ to \bprots\ since simulating this step can be performed
from the simulation of the previous step together with \ada\
quantum operations at the current step.


%


$\simulg{1}$ works as follows. It  runs \ada\ on her part of the joint input state 
$\rhoin\in \dens{\hreg{A}_0\otimes \hreg{B}_0\otimes \hreg{R}}$ until the first message from the other
party is expected. If  gate $G_1$ does not involve any transmission from \bprots\
then the simulation of gate $G_1$ is over (i.e., $G_1$ is in $\{X,Y,Z, \hgate,\pgate\}$
or a \cnot\ applied on local wires). Otherwise,
it prepares the first message sent from \bprots. Of course, this message
depends on $G_1$.  We have the following three cases to address:
\begin{description}
\item[\cnot-gate:] \ada\ holds the {\em target wire} while \bprots\ 
holds the {\em control wire}. This case is the only one where \ada\ receives
something from \bprots\ during the computation of a \cnot-gate. $\simulg{1}$ then works
the obvious way in order to generate the first transmission from \bprots\ to \ada:
\begin{itemize}
\item $\simulg{1}$ prepares
 $\ket{\xi} = (\Idt_2 \tensor \cnot \tensor \Idt_2) \ket{\Psi_{0,0}}^{\hreg{W}}
\tensor \ket{\Psi_{0,0}}^{\hreg{A}^{\qop{O}}_1}$ where \hreg{W}\ is a new working register for the simulator.
\simulg{1} then sends register $\hreg{A}^{\qop{O}}_1$  to \ada. This simulates
\bprots's transmission to \ada.
\item The transmission prepared by $\simulg{1}$ 
      is in the same state as when \ada\ interacts with \bprots\
      upon any input state \rhoin.
       It follows that 
      the output of  $\simulg{1}$ satisfies:
  \[  \Delta\left(\view{\ada}{\rhoin}{1},\trace[\hreg{B}_1]{\astatei{1}{\ada}{\rhoin}} \right) = 0 ,
  \]  
      for all input states $\rhoin \in \dens{\hreg{A}_0\otimes\hreg{B}_0\otimes \hreg{R}}$.
\end{itemize}
\item[\rgate-gate:] \ada\ holds the register upon which
       the gate is executed. In this case, $\simulg{1}$ provides \ada\ with
        \bprots's  as follows:
\begin{itemize}
\item $\simulg{1}$ prepares and sends $\Idt_2\in \dens{\hreg{A}^{\qop{O}}_1}$ to \ada. 
%
\item \simulg{1}\ then call the ideal \nl\ with a random input bit. Notice that \ada\ cannot
distinguish this behavior from  \bprots's since an AND-box is non-signaling and can therefore
not be used by one party to extract any information about the other party's input state (i.e., the output
of one party can be generated before the input of the other party has been provided to the box).
\item As in the case where \ada\ interacts with \bprots, the first
       message received  from $\simulg{1}$ is in state $\Idt_2$ and \ada's output 
       of \nl\ is independent of \bprot's view. It follows
       that,
  \[  \Delta\left(\view{\ada}{\rhoin}{1},\trace[\hreg{B}_1]{\astatei{1}{\ada}{\rhoin}} \right) = 0,
  \]  
       for all input states $\rhoin\in\dens{\hreg{A}_0\otimes\hreg{B}_0\otimes\hreg{R}}$. 
\end{itemize}
\item[\rgate-gate:] \bprots\  holds the register upon which
       the gate is executed.  $\simulg{1}$ provides \ada\ with \bprots's first transmission
       the same way than in the previous case:
 \begin{itemize}
       \item \simulg{1}\ prepares and  sends 
             $\Idt_2\in \dens{\hreg{A}^{\qop{O}}_1}$ to  \ada.  This simulates \bprots\ transmission
             to \ada.
       \item \simulg{1}\ provides the \nl\ with a fresh random bit as for in the previous case.
       \item As in the case where \ada\ interacts with \bprots, the first
            message received  from $\simulg{1}$ is in state $\Idt_2$ and \ada's output 
             of \nl\ is independent of \bprots's view. It follows
             that,
             \[  \Delta\left(\view{\ada}{\rhoin}{1},\trace[\hreg{B}_1]{\astatei{1}{\ada}{\rhoin}} \right) = 0,
               \]  
               for all input states $\rhoin\in\dens{\hreg{A}_0\otimes\hreg{B}_0\otimes\hreg{R}}$.      
  \end{itemize}  
\end{description} 
 Since the three cases above exhaust all possibilities for a transmission
 from \bprot\ to \ada, $\simulg{1}$ satisfies
(\ref{ss1}).

Now, suppose by the induction hypothesis that $\simulg{j-1}$ simulates perfectly up to and including the
$j-1$-th step of the adversary \ada\  for $2\leq j\leq n_U-1$. 
We now show how to construct $\simulg{j}$ simulating  perfectly
up to an including gate $G_j$.
We construct $\simulg{j}$ the obvious
way.  $\simulg{j}$ runs
$\simulg{j-1}$ and then simulates  \ada\ until \bprots's next transmission occurring during
the evaluation of $G_j$. If no such message occurs during the evaluation of $G_j$
then $\simulg{j}$ is done. Otherwise, 
the same  three cases described above have to be considered.
$\simulg{j}$ provides \ada\ with  \bprots's  transmission exactly
the same way than for $\simulg{1}$.  The result follows easily.
%
\qed
\end{proof}



\subsection{Privacy of the Key-Releasing Phase}
In order to conclude the privacyof $\pup{U}^{\qop{O}}$, families $\simul{\ada}$
and $\simul{\adb}$  
need one more simulator each: $\simul{n_U}\in  \simulator{\ada}$
and $\simul{n_U}' \in  \simulator{\adb}$  corresponding
to the simulation of the key-releasing phase. 
This time, these simulators 
need to query the ideal functionality for $U$ and also
need the adversary to be specious.  
We show that privacy of the key-releasing phase 
follows from the ``Rushing Lemma'' (Lemma~\ref{rushing}).
The lemma tells us that as soon as the output is available 
to the honest player, it is too late for specious adversaries to break
privacy. This is the role of the ideal \swap\ to make sure that 
before the adversary  gets the output of the computation, the information
needed by the honest player to recover its own output has been given away
by the adversary. 

It should be mentioned that we're not explicitly simulating the 
final state of the adversary since simulating the
\swap\ allows also to get \ada's final state  
by simply adding \ada's last quantum operation to the simulated
view. We therefore set step $n_U$ in $\pup{U}^{\qop{O}}$
to be the step reached after the call to \swap. This abuses the notation
a  bit since after \swap, \ada\ and \bprots\ must each apply 
a final quantum operation with no more oracle call. We'll denote
by $\ada_{n_U+1}$ and $\bprots_{n_U+1}$ these last operations
allowing to reconstruct the output of the computation (no comunication).

\begin{lemma}\label{release}
For any $\varepsilon$-specious quantum adversaries \ada\
and \adb\ 
against $\pup{U}^{\qop{O}}=(\aprots,\bprots,n_U+1)$, there 
exist  simulators $\simul{n_U}\in \simulator{\ada}$ and
$\simul{n_U}'\in \simulator{\adb}$ such that for all 
$\rhoin \in \dens{\hreg{A}_0\otimes \hreg{B}_0\otimes\hreg{R}}$,
\begin{equation}\label{ss2}
\begin{split}
\Delta\left(\view{\ada}{\rhoin}{n_U},\trace[\hreg{B}_{n_U}]{\astatei{n_U}{\ada}{\rhoin}} \right) \leq  
24\sqrt{2\varepsilon} 
\mbox{\,\,\, and \hspace{2in}} & \\
\Delta\left(\view{\adb}{\rhoin}{n_U},\trace[\hreg{A}_{n_U}]{\astatei{n_U}{\adb}{\rhoin}} \right) \leq
24\sqrt{2\varepsilon}\, .
\end{split}
\end{equation}
Simulators $\simul{n_U}$ and $\simul{n_U}'$ call the ideal functionality for $U$
and imply the simulations of step $n_U+1$ as well.
\end{lemma}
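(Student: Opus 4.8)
By the symmetry of the key-releasing phase under exchanging $\aprots$ and $\bprots$, the plan is to build only $\simul{n_U}\in\simulator{\ada}$ and bound its error; $\simul{n_U}'\in\simulator{\adb}$ is then obtained verbatim with the roles reversed. Fix step $n_U$ to be the stage reached right after the ideal \swap\ and work with purified honest and adversarial strategies: write $\ada$'s run up to and including \swap\ as an isometry $\ada_{\mathsf{sw}}\in\lin{\hreg{A}_0}{\tilde{\hreg{A}}_{n_U}}$, $\bprots$'s as $\bprots_{\mathsf{sw}}\in\lin{\hreg{B}_0}{\hreg{B}_{n_U}}$, the two final local maps as isometries $\tilde{A}_{n_U}\in\lin{\tilde{\hreg{A}}_{n_U}}{\tilde{\hreg{A}}_{n_U+1}\otimes\hreg{Z}}$ and $B_{n_U}\in\lin{\hreg{B}_{n_U}}{\hreg{B}_{n_U+1}\otimes\hreg{W}}$, and let $T\in\lin{\tilde{\hreg{A}}_{n_U+1}\otimes\hreg{Z}}{\hreg{A}_{n_U+1}\otimes\widehat{\hreg{A}}}$ and $\tilde{\varrho}\in\dens{\widehat{\hreg{A}}}$ be the isometry and fixed state furnished by the Rushing Lemma (Lemma~\ref{rushing}) applied to the correct protocol $\pup{U}^{\qop{O}}$ against $\ada$, with $T$ chosen so that $\widehat{\hreg{A}}$ retains all of $\ada$'s non-output garbage (including $\hreg{Z}$). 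The one property of the key-releasing phase I will use is that its \swap\ exchanges only one-time-pad key bits which, for every joint input, are either uniform (Bell-measurement outcomes) or hard-wired to $0$, hence independent of $\rhoin$ and of each other; so $\ada$'s state just after \swap\ depends on the joint input only through the encrypted wires it holds, whose plaintext is $\aprots$'s share of $(U\otimes\Idt_{\hreg{R}})\cdot\rhoin$ by correctness.

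The simulator $\simul{n_U}$ (which sets $q_{n_U}=1$, so it does call the ideal $U$) will run in four steps. First, internally prepare a fixed dummy $\ket{\phi^*}\in\hreg{A}_0\otimes\hreg{B}_0$ and form $\sigma(\phi^*):=[\ada_{\mathsf{sw}}\compose\bprots_{\mathsf{sw}}](\proj{\phi^*})$, the dummy state just after \swap. Second, apply $\tilde{A}_{n_U}$ and $B_{n_U}$ (the two final local maps) followed by the transcript isometry $T$ on $\ada$'s side, and keep only $\tau:=\trace[\hreg{A}_{n_U+1}\hreg{B}_{n_U+1}\hreg{W}]{(T\tilde{A}_{n_U}\otimes B_{n_U})\cdot\sigma(\phi^*)}\in\dens{\widehat{\hreg{A}}}$. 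Third, tensor $\tau$ with the register actually handed to $\simul{n_U}$ --- which, in the definition of $\view{\ada}{\rhoin}{n_U}$, carries $\aprots$'s share of $(U\otimes\Idt_{\hreg{R}})\cdot\rhoin$, the honest $\hreg{B}$-share and the reference $\hreg{R}$ being left untouched --- to obtain $\sigma'(\rhoin)$. Fourth, apply $(T\tilde{A}_{n_U})^{\dagger}$ to the registers $\hreg{A}_{n_U+1}\otimes\widehat{\hreg{A}}$, rewinding past the transcript isometry $T$ and $\ada$'s final local map, and output the resulting state on $\tilde{\hreg{A}}_{n_U}$. Because step $n_U+1$ is only the local isometry $\tilde{A}_{n_U}$, post-composing $\simul{n_U}$ with $\tilde{A}_{n_U}$ yields a simulator for step $n_U+1$ of the same error (isometries do not increase the trace distance), so a single family of simulators suffices.

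For the error bound, the Rushing Lemma, applied once with the dummy input and once with $\rhoin$ (keeping $T$ and $\tilde{\varrho}$ fixed), says that applying $T$ after $\tilde{A}_{n_U}$ decouples $\ada$'s residual register $\widehat{\hreg{A}}$: the completed state is $12\sqrt{2\varepsilon}$-close to $\tilde{\varrho}$ tensored with $U$ acting on the output registers (and $\Idt_{\hreg{R}}$ on the reference, in the $\rhoin$ case), with the \emph{same} $\tilde{\varrho}$ in both runs. The dummy version gives $\Delta(\tau,\tilde{\varrho})\le 12\sqrt{2\varepsilon}$, so the hot-swap of the third step --- which replaces $U\cdot\proj{\phi^*}$ by the genuine $(U\otimes\Idt_{\hreg{R}})\cdot\rhoin$ on the output and reference registers while leaving $\tau$ alone --- keeps $\sigma'(\rhoin)$ within $2\cdot12\sqrt{2\varepsilon}$ of $\tilde{\varrho}\otimes(U\otimes\Idt_{\hreg{R}})\cdot\rhoin$; the $\rhoin$ version and the triangle inequality then put $\sigma'(\rhoin)$ within $24\sqrt{2\varepsilon}$ of the genuine completed state $\trace[\hreg{W}]{(T\tilde{A}_{n_U}\otimes B_{n_U})\cdot\astatei{n_U}{\ada}{\rhoin}}$. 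Applying $(T\tilde{A}_{n_U})^{\dagger}$ to both --- it exactly inverts $T\tilde{A}_{n_U}$ on the genuine completed state, and being completely positive and trace-non-increasing cannot increase $\Delta$ --- and then tracing out $\hreg{B}_{n_U+1}$, which (as $B_{n_U}$ is an isometry) coincides with tracing out $\hreg{B}_{n_U}$, turns this into $\Delta(\view{\ada}{\rhoin}{n_U},\trace[\hreg{B}_{n_U}]{\astatei{n_U}{\ada}{\rhoin}})\le 24\sqrt{2\varepsilon}$, which is the claim.

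The hard part will be the fourth step, the rewind: $(T\tilde{A}_{n_U})^{\dagger}$ is not a channel, so the argument must pass through monotonicity of $\Delta$ under trace-non-increasing completely positive maps, and one must check that the Rushing Lemma has already placed $\sigma'(\rhoin)$ within $O(\sqrt{\varepsilon})$ of the image of $T\tilde{A}_{n_U}$ --- namely of the genuine completed state --- so that the inversion is lossless up to that error. A secondary nuisance is the bookkeeping of the purification garbage $\hreg{Z}$ and $\hreg{W}$ of the two final local operations: one must confirm that choosing $T$ to absorb $\hreg{Z}$ into $\widehat{\hreg{A}}$ and discarding $\hreg{W}$ is consistent with the way Lemma~\ref{rushing} factors $\ada$'s final state, and this is precisely where the input-independence of the swapped key bits is needed --- it guarantees that running the protocol on the dummy input reproduces the correct marginal on the \swap\ output registers that $\ada$ carries at step $n_U$.
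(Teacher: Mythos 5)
Your construction and analysis coincide with the paper's own proof: the same dummy-input run through \swap, the same two applications of the Rushing Lemma with a common $T$ and $\tilde{\varrho}$, the same splice of the ideal output $(U\otimes\Idt_{\hreg{R}})\cdot\rhoin$ in place of the dummy output, the same rewind by $(T\tilde{A}_{n_U})^{\dagger}$ justified via monotonicity of $\Delta$ under trace-non-increasing completely positive maps, and privacy against \adb\ by symmetry of the key-releasing phase. The only blemish is the intermediate ``$2\cdot 12\sqrt{2\varepsilon}$'': since tensoring with the common state $(U\otimes\Idt_{\hreg{R}})\cdot\rhoin$ preserves trace distance, $\Delta(\tau,\tilde{\varrho})\leq 12\sqrt{2\varepsilon}$ already places $\sigma'(\rhoin)$ within $12\sqrt{2\varepsilon}$ of $\tilde{\varrho}\otimes(U\otimes\Idt_{\hreg{R}})\cdot\rhoin$, and it is this bound (not $24\sqrt{2\varepsilon}$) that the final triangle inequality needs in order to land on $24\sqrt{2\varepsilon}$ rather than $36\sqrt{2\varepsilon}$.
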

\begin{proof}[sketch]
Once again, we only prove privacy against adversary \ada. 
The privacy against \adb\ follows directly since the key-releasing phase
is completely symmetric.  The idea behind the proof is to run 
\ada\ and \bprots\ upon a dummy joint input state until the end
of the protocol. Since the adversary is specious, it can re-produce
the honest state at the end. The Rushing Lemma tells us that
at this point, the output of the computation is essentially in tensor
product with all the other registers. Moreover, the state of all other
registers is independent of the input state upon which the protocol
is executed. The {\em dummy output} can then be replaced
by the output of the ideal functionality for $U$ before \ada\
goes back to the stage reached just after \swap. 


More formally,
we define a simulator $\simul{n_U}\in \simulator{\ada}$ producing 
\ada's view just after the call to \swap. 
Let $\ada_{\swap}\in \lin{\hreg{A}_0}{\tilde{\hreg{A}}_{n_U}}$ and 
$\bprots_{\swap}\in \lin{\hreg{B}_0}{\tilde{\hreg{B}}_{n_U}}$ be the 
quantum operations run by \ada\ and \bprots\  respectively until \swap\ is executed.
Notice that at this point, \ada's and \bprots's registers
do not  have any further oracle registers  
since no more communication or oracle call will take place.
Let $\tilde{A}_{n_U}\in\lin{\tilde{\hreg{A}}_{n_U}}{\tilde{\hreg{A}}_{n_U+1}\otimes \hreg{Z}}$ be the isometry 
implementing \ada's  last quantum operation  taking
place after the call to \swap\ (and producing the outcome) and
let   ${B}_{n_U}\in\lin{{\hreg{B}}_{n_U}}{{\hreg{B}}_{n_U+1}\otimes \hreg{W}}$ be the isometry 
implementing \bprots's  last quantum operation.  Finally,
let $T\in \lin{\tilde{\hreg{A}}_{n_U+1}}{\hreg{A}_{n_U+1}\otimes\hat{\hreg{A}}}$ be the isometry
implementing $\trans{n_U+1}$ as defined in Lemma~\ref{rushing}
(i.e., the transcript produced at the very end of the protocol). 
As usual , let $\rhoin \in \dens{\hreg{A}_0\otimes\hreg{B}_0\otimes\hreg{R}}$ 
be the joint input state of $\pup{U}^{\qop{O}}$. 
The simulator $\simul{n_U}$  performs the following operations:

\begin{enumerate}
\item $\simul{n_U}$ generates 
the quantum state $\sigma(\phi^*)=[\ada_{\swap}\compose\bprots_{\swap}](\proj{\phi^*})
\in \dens{\tilde{\hreg{A}}_{n_U}\otimes \hreg{B}_{n_U}}$
 implementing 
\ada\ interacting with \bprots\ until \swap\  is applied. The execution
is performed  upon a predetermined (dummy) 
arbitrary input state
$\ket{\phi^*} \in \hreg{A}_0\otimes\hreg{B}_0$. \label{s1}

\item $\simul{n_U}$ sets $\sigma'(\phi^*) = (T \tilde{A}_{n_U}\otimes {B}_{n_U})\cdot
\sigma(\phi^*) \in \dens{\hreg{A}_{n_U+1}\otimes \hreg{B}_{n_U+1}
\otimes \hreg{Z}\otimes\hat{\hreg{A}}\otimes\hreg{W}}$.\label{s2}

\item \simul{n_U}\ replaces register $\hreg{A}_{n_U+1}\approx \hreg{A}_0$ 
by \aprots's output
of the ideal functionality for $U$ evaluated upon \rhoin.
That is, \simul{n_U}\ generates the state
$\sigma'(\rhoin)= (U\otimes \Idt_{\hreg{R}}) \rhoin (U\otimes\Idt_{\hreg{R}})^{\dagger} 
\otimes \trace[\hreg{A}_{n_U+1}\hreg{B}_{n_U+1}]{\sigma'(\phi^*)}
\in \dens{\hreg{A}_{n_U+1}\otimes \hreg{B}_{n_U+1}\otimes\hreg{R}
\otimes \hreg{Z}\otimes\hat{\hreg{A}}\otimes\hreg{W}}$.
\label{s3}

\item \simul{n_U}\ finally sets $\view{\ada}{\rhoin}{n_U}=\trace[\hreg{B}_{n_U}\hreg{W}]
{(T \tilde{A}_{n_U}\otimes \Idt_{\hreg{B}_{n_U+1}\hreg{R}})^{\dagger}\cdot \sigma'(\rhoin)}
\in \dens{\tilde{\hreg{A}}_{n_U}\otimes\hreg{R}}$.
 \label{s4}
\end{enumerate}
Notice that execution of  the ideal \swap\ ensures that 
the keys swapped are independent of each other
and of the joint input state $\rhoin$. This is because
for any input state, all these keys are uniformly distributed
bits if they are outcomes of Bell measurements and otherwise
are set to $0$.  By the Rushing Lemma~\ref{rushing} and the fact
that \ada\ is $\varepsilon$--specious,
we have:
\[ 
\begin{split}
 \Delta\left( \trace[\hreg{Z}\hat{\hreg{A}}\hreg{W}]{\sigma'(\phi^*)}, \tilde{\varrho}\otimes U\proj{\phi^*}U^\dagger\right) \leq 
12\sqrt{2\varepsilon} \mbox{ and \hspace{2in}}  &    \\ 
 \Delta\left( (\trans{n_U+1}\otimes\Idt_{\linop{\hreg{B}_{n_U+1}}}) \left([\ada\compose \bprots](\rhoin)\right), \tilde{\varrho}\otimes U\rhoin U^\dagger\right)  \leq 
12\sqrt{2\varepsilon}.
\end{split}
\]
It follows using the triangle inequality  that,
\begin{equation}\label{wow}
 \Delta\left( (\trans{n_U+1}\otimes\Idt_{\linop{\hreg{B}_{n_U+1}}}) \left([\ada\compose \bprots](\rhoin)\right),   
\trace[\hreg{Z}\hat{\hreg{A}}\hreg{W}]{\sigma'(\rhoin)}
\right) \leq 24 \sqrt{2\varepsilon}.
\end{equation}
Using the fact that isometries cannot increase the trace-norm distance and
that $(T\tilde{A}_{n_U})^{\dagger}$ allows \ada\  to go back from the end of the protocol
to the step reached after \swap, 
we get from (\ref{wow}) that:
\begin{align*}
\Delta\left(\view{\ada}{\rhoin}{n_U},\trace[\hreg{B}_{n_U}]{\astatei{n_U}{\ada}{\rhoin}} \right)
&= \Delta\left(
(\trans{n_U+1}\otimes\Idt_{\linop{\hreg{B}_{n_U+1}}}) \left([\ada\compose \bprots](\rhoin)\right),   
\trace[\hreg{Z}\hat{\hreg{A}}\hreg{W}]{\sigma'(\rhoin)}
\right) \\
&\leq 24\sqrt{2\varepsilon}.
\end{align*}
The statement follows.
\qed
\end{proof}

Theorem~\ref{induct} and Lemma~\ref{release} imply the
privacy of $\pup{U}^{\qop{O}}$ against specious
adversaries and that for any $U\in \unit{\hreg{A}_0\otimes\hreg{B}_0}$ as
stated in our main Theorem~\ref{final}.
When $U$ is in the Clifford group, one call to 
an ideal \swap\ 
is sufficient to ensure privacy. 
Unitaries in the
Clifford group are, to some extent, the {\em easy} ones
since although an ideal functionality is required for privacy,
that functionality is unitary and belongs to the Clifford group
rather than a classical cryptographic primitive. 
If $U$ is not in the Clifford
group however,  one additional  call to a classical \nl\ is required for each
\rgate-gate. This is reminiscent to classical circuits with {\sc and}
gates where oblivious transfer is required to be able to evaluate them
privately. In order to implement a classical \nl, commitments and 
quantum communication are sufficient and necessary \cite{DFLSS09,Lo97}.
}{}
\end{appendix}




\end{document}